\DeclarePairedDelimiter\ceil{\lceil}{\rceil}
\DeclarePairedDelimiter\floor{\lfloor}{\rfloor}
\newcommand{\hdist}{d_H}
\newcommand{\Oh}{\mathcal{O}}
\newcommand{\cost}{{\sf cost}}
\DeclareMathOperator{\operatorClassNP}{{\sf NP}}
\newcommand{\classNP}{\ensuremath{\operatorClassNP}}
\DeclareMathOperator{\operatorClassCoNP}{{\sf coNP}}
\newcommand{\classCoNP}{\ensuremath{\operatorClassCoNP}}
\DeclareMathOperator{\operatorClassFPT}{{\sf FPT}\xspace}
\newcommand{\classFPT}{\ensuremath{\operatorClassFPT}\xspace}
\DeclareMathOperator{\operatorClassW}{{\sf W}}
\newcommand{\classW}[1]{\ensuremath{\operatorClassW[#1]}}
\newcommand{\bfA}{\mathbf{A}}
\newcommand{\bfa}{\mathbf{a}} 
\newcommand{\bfb}{\mathbf{b}} 
\newcommand{\bfc}{\mathbf{c}}
\newcommand{\bfS}{\mathbf{S}} 
\newcommand{\bfs}{\mathbf{s}}
\newtheorem{theorem}{Theorem}
\newtheorem{lemma}{Lemma}
\newtheorem{claim}{Claim}[section]
\newtheorem{corollary}{Corollary}
\newtheorem{observation}{Observation}
\newtheorem{proposition}{Proposition}
\newcommand{\pname}{\textsc}
\newcommand{\ProblemFormat}[1]{\pname{#1}}
\newcommand{\ProblemIndex}[1]{\index{problem!\ProblemFormat{#1}}}
\newcommand{\ProblemName}[1]{\ProblemFormat{#1}\ProblemIndex{#1}{}\xspace}
 \newcommand{\probClust}{\ProblemName{Categorical Clustering}}
\newcommand{\probBClust}{\ProblemName{Balanced Clustering}}
\newcommand{\probFBClust}{\ProblemName{Factor-Balanced Clustering}}
\newcommand{\probCClust}{\ProblemName{Capacitated Clustering}}
\newcommand{\probEClust}{\ProblemName{Equal Clustering}}
\newcommand{\probPM}{\ProblemName{Minimum Weight Perfect Matching}}
\newcommand{\probMC}{\ProblemName{Max-Cut}}
\newlength{\RoundedBoxWidth}
\newsavebox{\GrayRoundedBox}
\newenvironment{GrayBox}[1]%
   {\setlength{\RoundedBoxWidth}{.93\textwidth}
    \def\boxheading{#1}
    \begin{lrbox}{\GrayRoundedBox}
       \begin{minipage}{\RoundedBoxWidth}}%
   {   \end{minipage}
    \end{lrbox}
    \begin{center}
    \begin{tikzpicture}%
       \node(Text)[draw=black!20,fill=white,rounded corners,%
             inner sep=2ex,text width=\RoundedBoxWidth]%
             {\usebox{\GrayRoundedBox}};
        \coordinate(x) at (current bounding box.north west);
        \node [draw=white,rectangle,inner sep=3pt,anchor=north west,fill=white] 
        at ($(x)+(6pt,.75em)$) {\boxheading};
    \end{tikzpicture}
    \end{center}}     
\newenvironment{defproblemx}[2][]{\noindent\ignorespaces%
                                \FrameSep=6pt%
                                \parindent=0pt%
                \vspace*{-1.5em}
                \ifthenelse{\isempty{#1}}{%
                  \begin{GrayBox}{\textsc{#2}}%
                }{%
                  \begin{GrayBox}{\textsc{#2} parameterized by~{#1}}%
                }
                \begin{tabular*}{\textwidth}{@{\hspace{.1em}} >{\itshape} p{1.8cm} p{0.8\textwidth} @{}}%
            }{
                \end{tabular*}%
                \end{GrayBox}%
                \ignorespacesafterend
            }
\newcommand{\defproblema}[3]{
  \begin{defproblemx}{#1}
    Input:  & #2 \\
    Task: & #3
  \end{defproblemx}
}%
\begin{document}

\title{Parameterized Complexity of Categorical Clustering with Size Constraints\thanks{A preliminary version of the paper is accepted for WADS 2021. The research leading to these results have  been supported by the Research Council of Norway via the project ``MULTIVAL" (grant no. 263317) and the European Research Council (ERC) via grant LOPPRE, reference 819416.}
}

\author{
Fedor V. Fomin\thanks{
Department of Informatics, University of Bergen, Norway.} \addtocounter{footnote}{-1}
\and
Petr A. Golovach\footnotemark{} \addtocounter{footnote}{-1}
\and 
Nidhi Purohit\footnotemark{} 
}

\date{}

\maketitle

\begin{abstract}
In the \probClust problem, we are given a set of vectors (matrix) $\bfA=\{\bfa_1,\ldots,\bfa_n\}$ over $\Sigma^m$, where $\Sigma$ is a finite alphabet, and integers $k$ and $B$.
The task is to partition $\bfA$ into $k$ clusters such that the median objective of the clustering in the Hamming norm is at most $B$. That is, we seek a  partition $\{I_1,\ldots,I_{k}\}$ of $\{1,\ldots,n\}$ and vectors $\bfc_1,\ldots,\bfc_{k}\in\Sigma^m$ such that 
$$\sum_{i=1}^{k}\sum_{j\in I_i}\hdist(\bfc_i,\bfa_j)\leq B,$$
where $\hdist(\bfa,\bfb)$ is the Hamming distance between vectors $\bfa$ and $\bfb$.
Fomin, Golovach, and Panolan [ICALP 2018] proved that the problem is fixed-parameter tractable (for binary case $\Sigma=\{0,1\}$) by  giving an algorithm that solves the problem in time $2^{\Oh(B\log B)}\cdot (mn)^{\Oh(1)}$. 

We extend this algorithmic result  to a popular capacitated clustering model, where in addition the sizes of the clusters should satisfy certain constraints. More precisely, 
in  \probCClust, in addition,  we are given two non-negative integers $p$ and $q$, and seek a   clustering with  $p\leq |I_i|\leq q$ for all $i\in\{1,\ldots,k\}$.  
Our main theorem is that  
 \probCClust is solvable in time $2^{\Oh(B\log B)}|\Sigma|^B\cdot (mn)^{\Oh(1)}$. 
 The theorem not only   extends the previous algorithmic results to a significantly more general model, it also  implies algorithms for several other variants of \probClust with constraints on cluster sizes. 
 \end{abstract}

\section{Introduction}\label{sec:intro}
 While many problems in machine learning concerns numerical data, there is a large class of problems about learning from categorical data.
 The term categorical data refers to the type of data whose values are discrete and belong to a specific finite set of categories.  It could be text, some numeric values,  or even unstructured data like images.
 The most popular clustering objectives for numerical data are $k$-means and $k$-median, that 
 are based on distances in 
  the $\ell_1$ and $\ell_2$-norm. For categorical data, other meters, like Hamming distance, could be much more useful.

We study the parameterized complexity of clustering problems with constraints on the sizes of the clusters.  
The need for  clustering  with constraints comes from various application. The survey of 
Banerjee  and Ghosh  \cite{banerjee2008clustering} contains a number of examples of clustering with balancing constraints in Direct Marketing \cite{DBLP:conf/icdm/YangP03}, Category Management \cite{nielsen1992category},  Clustering of Documents \cite{baeza1999modern,lynch1999web}, and  
Energy Aware Sensor Networks  \cite{ghiasi2002optimal,gupta2003load}   among others.  However, introducing constraints on the sizes of clustering usual makes clustering tasks much more computationally challenging. 

In this paper we focus on  categorical data clustering, where data features  admit a fixed number of possible values.
We work with vectors from  $\Sigma^m$, where 
 $\Sigma$ is a finite alphabet. The most commonly used similarity measure for categorical data is the Hamming distance. For two vectors $\bfa,\bfb\in\Sigma^m$ or, equivalently, for two strings of length $m$ over $\Sigma$, we use $\hdist(\bfa,\bfb)$, to denote the \emph{Hamming distance} between $\bfa$ and $\bfb$, that is, the number of indices $i\in\{1,\ldots,m\}$ where the $i$-th elements of $\bfa$ and $\bfb$ differ. 
The task of the vanilla \probClust problem is, given an $m\times n$ matrix $\bfA$ with columns $(\bfa_1,\ldots,\bfa_n)$ over a finite alphabet $\Sigma$, a positive integer $k$, and a nonnegative integer $B$, decide
whether there is  a partition $\{I_1,\ldots,I_{k}\}$ of $\{1,\ldots,n\}$ and vectors $\bfc_1,\ldots,\bfc_{k}\in\Sigma^m$ such that 
$$\sum_{i=1}^{k}\sum_{j\in I_i}\hdist(\bfc_i,\bfa_j)\leq B.$$
The sets $I_1,\ldots,I_k$ are called \emph{clusters} and the vectors $\bfc_1,\ldots,\bfc_k$ are \emph{medians} (or  \emph{centers})\footnote{Some authors call $\bfc_1,\ldots,\bfc_k$ \emph{means} in the case of Hamming distances.}.  
We consider the generalization of the problem, where the size of each cluster should be within a given interval:

\defproblema{\probCClust}%
{An $m\times n$ matrix $\bfA$ with columns $(\bfa_1,\ldots,\bfa_n)$ over a finite alphabet $\Sigma$, a positive integer $k$, a nonnegative integer $B$, and positive integers $p$ and $q$ such that $p\leq q$.}%
{Decide whether there is  a partition $\{I_1,\ldots,I_{k}\}$ of $\{1,\ldots,n\}$, where $p\leq |I_i|\leq q$, and vectors $\bfc_1,\ldots,\bfc_{k}\in\Sigma^m$ such that 
$$\sum_{i=1}^{k}\sum_{j\in I_i}\hdist(\bfc_i,\bfa_j)\leq B.$$}

Parameterized algorithms for the vanilla variant of  \probCClust (without constraints on the sizes of clusters) were given by  Fomin, Golovach and Panolan in \cite{FominGP20}. One of the main results of their paper is the theorem providing an algorithm of running time $2^{\Oh(B\log B)}\cdot (nm)^{\Oh(1)}$ for vanilla clustering over binary field. In other words, the problem is fixed-parameter tractable (FPT) parameterized by $B$. The main question that we address in this paper is whether clustering constraints impact the problem's parameterized complexity.

\paragraph*{Our results.} 
Our main result is that \probCClust 
is fixed-parameter tractable when parameterized by the budget $B$ and the alphabet size. More precisely, we show the following:

\begin{restatable}{theorem}{main}\label{thm:FPT-k}
 \probCClust can be solved in $2^{\Oh(B\log B)}|\Sigma|^B\cdot (mn)^{\Oh(1)}$  time. 
 \end{restatable}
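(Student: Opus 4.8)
The plan is to exploit that a budget of $B$ forces almost every cluster to be trivial. Fix a hypothetical solution $(\{I_1,\dots,I_k\},\{\bfc_1,\dots,\bfc_k\})$ of cost at most $B$, and call a cluster $I_i$ \emph{clean} if $\bfa_j=\bfc_i$ for every $j\in I_i$ (so $I_i$ is a block of equal columns contributing $0$) and \emph{dirty} otherwise. Since each dirty cluster contributes at least $1$, there are at most $B$ dirty clusters, they contain at most $B$ \emph{dirty columns} in total (columns differing from their cluster's center), and all disagreements sit in at most $B$ \emph{active} $(\text{cluster},\text{coordinate})$ pairs. The structural point is that clean clusters carry no geometry: a clean cluster is just a group of $p$ to $q$ identical columns, and its only effect on the instance is to use one of the $k$ slots. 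So a yes-certificate decomposes into (i) the dirty clusters with their centers and members, and (ii) the purely combinatorial requirement that the columns left over from (i) can be packed, value by value, into monochromatic clusters of size in $[p,q]$ so that the grand total of clusters is exactly $k$.

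First I would shrink the part of the instance that can be ``dirty''. A value of multiplicity $<p$ can never be the value of a clean cluster, so every copy of it is a dirty column; hence there are at most $B$ such rare values, contributing at most $B$ columns altogether. For values of large multiplicity I would use reduction rules that peel off a full clean cluster of the value and decrement $k$; their soundness rests on the elementary fact that a positive integer $S$ is a sum of integers from $[p,q]$ precisely when $\ell p\le S\le \ell q$ for some $\ell\ge 1$. The delicate issue is that the window $[p,q]$ kills the ``move one column to another cluster'' exchange that trivializes the unconstrained case, so the peeling rule must be phrased (and, where necessary, split into a bounded number of branches on whether the peeled block has size $p$ or size $q$) so that it is correct no matter how rigidly a hypothetical solution uses the window. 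After exhaustive application I would argue that the distinct columns that can occur as a dirty column, or as the center value of a dirty cluster that owns a clean column, come from a pool whose size is bounded in terms of $B$ and $|\Sigma|$ (times a polynomial), while all remaining values only ever form clean clusters and are handled by counting alone.

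Next I would enumerate the dirty skeleton: guess the number $r\le B$ of dirty clusters; a cost budget $\beta_i\ge 1$ for each with $\sum_i\beta_i\le B$; a distribution of the at most $B$ abstract dirty-column slots among the $r$ clusters (a set partition of an at-most-$B$-element set, $2^{\Oh(B\log B)}$ choices); for each dirty cluster its center — either a value from the bounded pool, or, for an ``exotic'' cluster with no clean column, the coordinate-wise plurality of its members; and, for each of the at most $B$ budget units, the symbol the relevant center carries at the coordinate where that deviation occurs, which contributes the factor $|\Sigma|^{B}$ (the coordinates themselves and the concrete input columns filling the dirty slots are recovered in the last step, not enumerated). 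Altogether this is $2^{\Oh(B\log B)}|\Sigma|^{B}\cdot(mn)^{\Oh(1)}$ guesses, and a genuine solution agrees with at least one of them.

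Finally, for a fixed guess I would solve the residual problem in polynomial time: route the actual input columns into the $r$ dirty slots so that the guessed centers, the guessed distances, and the window $[p,q]$ are all respected, and check that the leftover columns complete to exactly $k$ clusters — for each value $\bfv$ with leftover count $r_{\bfv}$ we must choose $\ell_{\bfv}$ clean clusters with $\ell_{\bfv}p\le r_{\bfv}\le\ell_{\bfv}q$ (and $\ell_{\bfv}=0$ when $r_{\bfv}=0$), subject to $r+\sum_{\bfv}\ell_{\bfv}=k$. The routing is a bounded-size transportation / min-cost matching ($\le B$ sinks, a pool of candidate columns bounded in $B$ and $|\Sigma|$), and the completion test is a simple flow/greedy argument over the intervals $[\lceil r_{\bfv}/q\rceil,\lfloor r_{\bfv}/p\rfloor]$; both run in $(mn)^{\Oh(1)}$ time. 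The instance is a \yesinstance iff some guess admits a feasible residual problem, which yields the stated running time. I expect the middle step to be the main obstacle: essentially every normalization from the unconstrained setting is illegal once cluster sizes are constrained, so the reduction rules and the bounded-pool claim must be established by exchanges that preserve all cluster sizes and the cluster count simultaneously, and getting those exchanges right — together with the integer-representability bookkeeping — is where the bulk of the work lies.
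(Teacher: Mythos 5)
Your high-level decomposition (at most $B$ ``dirty'' clusters, at most $B$ dirty columns, all remaining clusters monochromatic) is sound, but the two steps that carry all the difficulty are asserted rather than proved, and as stated they do not work. First, the enumeration of centers is not within the claimed budget. For a dirty cluster that owns a clean column the center is some input column value, so your ``pool'' has size up to the number of distinct columns, i.e.\ polynomial; but then independently choosing a center for each of up to $B$ dirty clusters gives $(mn)^{\Oh(B)}$ combinations, which is \classXP, not the claimed $2^{\Oh(B\log B)}|\Sigma|^B\cdot(mn)^{\Oh(1)}$. Worse, for an ``exotic'' cluster with no clean member you give no mechanism at all: you propose to guess only the $\le B$ deviation symbols and to ``recover'' the deviated coordinates and the member columns in the final routing step, but the routing costs $\hdist(\bfa_j,\bfc_i)$ are undefined until the center is known, so the recovery is circular. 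Pinning down such medians is exactly where the paper invokes Marx's enumeration of subhypergraphs with bounded fractional cover number (Lemma~\ref{lem:enum-means}), a genuinely nontrivial tool; and the $N^B$ blowup over center choices is avoided there not by global guessing but by making the candidate median a coordinate of a dynamic-programming state that is optimized one composite cluster at a time (Lemma~\ref{lem:tree}).

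Second, the residual problem is not ``a min-cost matching plus an independent interval check.'' Which concrete columns you route into the dirty slots, and how many clean columns of its own center's value each dirty cluster absorbs to land in the window $[p,q]$, change the leftover multiplicities $r_{\bfv}$; this changes which numbers $\ell_{\bfv}$ of clean clusters are achievable (the intervals $[\lceil r_{\bfv}/q\rceil,\lfloor r_{\bfv}/p\rfloor]$ can become empty, e.g.\ when $0<r_{\bfv}<p$) and whether the total can be made exactly $k$. A cheapest routing chosen without regard to this can be infeasible while a slightly costlier feasible one exists, so cost minimization and the packing constraint cannot be separated; moreover several dirty clusters may draw from the same group of identical columns, and one dirty cluster may draw from several groups, so the interaction is global. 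Taming precisely this interaction is the main technical content of the paper: one first proves that some optimal solution has an acyclic intersection graph between solution clusters and initial clusters (Lemma~\ref{lem:forest}), then bounds the affected initial clusters by $2B$, highlights them by color coding, guesses the forest shape, and runs a tree dynamic program whose states record how many elements an initial cluster passes to the adjacent composite cluster. Your proposal has no substitute for this structural step, and the exchange arguments you defer to (``peeling'' rules and a bounded pool established by size-preserving swaps) are exactly the statements that fail naively under the $[p,q]$ constraints; so the proof has genuine gaps at both of its load-bearing points.
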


Fomin, Golovach and Panolan~\cite[Theorem~1]{FominGP20} proved that \probClust for binary matrices is \classFPT when parameterized  by the budget $B$. Theorem~\ref{thm:FPT-k}  generalizes this result. Interestingly, for approximation algorithms, introducing clustering constraints makes the problem much more computationally challenging. However, from parameterized complexity perspective, adding constraints does not change the complexity of the problem. 

We also observe  that \probCClust is \classNP-complete even for binary matrices, $k=2$ and $p=q=\frac{n}{k}$. Theorem~\ref{thm:FPT-k} can be used to establish fixed-parameter tractability of several  other variants of constrained clustering discussed in the literature.  In  some applications, it is natural to require that the sizes of clusters should be approximately equal, see e.g. \cite{Vallejo-HuangaM17}. We consider 
 variants of \probClust, where the input contains additional parameters besides a matrix $\bfA=(a_1,\ldots,a_n)$ and integers $k$ and $B$, and the task is to find clusters $I_1,\ldots,I_k$ and medians $\bfc_1,\ldots,\bfc_{r}\in\Sigma^m$
such that $\sum_{i=1}^{k}\sum_{j\in I_i}\hdist(\bfc_i,\bfa_j)\leq B$ and the sizes of the clusters satisfy special balance properties.
\begin{itemize}
\item In \probBClust, we are additionally given a nonnegative integer $\delta$ and it should hold that $||I_i|-|I_j||\leq \delta$ for all $i,j\in\{1,\ldots,k\}$, that is, the sizes of  clusters can differ by at most $\delta$.
\item In \probFBClust, we are given  a real $\alpha\geq 1$ and it is required that $|I_i|\leq \alpha|I_j|$ for all $i,j\in\{1,\ldots,k\}$, that is, the ratio of the clusters sizes is upper bounded by $\alpha$.
\end{itemize} 

By making use of Theorem~\ref{thm:FPT-k}, we prove that  
\probBClust and \probFBClust are solvable in time $2^{\Oh(B\log B)}|\Sigma|^B\cdot (mn)^{\Oh(1)}$. 

We conclude by discussing kernelization for these problems. In particular, we show that  \probBClust admits a polynomial kernel under the combined parameterization by $k$, $B$ and $\delta$. 
We also observe that neither of considered problems has a polynomial kernel when parameterized by $B$ only, unless $\classCoNP\subseteq\classNP/{\sf poly}$, even for the binary case.

\medskip\noindent\textbf{High-level overview of the proof of Theorem~\ref{thm:FPT-k}.} The algorithm for the vanilla problem of Fomin et al. ~\cite{FominGP20}, as well as the algorithm  of Fomin, Golovach and Simonov for clustering in $\ell_p$-norn~\cite{FominGS21},   use the  result of Marx~\cite{Marx08} about enumeration of subhypergraphs with certain properties of a given hypergraph of a special type. 
Basically, these algorithms can be seen as an intricate reduction of a clustering instance to a hypergraph of special type and then calling the result of Marx as a black box. 
In the context of the categorical clustering problems, a similar reduction  implies that all potential medians can be listed in  $2^{\Oh(B\log B)}|\Sigma|^B\cdot (mn)^{\Oh(1)}$  time (see Lemma~\ref{lem:enum-means}).

However, this strategy does not work to prove Theorem~\ref{thm:FPT-k}. Here the difficulties are due to the  constraints on sizes of clusters. 
 The algorithm for \probClust in~\cite{FominGP20} uses an observation that
 identical columns $\bfa_i$ and $\bfa_j$ of $\bfA$ can be clustered together.  That is, $i,j\in I_h$ for a cluster $I_h$ of an optimal solution.  Hence, a solution can be seen as a partition of the family of \emph{initial} clusters, i.e., inclusion maximal sets of indices $J\subseteq\{1,\ldots,n\}$ such that the columns $\bfa_i$ for $i\in J$ are the same. Since the number of initial clusters that are part of \emph{composite} clusters of a solution, that is, clusters including at least two initial clusters, is at most $2B$ in any yes-instance, the color coding technique of Alon,  Yuster and Zwick~\cite{AlonYZ95} allows to highlight initial clusters that may be included in a single composite cluster of a solution. This way, the initial problem is reduced to selecting a single composite cluster of minimum cost that contains a given number of initial clusters.  To solve this problem, the  result of Marx~\cite{Marx08} about enumeration of subhypergraphs becomes handy.

 This scheme does not work for \probCClust, because it may happen that splitting of an initial cluster between clusters of a solution is inevitable due to size constraints. This makes it impossible to select composite clusters independently from each other and destroys the approach used in~\cite{FominGP20,FominGS21}.

The main insight that allows to overcome the above issues is the very specific structure of  possible splitting of initial clusters (Lemma~\ref{lem:forest}). For a clustering $\mathcal{I}=\{I_1,\ldots,I_k\}$ and the partition $\mathcal{J}$ of the column indices into initial clusters, we look at the structure of the  
 intersection graph $G(\mathcal{I},\mathcal{J})$ defined by the two partitions of $\{1,\ldots,n\}$. The crucial fact we prove here is that there is an optimal solution such that this intersection graph is a forest. It can be seen that $G(\mathcal{I},\mathcal{J})$ has at most $3B$ vertices in connected components with at least three vertices for such a solution. This allows to guess the structure of 
$G(\mathcal{I},\mathcal{J})$, that is, guess a forest $F$ isomorphic to $G(\mathcal{I},\mathcal{J})$, by using the brute force. Then for a given $F$, we find a solution $\mathcal{I}$ with  $G(\mathcal{I},\mathcal{J})$ isomorphic to $F$ by combining dynamic programming  with color coding and enumeration of subhypergraphs of Marx. 

\paragraph*{Related work.}  
Clustering is one of the most common procedures in unsupervised  machine learning. 
\probCClust  is   the variant of the  popular $k$-median clustering with the Hamming norm.
In many applications of clustering,  constraints come naturally.  For example, the lower bound on the size of a cluster ensures certain anonymity of data and is often required for data privacy \cite{rosner2018privacy}. There is a rich literature on approximation algorithms for various versions of capacitated clustering \cite{Aggarwal, ByrkaRU16,ByrkaFRS15,CharikarGTS02,ChuzhoyR05,DemirciL16,Li17,DBLP:conf/icalp/Cohen-AddadL19,chen2016matroid,MalinenF14,Vallejo-HuangaM17}. 
However, to the best of our knowledge, no parameterized algorithms for categorical clustering with constraints on the sizes of clusters, were known prior to our work.

Several approximations and parameterized algorithms are known for the vanilla case of \probClust without constraints can be found in the literature. For binary field, \probClust
was introduced by Kleinberg, Papadimitriou, and Raghavan \cite{KleinbergPR04} as one of the examples of segmentation problems. The problem appears under different names in the literature~\cite{CilibrasiIK07,MiettinenMGDM08}.
Feige proved in \cite{Feige14b} that the problem is  \classNP-complete for every $k\geq 2$. We use several ideas from Feige's construction for our lower bounds.  
 Ostrovsky and Rabani \cite{OstrovskyR02} gave a randomized PTAS for binary \probClust which was recently improved to  EPTAS in 
 \cite{FominGLP020} and \cite{BanBBKLW19}. 
 Fomin, Golovach and Simonov in \cite{FominGS21} studied $k$-clustering with various distance norms in 
\probClust. One of their results is that clustering with Hamming-distance ($
\ell_0$-distance) (but unbounded size of the alphabet $\Sigma$) is $\classW1$-hard parameterized by $m+B$. 
 The following paper about binary variant of \probClust 
 is highly relevant to this paper. 
  Fomin, Golovach and Panolan \cite{FominGP20} gave two parameterized algorithms for 
binary case of \probClust with running time
$2^{\Oh(B\log B)}\cdot (nm)^{\Oh(1)}$ and $2^{\Oh(\sqrt{kB\log{(k+B)\log k}})} \cdot (nm)^{\Oh(1)}$.

\paragraph*{Organization of the paper.} In Section~\ref{sec:prelim}, we  introduce basic notions and notation used throughout the paper. We also show some auxiliary claims. In particular, we show that \probCClust is \classNP-complete for $k=2$ and binary matrices even if the clusters are required to be of the same size. In Section~\ref{sec:fpt}, we show our main result by constructing an \classFPT algorithm for \probCClust parmeterized by $B$. In Section~\ref{sec:variants}, we discuss  \probBClust and \probFBClust. We conclude in Section~\ref{sec:concl}, by discussing kernelization and stating some open problems.

\section{Preliminaries}\label{sec:prelim}
In this section we introduce the terminology used throughout the paper and obtain some auxiliary results. 

\subsection{Basic notation}

\paragraph{Matrices and vectors.}
All matrices and vectors considered in this paper are assumed to be over a finite alphabet $\Sigma$ and we say that a matrix (vector) is \emph{binary} if $\Sigma=\{0,1\}$. 
Therefore, to simplify notation, we omit $\Sigma$ in the notation whenever it does not create confusion. 
We use $m$ and $n$ to denote the number of rows and columns, respectively, of input matrices if it does not create confusion. 
We write $\bfA=(\bfa_1,\ldots,\bfa_n)$ to denote that $\bfA$ is a matrix with $n$ columns $\bfa_1,\ldots,\bfa_n$. 
For a partition $\mathcal{I}=\{I_1,\ldots,I_k\}$ of $\{1,\ldots,n\}$, we say that $\{I_1,\ldots,I_k\}$ is a \emph{$k$-clustering} for $\bfA$.
For an inclusion maximal $J\subseteq \{1,\ldots,n\}$ such that the columns $\bfa_i$ are identical for all $i\in J$, we say that $J$ is an \emph{initial cluster}. 
We say that a cluster $I_i$ of $\mathcal{I}$
is \emph{simple} if $I_i\subseteq J$ for some initial cluster $J$ and $I_i$ is \emph{composite}, otherwise, that is, if $I_i$ contains some $h,j\in\{1,\ldots,n\}$ such that $\bfa_h$ and $\bfa_j$ are distinct.
For a vector $\bfa\in\Sigma^m$, we use $\bfa[i]$ to denote the $i$-th element of the vector for $i\in\{1,\ldots,m\}$. Thus, for two  vectors $\bfa,\bfb\in\Sigma^m$, 
$\hdist(\bfa,\bfb)=|\{i\in\{1,\ldots,m\}\mid \bfa[i]\neq \bfb[i]\}|$. 
Let $a_{ij}$  for  $i\in\{1,\ldots,m\}$ and $j\in\{1,\ldots,n\}$ be  the elements of $\bfA$.
 For $I\subseteq\{1,\ldots,m\}$ and $J\subseteq\{1,\ldots,n\}$, we denote by $\bfA[I,J]$ the $|I|\times |J|$-submatrix of $\bfA$ with the elements $a_{ij}$ where $i\in I$ and $j\in J$. 

\paragraph{Parameterized complexity.} We refer to the books of Cygan et al.~\cite{CyganFKLMPPS15} and Downey and Fellows~\cite{DowneyF13}
for the detailed introduction to the field, see also the recent book of Fomin et al. on kernelization~\cite{FominLSZ19}. Here, we just informally sketch basic notions.

The input of a parameterized problem contains an integer value $k$ that is referred as a \emph{parameter}. 
A parameterized problem is \emph{fixed-parameter tractable} (\classFPT) if there is an algorithm solving it in $f(k)\cdot |I|^{\Oh(1)}$ time, where $I$ is an input, $k$ is a parameter, and $f(\cdot)$ is a computable function; 
 the parameterized complexity class \classFPT is composed by  fixed-parameter tractable problems.

A {\em{kernelization algorithm}}, or simply a {\em{kernel}}, for a parameterized problem $P$ is an  algorithm that, given an instance $(I,k)$ of $P$, in polynomial in~$|I|$ and~$k$ time returns an instance $(I',k')$ of $P$ such that (i) $(I,k)$ and $(I,'k')$ are equivalent, that is, $(I,k)$ is a yes-instance if and only if $(I',k')$ is a yes-instance, and (ii)
$|I'|+k'\leq g(k)$ for some computable function $g(k)$.  It is said that $g(\cdot)$ is the \emph{size} of a kernel; if  $g(\cdot)$ is a polynomial, then the kernel is polynomial.  
It is well-known that every \classFPT problem admits a kernel but, up to some reasonable complexity assumptions, there are \classFPT problems that have no polynomial kernels. The typical assumption is that  $\classNP\not\subseteq\classCoNP/{\rm poly}$ (see~\cite{FominLSZ19} for details).

\subsection{Solutions, clusters and medians}
 Formally, for \probClust and its variants, a solution is formed by  clusters $I_1,\ldots,I_k$ together with the corresponding medians $\bfc_1,\ldots,\bfc_k$. However, given clusters $I_1,\ldots,I_k$, optimal medians $\bfc_1,\ldots,\bfc_k$ can be computed by the easy \emph{majority rule}.  Let $\bfA=(\bfa_1,\ldots,\bfa_n)$ and let $\{I_1,\ldots,I_k\}$ be an $k$-clustering. For every $i\in\{1,\ldots,k\}$, we compute $\bfc_i\in\Sigma^m$ as follows. For each $j\in\{1,\ldots,m\}$, we consider the multiset 
 $R_{ij}=\{\bfa_h[j]\mid h\in I_i\}$ of elements of $\Sigma$. For each $s\in R_{ij}$, we compute the number of its occurrences  in the multiset and find an element $s^*$ that occurs most often (ties are broken arbitrarily). Then we set $\bfc_i[j]=s^*$. It is straightforward to verify that for every $\bfc\in \Sigma^m$, 
 $\sum_{h\in I_i}\hdist(\bfc_i,\bfa_h)\leq \sum_{h\in I_i}\hdist(\bfc,\bfa_h)$.  Therefore, the choice of $\bfc_i$ is optimal. This gives the following observation.
 
 \begin{observation}\label{obs:majority} 
 Given a matrix $\bfA=(\bfa_1,\ldots,\bfa_n)$ and a $k$-clustering $\{I_1,\ldots,I_k\}$, a family of vectors $\bfc_1,\ldots,\bfc_k\in \Sigma^m$ such that 
 $$\sum_{i=1}^k\sum_{j\in I_i}\hdist(\bfc_i,\bfa_j)$$
 is minimum can be computed in polynomial time by the majority rule.
  \end{observation}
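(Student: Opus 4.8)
The plan is to exploit two independence properties of the objective. First I would observe that $\sum_{i=1}^k\sum_{j\in I_i}\hdist(\bfc_i,\bfa_j)$ is a sum of $k$ terms, where the $i$-th term $\sum_{j\in I_i}\hdist(\bfc_i,\bfa_j)$ depends only on $\bfc_i$. Hence the total cost is minimized precisely when each of these $k$ terms is minimized separately, and it suffices to show how to choose an optimal median $\bfc_i$ for a single fixed cluster $I_i$.

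Next I would decompose the Hamming distance coordinate by coordinate: for any $\bfc\in\Sigma^m$ we have $\sum_{j\in I_i}\hdist(\bfc,\bfa_j)=\sum_{\ell=1}^m|\{j\in I_i\mid \bfa_j[\ell]\neq\bfc[\ell]\}|$, and the summand for a fixed coordinate $\ell$ depends only on $\bfc[\ell]$. Thus we may again optimize each coordinate independently: for coordinate $\ell$ the contribution $|\{j\in I_i\mid \bfa_j[\ell]\neq\bfc[\ell]\}| = |I_i| - |\{j\in I_i\mid \bfa_j[\ell]=\bfc[\ell]\}|$ is minimized by taking $\bfc[\ell]$ to be a symbol that occurs most often among the $\bfa_j[\ell]$ for $j\in I_i$. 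This is exactly the majority rule described just before the statement: with $R_{ij}=\{\bfa_h[j]\mid h\in I_i\}$ and $s^*$ a most frequent element of $R_{ij}$, setting $\bfc_i[j]=s^*$ is optimal for coordinate $j$, and doing this for all $j\in\{1,\ldots,m\}$ yields a vector $\bfc_i$ minimizing $\sum_{h\in I_i}\hdist(\bfc_i,\bfa_h)$, which is the inequality already asserted in the paragraph preceding the observation.

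For the running time I would note that, for each cluster $I_i$ and each coordinate $j$, tallying the occurrences of the symbols in $R_{ij}$ and extracting a mode takes time polynomial in $|I_i|$ and $|\Sigma|$; over all $m$ coordinates and all $k$ clusters this is $(mn)^{\Oh(1)}$. There is essentially no real obstacle here: the only point to state carefully is that the two independence claims (first across clusters, then across coordinates) are legitimate because the objective is an uncoupled sum of the quantities being optimized, so a coordinatewise, clusterwise greedy choice is globally optimal; and that ties in the choice of $s^*$ are immaterial, since any most frequent symbol achieves the same coordinate cost.
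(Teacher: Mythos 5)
Your proposal is correct and matches the paper's argument: the paper also computes each $\bfc_i$ by the coordinatewise majority rule and asserts its optimality, which is exactly the clusterwise and coordinatewise decomposition you spell out. Your write-up merely makes explicit the independence argument and the polynomial tallying cost that the paper leaves as ``straightforward to verify.''
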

   
 For a $k$-clustering   $\{I_1,\ldots,I_k\}$, we define the \emph{cost} $\cost(I_1,\ldots,I_k)$ as the minimum value of $\sum_{i=1}^k\sum_{j\in I_i}\hdist(\bfc_i,\bfa_j)$ over all $k$-tuples of vectors $\bfc_1,\ldots,\bfc_k\in\Sigma^m$. By Observation~\ref{obs:majority}, we have that  $\cost(I_1,\ldots,I_k)$ can be computed in polynomial time. Then the task of \probClust and its variants is reduced to finding a $k$-clustering of cost at most $B$ (with the respective constraints of the cluster sizes).  Thus, we may refer to a $k$-clustering as a solution without specifying medians. 
 
Observe that given vectors $\bfc_1,\ldots,\bfc_k$, we can find an $k$-clustering $\{I_1,\ldots,I_k\}$ that minimizes  $\sum_{i=1}^k\sum_{j\in I_i}\hdist(\bfc_i,\bfa_j)$ by the greedy procedure. For each $i\in\{1,\ldots,n\}$, we find $j\in\{1,\ldots,k\}$ such that $\hdist(\bfc_j,\bfa_i)$ is minimum (ties are broken arbitrarily) and place $i$ in the cluster $I_j$. Since 
$$\sum_{i=1}^n\min\{\hdist(\bfc_j,\bfa_i)\mid 1\leq j\leq k\}\leq \sum_{i=1}^k \sum_{j\in I_i}\hdist(\bfc_i,\bfa_j),$$
for every $k$-clustering $\{I_1,\ldots,I_k\}$, the described greedy procedure produces optimal partition of $\{1,\ldots n\}$ (some sets may be empty). However,   
the constructed $k$-clustering does not respect the size constraints of our problems. Still, given vectors  $\bfc_1,\ldots,\bfc_k$, we can decide in polynomial time whether an instance of
\probCClust has a solution with the medians $\bfc_1,\ldots,\bfc_k$ using a reduction to the classical \probPM problem on bipartite graphs that is well-known to be solvable in polynomial time by the Hungarian method of Kuhn~\cite{Kuhn55} (see also~\cite{LovaszP09}).

Recall that a \emph{matching} $M$ of a graph $G$ is a set of edges without common vertices. It is said that a matching $M$ \emph{saturates} a vertex $v$ if $M$
has an edge incident to $v$. A matching $M$ is \emph{perfect} if every vertex of $G$ is saturated. The task of \probPM is, given a bipartite graph $G$ and a weight function $w\colon E(G)\rightarrow \mathbb{Z}_{\geq 0}$, find a perfect matching $M$ (if it exists) such that its weight $w(M)=\sum_{e\in M}w(e)$ is minimum.

\begin{lemma}\label{lem:means-clusters}
Let $\bfc_1,\ldots,\bfc_k\in \Sigma^m$. For an instance of \probCClust,  
it can be decided in polynomial time whether the instance has a solution with the family of medians $\{\bfc_1,\ldots,\bfc_k\}$. 
\end{lemma}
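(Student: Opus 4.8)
The plan is to reduce the decision problem to \probPM on an auxiliary bipartite graph. First I would dispose of the trivial cases: since $p\ge 1$, every cluster is nonempty, so a solution can exist only if $k\le n$ and $kp\le n\le kq$; if this fails we report that there is no solution with the given medians. Otherwise, observe that once the medians $\bfc_1,\dots,\bfc_k$ are fixed, the only remaining freedom is the assignment of each column index $j$ to some cluster $I_i$, and placing $j$ into $I_i$ contributes exactly $w_{ij}:=\hdist(\bfc_i,\bfa_j)$ to the objective, independently of all other choices. Hence the question becomes: is there an assignment of the $n$ columns to the $k$ clusters with $p\le|I_i|\le q$ for every $i$ and total weight at most $B$; equivalently, is the minimum weight of such an assignment at most $B$.

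To encode this as a perfect matching I would build the following bipartite graph $G$. For each cluster $i$ create $q$ \emph{slots}, split into a set $M_i$ of $p$ \emph{mandatory} slots and a set $O_i$ of $q-p$ \emph{optional} slots. On the other side take a vertex $u_j$ for each column $j\in\{1,\dots,n\}$ together with $N:=kq-n$ \emph{dummy} vertices $d_1,\dots,d_N$ (here $N\ge 0$ because $n\le kq$). Join $u_j$ to every slot of cluster $i$ with weight $w_{ij}$, and join each dummy $d_t$ to every optional slot (over all clusters) with weight $0$; no dummy is adjacent to a mandatory slot. Both sides of $G$ then have exactly $kq$ vertices, so a perfect matching of $G$ is a bijection between $\{u_1,\dots,u_n,d_1,\dots,d_N\}$ and the slots.

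I would then prove the equivalence: the instance has a solution with medians $\bfc_1,\dots,\bfc_k$ if and only if $G$ has a perfect matching of weight at most $B$, and moreover $G$ always admits at least one perfect matching (so it suffices to compute the minimum weight of a perfect matching and compare with $B$). For the forward direction, from a valid clustering with $|I_i|=p+r_i$, $0\le r_i\le q-p$, assign the columns of $I_i$ to all $p$ mandatory slots and $r_i$ arbitrarily chosen optional slots of cluster $i$; the number of still-unused optional slots is $k(q-p)-\sum_i r_i=k(q-p)-(n-kp)=kq-n=N$, so placing the $N$ dummies there produces a perfect matching of weight $\sum_i\sum_{j\in I_i}w_{ij}\le B$. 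Conversely, in any perfect matching dummies occupy only optional slots, so every mandatory slot is matched to a real column $u_j$; letting $I_i$ collect the columns matched to slots of cluster $i$ gives a partition of $\{1,\dots,n\}$ with $p\le|I_i|\le q$ and cost equal to the matching weight. Existence of some perfect matching follows from Hall's theorem applied twice: $G$ restricted to the $kp$ mandatory slots and the $n\ge kp$ real columns is complete bipartite, so the mandatory slots can be saturated by real columns, and the remaining $n-kp$ real columns together with the $N$ dummies (in total $k(q-p)$ vertices) form a complete bipartite graph with the $k(q-p)$ optional slots, so they too can be matched perfectly.

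Finally I would check polynomiality: assuming $k\le n$ and $q\le n$ we get $kq\le n^2$, so $G$ has $O(n^2)$ vertices, $O(n^3)$ edges, and integer weights at most $m$; a minimum-weight perfect matching (or a certificate that none exists) is found in polynomial time by the Hungarian method~\cite{Kuhn55}, and we answer \yes exactly when a perfect matching exists and its minimum weight is at most $B$. The main obstacle — really the only non-routine part — is the encoding of the \emph{lower} bound $p$: the upper bound $q$ is handled for free by taking $q$ copies of each center, but to force $|I_i|\ge p$ inside a \emph{perfect} matching one needs the mandatory slots plus the dummy vertices, and one must verify the counting identity $kq-n=k(q-p)-(n-kp)$ that makes the dummies fit precisely into the leftover optional slots; getting this bookkeeping right (and checking $N\ge 0$, i.e. the necessary condition $kp\le n\le kq$) is where the care goes.
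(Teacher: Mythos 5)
Your proposal is correct and follows essentially the same route as the paper: the paper's proof builds exactly this bipartite graph (your mandatory slots are its sets $W_i$, your optional slots its sets $W_i'$, your dummies its $kq-n$ zero-weight ``fillers''), proves the same equivalence between clusterings with the prescribed medians and perfect matchings of weight at most $B$, and invokes the Hungarian method for \probPM. The only cosmetic differences are that you make the feasibility check $kp\le n\le kq$ and the Hall-type existence argument explicit, while the paper folds these into the assumption $p\le n/k\le q$.
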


\begin{proof}
Let $(\bfA,\Sigma,k,B,p,q)$ be an instance of \probCClust, $\bfA=(\bfa_1,\ldots,\bfa_n)$. Clearly, we can assume that the parameter $k$ in the instance equals to the number of vectors $\bfc_i$ as, otherwise, $\bfc_1,\ldots,\bfc_k$ cannot be the medians. We also assume without loss of generality that $p\leq \frac{n}{k}\leq q$; otherwise,
$(\bfA,\Sigma,k,B,p,q)$ is a no-instance.

We construct the bipartite graph $G$ as follows.
\begin{itemize}
\item For each $i\in\{1,\ldots,k\}$, construct a set of $p$ vertices $W_i=\{v_1^i,\ldots,v_p^i\}$ and a set of $q-p$ vertices $W_i'=\{v_{p+1}^i,\ldots,v_q^i\}$; note that $W_i'=\emptyset$ if $p=q$. Let $V_i=W_i\cup W_i'$ for $i\in\{1,\ldots,k\}$ and denote $V=\bigcup_{i=1}^kV_i$; the block of vertices $V_i$ corresponds to the median $c_i$.
\item For each $i\in\{1,\ldots,n\}$, construct a vertex $u_i$ corresponding to the column $\bfa_i$ of $\bfA$ and make $u_i$ adjacent to the vertices of $V$. Denote $U=\{u_1,\ldots,u_n\}$.
\item Construct a set of $s=kq-n$ vertices $U'=\{u_1',\ldots,u_s'\}$ that we call \emph{fillers} and make the vertices of $U'$ adjacent to the vertices of $W_j'$ for all $j\in\{1,\ldots,k\}$; note that $U'=\emptyset$ if $n=qk$ and observe that $qk\geq n$ by our assumption about the instance of \probCClust. 
\end{itemize}
Observe that $G$ is a bipartite graph, where $U\cup U'$ and $V$ form the bipartition. Note also that $|U\cup U'|=|V|=qk$. 

We define the edge weights as follows.
\begin{itemize}
\item For every $i\in\{1,\ldots,n\}$ and $j\in \{1,\ldots,k\}$, set $w(u_iv_h^j)=\hdist(\bfc_j,\bfa_i)$ for $h\in\{1,\ldots,q\}$, that is, the weight of all edges joining $u_i$ corresponding to $\bfa_i$ with the vertices of $V_j$ corresponding to the median $\bfc_j$ are the same and coincide with the Hamming distance between $\bfa_i$ and $\bfc_j$.
\item For every $i\in\{1,\ldots,s\}$ and $j\in \{1,\ldots,k\}$, set $w(u_i'v_h^j)=0$ for $h\in\{p+1,\ldots,q\}$, that is, the edges incident to the fillers have zero weights.
\end{itemize}

We now show that $G$ has a perfect matching of weight at most $B$ if and only if there is \linebreak a $k$-clustering 
 $\{I_1,\ldots,I_k\}$ for $A$ such that $p\leq |I_i|\leq q$ for all $i\in\{1,\ldots,k\}$ and \linebreak $\sum_{i=1}^k\sum_{j\in I_i}\hdist(\bfc_i,\bfa_j)\leq B$. 
 
 In the forward direction, assume $G$ has a perfect matching $M\subseteq E(G)$ of weight at most $B$. We construct the clustering $\{I_1,\ldots,I_k\}$ as follows. 
 For every $h \in \{1,\ldots,n\}$, $u_h$ is saturated by $M$ and, therefore, there are $i_h\in\{1,\ldots,k\}$ and $j_h \in \{1,\ldots q\}$ such that edge $u_hv^{i_h}_{j_h}\in M$. 
Consider $M'=\{u_hv^{i_h}_{j_h}\mid 1\leq h\leq n\}\subseteq M$. We cluster the columns of $\bfA$ according to $M'$. Formally, we place $h$ in $I_{i_h}$ for each $h\in\{1,\ldots,n\}$. Clearly, $\{I_1,\ldots,I_k\}$ is a partition of $\{1,\ldots,n\}$. 
Observe that for each $i\in\{1,\ldots,k\}$, the vertices of $W_i$ are adjacent only to the vertices of $U$. Since these vertices are saturated by $M$, we obtain that 
$|I_i|\geq p$ for every $i\in\{1,\ldots,k\}$. Since $|V_i|=q$, $|I_i|\leq q$ for all $i\in\{1,\ldots,k\}$. Now we upper bound the cost of the obtained $k$-clustering: 
\begin{equation*} 
 \sum_{i=1}^k\sum_{j\in I_i}\hdist(\bfc_i,\bfa_j)=\sum_{h=1}^n\hdist(\bfc_{i_h},\bfa_h)=w(M')\leq w(M)\leq B. 
\end{equation*}

For the reverse direction, consider a $k$-clustering $\{I_1,\ldots,I_k\}$  for $A$ such that $p\leq |I_i|\leq q$ for all $i\in\{1,\ldots,k\}$ and $\sum_{i=1}^k\sum_{j\in I_i}\hdist(\bfc_i,\bfa_j)\leq B$. Let $i \in \{1,\ldots,k\}$. 
Consider the cluster $I_i$ and assume that  $I_i=\{j_1,\ldots,j_{h_i}\}$. Recall that every vertex of  $V_i$ is adjacent to every vertex of $U$.  Denote by 
$M_i=\{u_{j_1}v_1^i,\ldots,u_{j_{h_i}}v_{h_i}^i\}$. Clearly, $M_i$ is a matching saturating the first $p\leq h_i\leq q$ vertices of $V_i$. In particular, the vertices of $W_i$ are saturated.
We construct $M_i$ for every $i\in\{1,\ldots,k\}$ and set  $M'=\bigcup_{i=1}^{k}M_i$. Since $\{I_1,\ldots,I_k\}$ is a partition of $\{1,\ldots,n\}$, $M'$ is a matching saturating every vertex of $U$. Denote by $V'$ the set of vertices of $V$ that are not saturated by $M'$. Notice that $V'\subseteq \bigcup_{i=1}^kW_i'$, because the vertices of each $W_i$ are saturated by $M_i$. Observe that every vertex of $U'$ is adjacent to every vertex of $W_i'$ for $i\in\{1,\ldots,k\}$, that is, $G[U'\cup V']$ is a complete bipartite graph.  
Because $|U'|=|V'|=s$, $G[U'\cup V']$ has a perfect matching $M''$. We set $M=M'\cup M''$. It is easy to see that $M$ is a matching and, since $M$ saturates every vertex of $G$, $M$ is a perfect matching. To evaluate the weight of $M$, recall that the edges of $G$ incident to the fillers have zero weights, that is, $w(M'')=0$.
Then 
\begin{align*}
w(M)=& w(M')=w(\bigcup_{i=1}^{k}M_i)=\sum_{i=1}^{k}\sum_{e \in M_i}w(e)=\sum_{i=1}^{k}\big(w(u_{j_1}v_1^i)+\cdots+w(u_{j_{h_i}}v_{h_i}^i)\big)\\
=&\sum_{i=1}^{k}\sum_{j \in I_i}^{}d_{\textbf{H}}(\bfc_i,\bfa_j) \leq B,
\end{align*}
and we conclude that $M$ is a perfect matching of weight at  most $B$.

It is straightforward to see that the construction of the graph $G$ from an instance $(\bfA,\Sigma,k,B,p,q)$ of \probCClust can be done in polynomial time. 
Then, because a perfect matching of minimum weight in $G$ can be found in polynomial time~\cite{Kuhn55,LovaszP09}, \probCClust can be solved in polynomial time.
This completes the proof of the lemma.
\end{proof}

By Lemma~\ref{lem:means-clusters}, we have that solving our problems can be reduced to finding 
a family of medians $\{\bfc_1,\ldots,\bfc_k\}$ (notice that some medians may be the same). 
 
 \subsection{Hardness of clustering}
 Since we are interested in the parameterized complexity of clustering problems, in the last part of this section, we argue that \probBClust, \probFBClust and  \probCClust are \classNP-hard for very restricted instances.
 
 In~\cite{Feige14b}, Feige proved that \probClust is \classNP-complete for $k=2$ and binary matrices, that is, for the case $\Sigma=\{0,1\}$. This result immediately implies that 
 \probCClust is also \classNP-complete for $k=2$ and binary matrices.  To see it,  note that 
 an instance $(\bfA,\Sigma,k,B)$ of \probClust is equivalent to the instance $(\bfA,\Sigma,k,B,p,q)$ of \probCClust for $p=1$ and $q=n$. However, we would like to underline that 
\probCClust is \classNP-hard even if  $p=q$.  
For this, we use some details of the hardness proof of Feige~\cite{Feige14b}.

 Feige proved that \probClust is \classNP-hard by showing a reduction from the \probMC problem~\cite{Feige14b}. In \probMC, we are given a graph $G$ and a nonnegative integer $\ell$,  and the task is to find a \emph{cut} $(S,\overline{S})$, that is, a partition of the vertex set into a set $S$ and its complement $\overline{S}=V(G)\setminus S$  such that the size of the cut, i.e, the number of edges between $S$ and $\overline{S}$ is at least $\ell$. The reduction constructed by Feige has the property given in the following lemma.
 
\begin{lemma}[\cite{Feige14b}]\label{lem:fiegeresult}
There is a polynomial time reduction from \probMC to \probClust that computes from an instance $(G,\ell)$ of \probMC an instance $(\bfA,\Sigma,2,B)$ of \probClust, where $\Sigma=\{0,1\}$, such that the following holds: if $(G,\ell)$ is a yes-instance of \probMC with a cut $(S,\bar{S})$ of size at least $\ell$, then $(\bfA,\Sigma,2,B)$ is a yes-instance of \probClust that has a solution $\{I_1,I_2\}$ with the property that $|I_1|/|I_2|=|S|/|\overline{S}|$.
\end{lemma}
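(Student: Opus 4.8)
This lemma is extracted from Feige's \classNP-hardness reduction for binary \probClust with $k=2$~\cite{Feige14b}: beyond correctness of that reduction it records a structural property of the clusterings it produces. Accordingly, the plan is to recall Feige's construction and to check that the $2$-clustering it naturally associates with a cut of $G$ respects the vertex structure of $G$, so that the size ratio drops out. Since only the forward implication is asserted, it suffices, given a cut $(S,\overline S)$ of $G$, to exhibit one $2$-clustering of $\bfA$ and to verify that its cost does not exceed $B$.

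The first step is to recall the construction of $(\bfA,\Sigma,2,B)$ from an instance $(G,\ell)$ of \probMC, keeping track of the vertex-to-column correspondence. In Feige's reduction the columns of $\bfA$ fall into $n=|V(G)|$ groups $Q_v$ of the same size $t$, one group per vertex $v$, and the rows are assembled from the edges of $G$ so that every \emph{vertex-respecting} $2$-clustering $\{I_1,I_2\}$ --- one in which each group $Q_v$ lies entirely in $I_1$ or entirely in $I_2$ --- has cost $\cost(I_1,I_2)=\Phi-\psi\cdot c$, where $\Phi$ and $\psi>0$ are constants determined by $G$ and $c$ is the number of edges of $G$ with endpoints in distinct clusters; the budget is set to $B=\Phi-\psi\ell$. (These are precisely the properties of the reduction that make it correct for \probMC; for the present lemma I only use the displayed cost formula, and in fact only the upper bound it gives.)

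Now suppose $(G,\ell)$ is a \yesinstance of \probMC and let $(S,\overline S)$ be a cut of $G$ with at least $\ell$ crossing edges. Set $I_1=\bigcup_{v\in S}Q_v$ and $I_2=\bigcup_{v\in\overline S}Q_v$. This is a vertex-respecting $2$-clustering of $\bfA$, so by Observation~\ref{obs:majority} its cost --- attained by the majority medians --- equals $\Phi-\psi\cdot c$, where $c\ge\ell$ is the number of edges crossing $(S,\overline S)$; hence $\cost(I_1,I_2)\le\Phi-\psi\ell=B$, and $(\bfA,\Sigma,2,B)$ is a \yesinstance of \probClust. Moreover all groups $Q_v$ have the same size $t$, so $|I_1|=t\cdot|S|$ and $|I_2|=t\cdot|\overline S|$, whence $|I_1|/|I_2|=|S|/|\overline S|$, as claimed.

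The only step that uses more than the black-box correctness of Feige's reduction is the cost formula $\cost(I_1,I_2)=\Phi-\psi c$ for vertex-respecting clusterings: one has to verify, from the explicit structure of the edge-rows, that the majority vector computed inside $I_1$ (and inside $I_2$) is the intended center, and that each edge of $G$ contributes exactly $\psi$ less to the total Hamming cost when it is cut than when both its endpoints lie in one cluster. This is the main obstacle, and it is where the details of the gadget from~\cite{Feige14b} come in; once it is in hand, the cost bound and the counting of $|I_1|$ and $|I_2|$ are routine.
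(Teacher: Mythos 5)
The paper does not prove this lemma at all: it is attributed to~\cite{Feige14b}, and the text before it merely asserts that ``the reduction constructed by Feige has the property given in the following lemma.'' Your instinct --- unpack Feige's construction and check that the clustering induced by a cut is vertex-respecting and has the right cost --- is therefore the only reasonable route. However, as written your argument has a genuine gap: you never actually present Feige's construction. Instead you postulate a template for it (columns partitioned into equal-size groups $Q_v$, one per vertex; every vertex-respecting $2$-clustering has cost exactly $\Phi-\psi c$ where $c$ is the number of cut edges; budget $B=\Phi-\psi\ell$) and then derive the lemma from that template, while explicitly conceding that verifying the template from the actual gadget ``is the main obstacle.'' Since the entire content of the lemma is a structural property of one specific reduction, a proof that is conditional on an unverified description of that reduction does not establish the statement; it restates it in different clothing.

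Moreover, the template you assume is not something one may take for granted. Under the median/majority objective (Observation~\ref{obs:majority}), the naive gadget that would realize your cost formula --- a block of coordinates per edge in which only the two endpoint columns are nonzero --- contributes the \emph{same} amount whether the edge is cut or not (in a large cluster the majority entry is $0$, so each endpoint pays $1$ regardless of which cluster it lies in), which is exactly why Feige's hardness proof is delicate; in particular it is not clear that the cost of a vertex-respecting clustering in his construction depends only on the cut size $c$ and not also on the balance $|S|,|\overline{S}|$, as your exact formula $\Phi-\psi c$ asserts. To close the gap you would need to reproduce the concrete matrix $\bfA$ and budget $B$ from~\cite{Feige14b}, verify the column-to-vertex correspondence (note that the later use in Theorem~\ref{thm:NPhard}, with $p=q=|V(G)|$, in fact requires one column per vertex, i.e.\ $t=1$), and prove the cost bound for the clustering $I_1=\bigcup_{v\in S}Q_v$, $I_2=\bigcup_{v\in\overline{S}}Q_v$ from that explicit gadget --- or else simply cite the relevant statement of Feige's analysis, as the paper does, rather than present this as a proof.
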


\begin{theorem}\label{thm:NPhard}
For every fixed integer constant $c\geq 0$, \probCClust 
is \classNP-complete for $k=2$, binary matrices and $q-p\leq c$.
\end{theorem}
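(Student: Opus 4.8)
The plan is to establish membership in \classNP directly, and then to prove \classNP-hardness by reducing from \probMC, running Feige's reduction (Lemma~\ref{lem:fiegeresult}) on a graph padded so that a yes-instance always has a \emph{balanced} cut attaining the threshold; forcing $p=q$ in the resulting clustering instance then preserves both yes- and no-instances. Membership in \classNP is clear: a certificate is a $2$-clustering $\{I_1,I_2\}$ of $\{1,\dots,n\}$, for which Observation~\ref{obs:majority} lets us compute optimal medians and hence $\cost(I_1,I_2)$ in polynomial time, and the conditions $p\le|I_i|\le q$ and $\cost(I_1,I_2)\le B$ are checkable in polynomial time. It also suffices to handle the case $c=0$, i.e.\ $p=q$, since instances with $q-p=0$ form a subclass of those with $q-p\le c$ for every $c\ge 0$, so \classNP-hardness of the former implies \classNP-hardness of the latter.

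For the reduction, let $(G,\ell)$ be an instance of \probMC with $|V(G)|=n$, and let $G'$ be obtained from $G$ by adding $n$ isolated vertices, so $|V(G')|=2n$. Isolated vertices lie on no edge, so $G$ has a cut of size at least $\ell$ if and only if $G'$ does; moreover, if $(S,\overline{S})$ is a cut of $G$ of size at least $\ell$, then moving $n-|S|$ of the new vertices to the $S$-side and the remaining $|S|$ to the $\overline{S}$-side yields a cut of $G'$ of size at least $\ell$ with both sides of size exactly $n$. Hence $(G,\ell)$ is a yes-instance of \probMC if and only if $G'$ has a balanced cut (parts of size $n$) of size at least $\ell$. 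Apply Feige's reduction (Lemma~\ref{lem:fiegeresult}, \cite{Feige14b}) to $(G',\ell)$ to obtain, in polynomial time, an instance $(\bfA,\{0,1\},2,B)$ of \probClust; let $n'$ be the number of columns of $\bfA$, which is even by inspection of Feige's construction (it produces, for instance, one column per vertex of the even-order graph $G'$). Output the \probCClust instance $(\bfA,\{0,1\},2,B,p,q)$ with $p=q=n'/2$; this has $k=2$, a binary matrix, satisfies $q-p=0\le c$, and is computed in polynomial time.

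To see correctness, suppose first that $(G,\ell)$ is a yes-instance of \probMC. Then $G'$ has a balanced cut $(S,\overline{S})$ of size at least $\ell$ with $|S|=|\overline{S}|=n$, so by Lemma~\ref{lem:fiegeresult} the instance $(\bfA,\{0,1\},2,B)$ has a solution $\{I_1,I_2\}$ with $|I_1|/|I_2|=|S|/|\overline{S}|=1$; hence $|I_1|=|I_2|=n'/2=p=q$, so $(\bfA,\{0,1\},2,B,p,q)$ is a yes-instance of \probCClust. Conversely, if $(\bfA,\{0,1\},2,B,p,q)$ is a yes-instance of \probCClust, then discarding the size constraints shows that $(\bfA,\{0,1\},2,B)$ is a yes-instance of \probClust; since Feige's construction is a correct reduction from \probMC to \probClust, $(G',\ell)$ is a yes-instance of \probMC, and therefore so is $(G,\ell)$. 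Together with membership in \classNP this gives \classNP-completeness.

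The only genuinely delicate point is controlling the balance of the optimal clustering: Lemma~\ref{lem:fiegeresult} transfers a cut of a prescribed balance into a clustering of the same balance, but only in the forward direction, so the padding by isolated vertices is what forces a yes-instance of \probMC to carry a \emph{balanced} witnessing cut and thereby keeps $p=q$ from destroying yes-instances; the backward direction needs nothing beyond the correctness of Feige's reduction, since the size constraints are simply dropped. The remaining bookkeeping — that $\bfA$ has an even number of columns (so that $p=q=n'/2$ is an integer), and the trivial boundary case where $G$ has at most one vertex — is routine.
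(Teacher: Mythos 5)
Your proof is correct and takes essentially the same route as the paper: \classNP-membership is immediate, and hardness comes from \probMC via Feige's reduction (Lemma~\ref{lem:fiegeresult}) applied to a padded graph whose yes-instances are witnessed by a \emph{balanced} cut, after which one sets $p=q$ and handles the backward direction solely through the fact that Lemma~\ref{lem:fiegeresult} is a reduction. The only difference is the padding gadget---you add $|V(G)|$ isolated vertices and keep the threshold $\ell$, whereas the paper takes two disjoint copies of $G$ with threshold $2\ell$---and your implicit reliance on Feige's construction having one column per vertex is the same assumption the paper itself makes when it sets $p=q=|V(G)|$.
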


\begin{proof}
We show the lemma by the
reduction from \probMC that is well-known to be \classNP-complete~\cite{GareyJ79}. Given an instance $(G,\ell)$ of \probMC, we construct an auxiliary instance $(G',2\ell)$ of \probMC, where $G'$ is the union of two disjoint copies $G_1$ and $G_2$  of $G$. Then for the constructed instance $(G',2\ell)$  we can use as a black box the algorithm of Fiege \cite{Feige14b} from Lemma~\ref{lem:fiegeresult} to produce the instance $(\bfA,\Sigma,2,B)$ of \probClust
with $\Sigma=\{0,1\}$.  We further set $p=q=|V(G)|$ and consider the instance $(\bfA,\Sigma,2,B,p,q)$ of \probCClust. Clearly, $q-p\leq c$.
We show that $(G,\ell)$ is yes-instance of \probMC if and only if $(\bfA,\Sigma,2,B,p,q)$ is a yes-instance of \probCClust.

In the forward direction, assume that $(G,\ell)$ is a yes-instance of \probMC and let $(S,\bar{S})$ be a cut of size at least $\ell$. 
Let $S_1$ and $S_2$ be the copies of $S$ in $G_1$ and $G_2$, respectively.
We now consider $S'\subseteq V(G')$ such that $S'=S_1 \cup (V(G_2)\setminus S_2)$. Clearly, $\overline{S'}=(V(G_1)\setminus S_1)\cup S_2$ and $(S',\overline{S'})$ is a cut of $G'$ of size at least $2\ell$.  Moreover, $|S'|=|S_1|+|V(G_2)\setminus S_2|=|S_2|+|V(G_1)\setminus S_1|=|\overline{S'}|$. 
Hence, $(G',2\ell)$ is a yes-instance of \probMC with a solution $(S',\overline{S'})$ that has the property that $|S'|=|\overline{S'}|$. 
By Lemma~\ref{lem:fiegeresult}, $(\bfA,\Sigma,2,B)$ is a yes-instance of \probClust  that has a solution $\{I_1,I_2\}$ such that $|I_1|=|I_2|$.  
This implies that $p\leq |I_1|,|I_2|\leq q$. 
Therefore, $\{I_1,I_2\}$ is also solution for the instance $(\bfA,\Sigma,2,B,p,q)$ of \probCClust. Thus, $(\bfA,\Sigma,2,B,p,q)$ is a yes-instance of \probCClust.

In the reverse direction, suppose that $(\bfA,\Sigma,2,B,p,q)$ is a yes-instance of \probCClust. Then there is a 2-clustering $\{I_1,I_2\}$ for $\bfA$ of cost at most $B$.
This means that $(\bfA,\Sigma,2,B)$ is a yes-instance of \probClust. Because $(\bfA,\Sigma,2,B)$ is obtained from $(G',2\ell)$ by a polynomial reduction from Lemma~\ref{lem:fiegeresult},
$(G',2\ell)$ is a yes-instance of \probMC, that is, $G'$ has a cut of size at least $2\ell$. Since $G'$ is a disjoint union of two identical copies of $G$, each copy has a cut of size at least $l$. Therefore, $(G,\ell)$ is a yes-instance of \probMC. This completes the hardness proof.
\end{proof}  
 
 \section{\classFPT algorithm for parameterization by $B$}\label{sec:fpt}
 In this section we show that \probCClust is \classFPT when parameterized by $B$ and $|\Sigma|$. Our main result is Theorem~\ref{thm:FPT-k} that we restate here.

 \main*

 Note that this result is tight in the sense that it is unlikely that the dependence on the alphabet size could be made polynomial. It was shown in~\cite{FominGS21}, that \probClust is \classW{1}-hard when parameterized by $k$ and the number of rows $m$ of the input matrix if $\Sigma=\mathbb{Z}$, i.e., for an infinite alphabet. However, it is straightforward to see that this result holds for $\Sigma=\{0,\ldots,n-1\}$, because our measure is the Hamming distance. For each row of the input matrix, we can replace the original symbols by the symbols of  $\Sigma=\{0,\ldots,n-1\}$ in such a way that the original symbols in the row are the same if and only if the new symbols are the same. Clearly, this replacement gives an equivalent instance. This immediately leads to the following proposition. 
 
 \begin{proposition}\label{prop:w-hard} 
 \probCClust is  \classW{1}-hard when parameterized by $B$ and $m$. 
 \end{proposition}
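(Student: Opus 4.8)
The plan is to obtain the statement from the known hardness of the vanilla problem, the only work being to get rid of the infinite alphabet. As recalled in the related work, Fomin, Golovach and Simonov~\cite{FominGS21} proved that \probClust with $\Sigma=\mathbb{Z}$ is \classW{1}-hard parameterized by $m+B$. First I would note that \probClust is exactly the special case of \probCClust with $p=1$ and $q=n$, which imposes no constraint on the cluster sizes and changes neither $m$ nor $B$; hence it suffices to show that \probClust over a \emph{finite} alphabet is \classW{1}-hard parameterized by $m+B$.

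Next I would give the alphabet-reduction step. Starting from an instance $(\bfA,\mathbb{Z},k,B)$ of \probClust with $\bfA$ an $m\times n$ matrix, I process each row independently: in row $i$ at most $n$ distinct integers occur among $a_{i1},\ldots,a_{in}$, so I relabel them by an injection into $\{0,\ldots,n-1\}$, producing a matrix $\bfA'$ over $\Sigma'=\{0,\ldots,n-1\}$ with the same dimensions. I would then argue that $(\bfA,\mathbb{Z},k,B)$ and $(\bfA',\Sigma',k,B)$ are equivalent: the cost of any $k$-clustering depends only on the equality pattern within each row. Indeed, by the majority rule (Observation~\ref{obs:majority}), for a clustering $\{I_1,\ldots,I_k\}$ the contribution of row $i$ and cluster $I_j$ to the optimal cost is $|I_j|$ minus the maximum multiplicity among the entries $\{a_{i\ell}\mid \ell\in I_j\}$, and this number is invariant under any within-row relabeling. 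Therefore $\cost(I_1,\ldots,I_k)$ is the same for $\bfA$ and $\bfA'$, so the two instances have the same answer for every $B$. Since the transformation is polynomial and leaves $m$, $k$, $B$ untouched, \classW{1}-hardness of \probClust parameterized by $m+B$ transfers to finite alphabets, and then to \probCClust via the $p=1$, $q=n$ embedding.

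I do not expect a real obstacle here; the only point that warrants a careful sentence is verifying that the within-row relabeling preserves the cost of \emph{every} clustering, not merely the pairwise Hamming distances between columns — and this is precisely what the majority-rule description of optimal medians delivers. (Equivalently, one can argue directly that $\hdist$ between two columns counts the rows with distinct entries, which is an equality-pattern invariant, and that in each coordinate an optimal median may be taken to be a symbol already occurring there.)
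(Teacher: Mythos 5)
Your proposal is correct and follows essentially the same route as the paper: the paper likewise observes that \probClust embeds into \probCClust via $p=1$, $q=n$, and handles the alphabet by a row-wise relabeling into $\{0,\ldots,n-1\}$ that preserves the equality pattern (hence all Hamming distances and clustering costs), invoking the \classW{1}-hardness from~\cite{FominGS21}. Your extra justification via the majority rule only spells out what the paper states as immediate.
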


 The remaining part of the section contains the proof of Theorem~\ref{thm:FPT-k}. The proof is constructive. In Subsection~\ref{subsec:tech}, we introduce some notation and show technical claims that are used by the algorithm, and Subsection~\ref{subsec:alg} contains the algorithm and its analysis. 

 \subsection{Technical lemmata}\label{subsec:tech}
 Let $\bfA=(\bfa_1,\ldots,\bfa_n)$ be a matrix.
 Recall that  an inclusion maximal $J\subseteq \{1,\ldots,n\}$ such that the columns $\bfa_i$ are identical for all $i\in J$ is called an initial cluster.
 Suppose that $\{I_1,\ldots,I_k\}$ is a $k$-clustering for $\bfA$. Recall that  a cluster $I_i$ is simple if $I_i\subseteq J$ for some initial cluster $J$ and $I_i$ is composite, otherwise, that is, if $I_i$ contains some $h,j\in\{1,\ldots,n\}$ such that $\bfa_h$ and $\bfa_j$ are distinct.
 
 We start by making the following observation about medians of sufficiently big (in $B$) clusters.

\begin{observation}\label{obs:big-clust}
Let $\{I_1,\ldots,I_k\}$ be an $k$-clustering for a matrix $\bfA=(\bfa_1,\ldots,\bfa_n)$ of cost at most $B$, and let $|I_i|\geq B+1$ for some $i\in\{1,\ldots,k\}$.
Then for all vectors $\bfc_1,\ldots,\bfc_k\in\Sigma^m$ such that $\sum_{h=1}^k\sum_{j\in I_h}\hdist(\bfc_h,\bfa_j)\leq B$, $\bfc_i=\bfa_j$ for at least $|I_i|-B$ indices $j\in I_i$.
Moreover, if $|I_i|\geq 2B+1$, then $\bfc_i$ is unique. 
\end{observation}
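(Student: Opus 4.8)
The plan is a direct counting argument resting on the fact that every coordinate where a median disagrees with one of its columns contributes at least one to the total cost.

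First I would fix vectors $\bfc_1,\ldots,\bfc_k\in\Sigma^m$ with $\sum_{h=1}^k\sum_{j\in I_h}\hdist(\bfc_h,\bfa_j)\leq B$ and restrict attention to the single inner sum $\sum_{j\in I_i}\hdist(\bfc_i,\bfa_j)$, which is at most $B$ since all summands are nonnegative. For every index $j\in I_i$ with $\bfc_i\neq\bfa_j$ we have $\hdist(\bfc_i,\bfa_j)\geq 1$, so the number of such indices is at most $\sum_{j\in I_i}\hdist(\bfc_i,\bfa_j)\leq B$. Consequently $\bfc_i=\bfa_j$ for at least $|I_i|-B$ indices $j\in I_i$, proving the first claim; moreover the assumption $|I_i|\geq B+1$ makes this set of indices nonempty, so $\bfc_i$ is in fact equal to one of the columns lying in its own cluster.

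For the uniqueness statement when $|I_i|\geq 2B+1$, I would take two arbitrary admissible values $\bfc_i$ and $\bfc_i'$ (arising from possibly different families of medians, each of total cost at most $B$) and apply the previous paragraph to each: the set $S$ of indices $j\in I_i$ with $\bfa_j=\bfc_i$ and the set $S'$ of indices $j\in I_i$ with $\bfa_j=\bfc_i'$ both have size at least $|I_i|-B$. Since $S,S'\subseteq I_i$, inclusion–exclusion gives $|S\cap S'|\geq |S|+|S'|-|I_i|\geq 2(|I_i|-B)-|I_i|=|I_i|-2B\geq 1$, so some index $j\in I_i$ satisfies $\bfa_j=\bfc_i$ and $\bfa_j=\bfc_i'$ simultaneously, whence $\bfc_i=\bfc_i'$.

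I do not anticipate a genuine obstacle: the argument is pigeonhole together with the trivial bound that a disagreeing coordinate costs at least one. The only points needing mild care are the reading of ``unique'' — namely ``the same across all families of medians attaining cost at most $B$'' — and the bookkeeping that the two agreement sets live inside the common index set $I_i$, so their overlap can be lower-bounded by inclusion–exclusion rather than being possibly empty.
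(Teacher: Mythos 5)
Your proposal is correct and follows essentially the same argument as the paper: the first part is the same counting (each disagreeing column costs at least one, so at most $B$ disagreements), and your inclusion--exclusion step for uniqueness is just an explicit rendering of the paper's observation that any admissible $\bfc_i$ must coincide with more than half of the columns in $I_i$.
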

 
\begin{proof}
To show the first part of the claim, assume that $\bfc_i\in \Sigma^m$ is distinct from at least $B+1$ columns $\bfa_j$ for $j\in I_i$. Then
 \begin{equation*}
 B\geq \sum_{h=1}^k\sum_{j\in I_h}\hdist(\bfc_h,\bfa_j)\geq \sum_{j\in I_i}\hdist(\bfc_i,\bfa_j)\geq B+1;
  \end{equation*}
  a contradiction. For the second part of the claim, note that if $|I_i|\geq 2B+1$, then $\bfc_i$ should coincide with more than half columns $\bfa_j$ with $j\in I_i$ and, therefore, the choice of $\bfc_i$ is unique.
 \end{proof}  
 
We use the following simple observation about the number of composite clusters and the number of  initial cluster having elements in the composite clusters of a solution. 
 
\begin{observation}\label{obs:composite} 
 Let $\bfA=(\bfa_1,\ldots,\bfa_n)$ be a matrix with the partition $\mathcal{J}=\{J_1,\ldots,J_s\}$ of $\{1,\ldots,n\}$ into initial clusters. Let also 
$\mathcal{I}=\{I_1,\ldots,I_k\}$ be a $k$-clustering for $\bfA$ of cost at most $B$. Then $\mathcal{I}$  contains at most $B$ composite clusters and  $\mathcal{J}$ 
has at most $2B$ initial clusters with nonempty intersections with the composite clusters of $\mathcal{I}$.
\end{observation}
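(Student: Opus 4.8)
The plan is to bound the two quantities separately, both using the fact that $\cost(I_1,\ldots,I_k) \le B$ together with the majority rule from Observation~\ref{obs:majority}. First I would fix a family of optimal medians $\bfc_1,\ldots,\bfc_k$ so that $\sum_{i=1}^k \sum_{j\in I_i} \hdist(\bfc_i,\bfa_j) \le B$, and for each cluster $I_i$ let $d_i = \sum_{j\in I_i}\hdist(\bfc_i,\bfa_j)$ be its contribution to the cost, so $\sum_{i=1}^k d_i \le B$. The key point is that a composite cluster must have strictly positive cost: if $I_i$ is composite, it contains indices $h,j$ with $\bfa_h \ne \bfa_j$, so $\bfc_i$ cannot be equal to both, hence $\hdist(\bfc_i,\bfa_h) \ge 1$ or $\hdist(\bfc_i,\bfa_j) \ge 1$, giving $d_i \ge 1$. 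Since the $d_i$ are nonnegative integers summing to at most $B$, at most $B$ of them can be at least $1$, so at most $B$ clusters are composite.

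For the second bound, consider an initial cluster $J_t$ that has nonempty intersection with some composite cluster of $\mathcal{I}$. I would argue that such a $J_t$ must "pay" at least $1$ toward the budget, in the following sense. Look at all columns with index in $J_t$; they are all equal, say to a common vector $\bfv$. If the composite cluster $I_i$ meeting $J_t$ has median $\bfc_i = \bfv$, then since $I_i$ is composite it also contains some index $j$ with $\bfa_j \ne \bfv = \bfc_i$, contributing at least $1$ to $d_i$; charge this unit to $J_t$. Otherwise $\bfc_i \ne \bfv$, so every index of $J_t$ placed in $I_i$ contributes at least $1$ — in particular at least one unit, which I charge to $J_t$. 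The only subtlety is that a single composite cluster $I_i$ may meet several initial clusters, and I must make sure the units charged to distinct initial clusters are distinct elements of the cost sum. This is where I expect the bookkeeping to need the most care: I would set it up so that each initial cluster $J_t$ meeting a composite cluster is charged a distinct column position (a pair (column index, row index) where that column disagrees with its median), and since $J_t$ contributes columns that are all equal to $\bfv$, at most one composite cluster has median equal to $\bfv$, which keeps the "median equals $\bfv$" case from being double-charged across initial clusters.

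A cleaner way to organize the charging, which I would likely use instead: among all initial clusters meeting composite clusters, at most one can have the property that its common column $\bfv$ equals the median of the composite cluster it meets *and* that cluster's other disagreements are shared — but actually the simplest robust count is this. For each composite cluster $I_i$, let $t_i$ be the number of initial clusters meeting $I_i$; then $I_i$ contains $t_i$ pairwise-distinct column values among its members only if $t_i$ initial clusters are represented, and the median $\bfc_i$ equals at most one of these values, so at least $t_i - 1$ of the represented initial-cluster-values differ from $\bfc_i$, forcing $d_i \ge t_i - 1$. Summing over composite clusters, $\sum_i (t_i - 1) \le \sum_i d_i \le B$, so $\sum_i t_i \le B + (\text{number of composite clusters}) \le 2B$. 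Since each initial cluster meeting the composite part is counted among the $t_i$ for at least one $i$ (indeed exactly the clusters it meets), the total number of such initial clusters is at most $\sum_i t_i \le 2B$. The main obstacle is getting the inequality $d_i \ge t_i - 1$ stated and justified precisely — it follows because the columns in $I_i$ coming from $t_i$ distinct initial clusters take $t_i$ distinct values in $\Sigma^m$, the median agrees with at most one of them, and each of the remaining $\ge t_i - 1$ distinct values contributes at least one to $d_i$ via at least one column realizing it. This completes the argument.
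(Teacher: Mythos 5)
Your proof is correct and follows essentially the same route as the paper: a composite cluster costs at least one, which gives the bound $B$ on their number, and the second bound rests on the same observation that at most one initial cluster per composite cluster can share its value with that cluster's median, while every other initial cluster meeting a composite cluster contributes at least one unit to the budget. The only difference is bookkeeping: your cleaner variant sums the per-cluster inequality $d_i\geq t_i-1$ over composite clusters, whereas the paper counts globally that at least $t-B$ of the $t$ intersecting initial clusters differ from all composite medians; both combine with the first bound to give $t\leq 2B$.
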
 
 
 \begin{proof}
 Let $\bfc_1,\ldots,\bfc_k$ be medians such that $\sum_{i=1}^k\sum_{j\in I_i}\hdist(\bfc_i,\bfa_j)\leq B$. Note that if $I_i$ is a composite cluster for some $i\in\{1,\ldots,k\}$, then $\bfc_i$ is distinct from $\bfa_j$ for at least one $j\in I_i$ and $\sum_{j\in I_i}\hdist(\bfc_i,\bfa_j)\geq 1$. Therefore, $\mathcal{I}$  contains at most $B$ composite clusters.   For the second claim, notice that if $\mathcal{J}$ 
has $t\geq B$ initial clusters with nonempty intersections with composite clusters, then because $\mathcal{I}$ has at most $B$ composite clusters, for at least $t-B$ these initial clusters $J_j$, $\bfa_h\neq \bfc_i$ for $h\in J_j$ and all the medians $\bfc_i$ of composite clusters.  Hence, $t\leq 2B$.
  \end{proof}

Let  $J\subseteq \{1,\ldots,n\}$  be an initial cluster. Due to size constraints, it may happen that  a $k$-clustering  $\{I_1,\ldots,I_k\}$ with several simple clusters $I_i\subseteq J$ provides a solution. This means, that we should partition a subset of $J$ into blocks of bounded size. To verify whether we are able to create such a partition, we use the following observation. 

\begin{observation}\label{obs:blocks}
Let $p$ and $q$ be positive integers, $p\leq q$. A finite set $X$ can be partitioned into $h$ subsets such that each of them has size at least $p$ and at most  $q$ if and only if $\Big\lceil\frac{|X|}{q}\Big\rceil\leq h\leq \Big\lfloor\frac{|X|}{p}\Big\rfloor$.
\end{observation}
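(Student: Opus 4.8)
The plan is to reduce the set-partition question to a purely arithmetic one about block sizes. Writing $N=|X|$, I would first observe that $X$ admits a partition into $h$ subsets each of size between $p$ and $q$ if and only if there exist integers $n_1,\dots,n_h$ with $p\le n_i\le q$ for every $i$ and $n_1+\cdots+n_h=N$: from any such partition one reads off the block sizes, and conversely, since $X$ is finite, any such sequence of sizes can be realized by cutting $X$ into consecutive blocks. So it suffices to characterize when such a sequence exists.

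For the forward (necessity) direction I would simply sum the inequalities. Summing $n_i\le q$ over $i$ yields $N\le hq$, hence $h\ge N/q$, and since $h$ is an integer, $h\ge\lceil N/q\rceil$. Summing $n_i\ge p$ yields $N\ge hp$, hence $h\le N/p$ and thus $h\le\lfloor N/p\rfloor$. This is immediate.

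For the reverse (sufficiency) direction I would start from the hypothesis $\lceil N/q\rceil\le h\le\lfloor N/p\rfloor$, rewrite it as $hp\le N\le hq$, and construct the sequence explicitly. Begin with all entries equal to $p$, so the current sum is $hp\le N$ and the remaining deficit $N-hp$ satisfies $0\le N-hp\le hq-hp=h(q-p)$. Then distribute the deficit greedily, increasing the entries one at a time from $p$ up to at most $q$ until the deficit is used up; since the total increment capacity $h(q-p)$ is at least $N-hp$, the process terminates with a valid sequence summing to $N$. (When $p=q$ the two bounds force $N=hp$ and every block has size exactly $p$, so there is nothing to distribute.) This yields the desired partition.

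The step I expect to need the most care is the sufficiency direction, specifically checking that the greedy redistribution never has to push any block above $q$ — this is exactly where the hypothesis $h\ge\lceil N/q\rceil$ (equivalently $N\le hq$) enters — together with handling the degenerate case $p=q$ cleanly so that the $\lceil\cdot\rceil$ and $\lfloor\cdot\rfloor$ bounds collapse to the single feasible value.
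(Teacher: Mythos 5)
Your proof is correct and follows essentially the same route as the paper: necessity by summing the size bounds to get $hp\leq |X|\leq hq$, and sufficiency by starting from $h$ blocks of size $p$ and greedily distributing the remaining $|X|-hp$ elements, which fit since the spare capacity is $h(q-p)\geq |X|-hp$. The extra care you flag (the equivalence with $\lceil |X|/q\rceil\leq h\leq\lfloor |X|/p\rfloor$ via integrality of $h$, and the $p=q$ case) is fine and needs no further elaboration.
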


\begin{proof}
If $X$ can be partitioned into $h$ subsets of size at least $p$ and at most  $q$, then, trivially, $ph\leq |X|$ and $qh\geq |X|$, i.e., 
$\Big\lceil\frac{|X|}{q}\Big\rceil\leq h\leq \Big\lfloor\frac{|X|}{p}\Big\rfloor$.
If $\Big\lceil\frac{|X|}{q}\Big\rceil\leq h\leq \Big\lfloor\frac{|X|}{p}\Big\rfloor$, then $X$ has $h$ disjoint subsets $X_1,\ldots,X_h$ of size $p$. Then the remaining $|X|-ph$ elements can be greedily added to these subsets without exceeding the upper bound $q$ on the size.
\end{proof}

Let $\mathcal{J}=\{J_1,\ldots,J_s\}$ be the partition of $\{1,\ldots,n\}$ into initial clusters. For a $k$-clustering $\mathcal{I}=\{I_1,\ldots,I_k\}$, we define the graph $G(\mathcal{I},\mathcal{J})$ as the intersection graph of the sets of $\mathcal{I}$ and $\mathcal{J}$, that is,  $G(\mathcal{I},\mathcal{J})$ is the bipartite graph with the set of vertices $\mathcal{I}\cup \mathcal{J}$ such that for every $i\in\{1,\ldots,k\}$ and $j\in\{1,\ldots,s\}$, 
$I_i$ and $J_j$ are  adjacent if and only if $I_i\cap J_j\neq\emptyset$. We show that we can assume $G(\mathcal{I},\mathcal{J})$ to be a forest. This can be proved using an ILP or flow formulation of the clustering problem with given medians. For simplicity, we provide a direct proof.   
   
\begin{lemma}\label{lem:forest}
Let $\bfA=(\bfa_1,\ldots,\bfa_n)$ be a matrix with the partition $\mathcal{J}=\{J_1,\ldots,J_s\}$ of $\{1,\ldots,n\}$ into initial clusters. Let also 
$\mathcal{I}=\{I_1,\ldots,I_k\}$ be a $k$-clustering for $\bfA$. Then there is a $k$-clustering $\mathcal{I}'=\{I_1',\ldots,I_k'\}$
such that (i) $|I_i|=|I_i'|$ for all $i\in \{1,\ldots,k\}$, (ii) $\cost(I_1',\ldots,I_k')\leq \cost(I_1,\ldots,I_k)$, and (iii) $G(\mathcal{I}',\mathcal{J})$ is a forest.
\end{lemma}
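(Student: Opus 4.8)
The plan is to start from an arbitrary $k$-clustering $\mathcal{I}$ and perform local modifications that strictly decrease some potential function while preserving the cluster sizes and not increasing the cost, until the intersection graph $G(\mathcal{I},\mathcal{J})$ has no cycle. The natural potential to track is the number of edges of $G(\mathcal{I},\mathcal{J})$, equivalently the number of nonempty intersections $I_i\cap J_j$: I will show that whenever $G(\mathcal{I},\mathcal{J})$ contains a cycle, one can shift elements along the cycle so that at least one intersection $I_i\cap J_j$ becomes empty, the multiset of sizes $\{|I_1|,\dots,|I_k|\}$ is unchanged, and the cost does not increase. Since the number of edges is a nonnegative integer bounded by $ks$, this process terminates, and a cycle-free bipartite graph is a forest, giving (iii); properties (i) and (ii) are maintained as invariants throughout.

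The core step is the cycle-shifting argument. Suppose $G(\mathcal{I},\mathcal{J})$ has a cycle; since the graph is bipartite, the cycle alternates between clusters and initial clusters, say $I_{i_1}, J_{j_1}, I_{i_2}, J_{j_2}, \dots, I_{i_t}, J_{j_t}, I_{i_1}$, with all the listed $I$'s distinct and all the listed $J$'s distinct, and each consecutive pair intersecting. The key observation is that all columns indexed by a single initial cluster $J_{j_\ell}$ are identical, so moving an element from $I_{i_\ell}\cap J_{j_\ell}$ out of $I_{i_\ell}$ and replacing it by an element of $I_{i_{\ell+1}}\cap J_{j_\ell}$ moved into $I_{i_\ell}$ changes neither $|I_{i_\ell}|$ nor the contribution of $I_{i_\ell}$ to the cost, because we keep the same center and swap one column for an identical one. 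Concretely, for each $\ell$ pick $x_\ell \in I_{i_\ell}\cap J_{j_\ell}$ and $y_\ell \in I_{i_{\ell+1}}\cap J_{j_\ell}$ (indices mod $t$); these are well defined since the corresponding edges are present, and $x_\ell \neq y_\ell$ as they lie in different clusters. Now for each $\ell$ move $x_\ell$ from $I_{i_\ell}$ to $I_{i_{\ell+1}}$. Each cluster $I_{i_\ell}$ loses $x_\ell$ and gains $x_{\ell-1}\in J_{j_{\ell-1}}$, but I will instead route the shift through $y$-elements so that the gained and lost elements lie in the same initial cluster: more carefully, move $x_\ell$ out of $I_{i_\ell}$ and move an element of $J_{j_\ell}$ that currently sits in $I_{i_{\ell+1}}$ into $I_{i_\ell}$; doing this consistently around the cycle preserves every $|I_{i_\ell}|$, keeps every affected cluster's column-multiset-up-to-duplication the same (hence the optimal center and the cost of that cluster unchanged by Observation~\ref{obs:majority}), and is chosen so that some intersection $I_{i_\ell}\cap J_{j_\ell}$ loses its last element and becomes empty — e.g. pick $x_\ell$ to be moved for an $\ell$ where $|I_{i_\ell}\cap J_{j_\ell}|$ is smallest along the cycle and amplify by moving that many elements simultaneously, driving that intersection to empty.

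The step I expect to be the main obstacle is making the bookkeeping of the simultaneous moves around the cycle fully rigorous: one must check that after routing the shift, every cluster on the cycle has exactly the same size as before, that no intersection that was empty becomes nonempty in a way that creates a new cycle of the same length (this is fine — edges only disappear on the cycle and the global edge count strictly drops), and that the cost is genuinely non-increasing (it is in fact unchanged, since we only ever swap a column for an identical one within a fixed cluster, so its optimal center is still optimal with the same objective value). I would organize this as: (1) reduce to the case where $G(\mathcal{I},\mathcal{J})$ is connected (handle components independently); (2) fix a cycle and the elements $x_\ell, y_\ell$ as above; (3) define the shift amount $\mu = \min_\ell |I_{i_\ell}\cap J_{j_\ell}|$ and move $\mu$ columns of $J_{j_\ell}$ from $I_{i_\ell}$ to $I_{i_{\ell+1}}$ for each $\ell$; (4) verify the three invariants and that at least one edge of the cycle is destroyed, so $|E(G(\mathcal{I}',\mathcal{J}))| < |E(G(\mathcal{I},\mathcal{J}))|$; (5) iterate. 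Termination is immediate from the strictly decreasing integer potential, and the limiting clustering $\mathcal{I}'$ satisfies (i)–(iii).
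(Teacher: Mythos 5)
Your overall skeleton (eliminate cycles of the bipartite intersection graph by cyclic shifts, keep all cluster sizes fixed, argue no new edges appear, and use the edge count as a strictly decreasing potential) is the same as the paper's. The genuine gap is in the cost argument, which is the heart of the proof. The operation you describe as cost-preserving --- removing an element of $J_{j_\ell}$ from $I_{i_\ell}$ and replacing it by another element of the \emph{same} initial cluster $J_{j_\ell}$ taken from $I_{i_{\ell+1}}$ --- is vacuous: since those two columns are identical, none of the intersection sizes $|I_i\cap J_j|$ change, so no edge of $G(\mathcal{I},\mathcal{J})$ can disappear. The operation that actually destroys an edge, your step (3) (move $\mu$ elements of $J_{j_\ell}$ from $I_{i_\ell}$ to $I_{i_{\ell+1}}$ for every $\ell$), makes each cluster $I_{i_\ell}$ on the cycle trade $\mu$ columns equal to $\bfb_{j_\ell}$ for $\mu$ columns equal to $\bfb_{j_{\ell-1}}$, which come from a \emph{different} initial cluster; its column multiset is not preserved, its optimal median may change, and its cost can strictly increase. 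So the justification ``we only ever swap a column for an identical one within a fixed cluster'' does not apply to the move that does the work, and as stated the chosen shift direction may increase the total cost.

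The missing idea is the direction-choice (averaging) argument. Fix optimal medians $\bfc_1,\ldots,\bfc_k$ of the current clustering and compare the two cyclic orientations: shifting one unit in one direction changes the total cost (with respect to these fixed medians) by $\sum_{\ell}\bigl(\hdist(\bfc_{i_\ell},\bfb_{j_{\ell-1}})-\hdist(\bfc_{i_\ell},\bfb_{j_\ell})\bigr)$, and shifting in the opposite direction changes it by the negative of this quantity; hence at least one orientation is non-increasing, even though individual clusters' costs may go up. One then shifts in that orientation (unit by unit, or by the appropriate minimum intersection size along that orientation) until some intersection on the cycle empties; since clusters only receive elements of initial clusters they already meet, no new edge appears, the edge count drops, and $\cost(\mathcal{I}')$ is bounded by the cost of $\mathcal{I}'$ with respect to the old medians, which is at most $\cost(\mathcal{I})$. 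With this inequality inserted, your induction on the number of edges goes through exactly as in the paper (which phrases it extremally, via a clustering minimizing the number of edges, rather than iteratively).
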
   
 
\begin{proof} 
Assume that $\mathcal{I}'=\{I_1',\ldots,I_k'\}$ is a $k$-clustering for $\bfA$ satisfying conditions (i) and (ii) such that the number of edges of  $G(\mathcal{I}',\mathcal{J})$ is minimum. Denote by $\bfc_1,\ldots,\bfc_k$ optimal medians for $I_1',\ldots,I_k'$. We claim that $G(\mathcal{I}',\mathcal{J})$ is a forest.  

The proof is by contradiction. Assume that $G(\mathcal{I}',\mathcal{J})$ has a cycle. This means that there are distinct $i_1,\ldots,i_t\in\{1,\ldots k\}$ and distinct $j_1,\ldots,j_t\in\{1,\ldots,s\}$ such that $I_{i_h}'\cap J_{j_h}\neq\emptyset$ and $I_{i_h}'\cap J_{j_{h+1}}\neq\emptyset$ for all $h\in\{1,\ldots,t\}$; here and further in the proof, we assume that $j_{t+1}=j_1$ and $i_{t+1}=i_1$. 

For $h\in\{1,\ldots,s\}$, denote by $\bfb_h$ the vector coinciding with $\bfa_{h'}$ for $h'\in J_h$. We observe that either
\begin{equation}\label{eq:case-one}
\sum_{h=1}^t(\hdist(\bfc_{i_h},\bfb_{j_h})+\hdist(\bfc_{i_{h}},\bfb_{j_{h+1}}))\geq 2\sum_{h=1}^t\hdist(\bfc_{i_h},\bfb_{j_h})
\end{equation}
 or
 \begin{equation}\label{eq:case-two}
\sum_{h=1}^t(\hdist(\bfc_{i_h},\bfb_{j_h})+\hdist(\bfc_{i_{h}},\bfb_{j_{h+1}}))\geq 2\sum_{h=1}^t\hdist(\bfc_{i_h},\bfb_{j_{h+1}}),
\end{equation}
because the sums of the left and right parts of inequalities (\ref{eq:case-one}) and (\ref{eq:case-two}) are the same.

We assume without loss of generality that (\ref{eq:case-one}) holds, as the second case is symmetric. This means that 
\begin{equation}\label{eq:shift}
\sum_{h=1}^t\hdist(\bfc_{i_{h}},\bfb_{j_{h+1}})\geq \sum_{h=1}^t\hdist(\bfc_{i_h},\bfb_{j_h}).
\end{equation}
We iteratively modify $\mathcal{I}'$ by moving a representative of $J_{j_h}$ in $I_{i_{h-1}}$ to $I_{i_h}$ for $h\in\{2,\ldots,t+1\}$, that is, representatives are moved cyclically without changing the cluster sizes. We show that this  procedure does not increase the clustering cost with respect to the medians $\bfc_1,\ldots,\bfc_k$. 

Formally, we construct the $k$-clusterings $\mathcal{I}^{(0)},\mathcal{I}^{(1)},\ldots$,
where $\mathcal{I}^{(p)}=\{I_1^{(p)},\ldots,I_k^{(p)}\}$ for $p=0,1,\ldots$, starting from $\mathcal{I}^{(0)}=\mathcal{I}'$ while 
 $J_{j_{h+1}}\cap I_{i_{h}}^{(p)}\neq\emptyset$ for all $h\in\{1,\ldots,t\}$.

Assume that $\mathcal{I}^{(p)}=\{I_1^{(p)},\ldots,I_k^{(p)}\}$ is constructed and  $J_{j_{h+1}}\cap I_{i_{h}}^{(p)}\neq\emptyset$ for all $h\in\{1,\ldots,t\}$.
For every $h\in\{1,\ldots,t\}$, let $i_h'\in J_{j_{h+1}}\cap I_{i_{h}}^{(p)}$. We define $\mathcal{I}^{(p+1)}=\{I_1^{(p+1)},\ldots,I_k^{(p+1)}\}$ by setting 
\begin{equation*}
I_{i_{h}}^{(p+1)}=(I_{i_{h}}^{(p)}\setminus\{i_{h}'\})\cup\{i_{h-1}'\}
\end{equation*}
for all $h\in\{1,\ldots,t\}$ assuming that $i_{0}'=i_t'$, and we set $I_q^{(p+1)}=I_q^{(p)}$ for $q\in\{1,\ldots,k\}\setminus\{i_1,\ldots,i_t\}$. Clearly,
$|I_i^{(p+1)}|=|I_q^{(p)}|$ for all $i\in\{1,\ldots,r\}$.
We have that
\begin{align*}
\big(\sum_{i=1}^{k}\sum_{j\in I_i^{(p)}}\hdist(\bfc_i,\bfa_j)\big)-\big(\sum_{i=1}^{k}\sum_{j\in I_i^{(p+1)}}\hdist(\bfc_i,\bfa_j)\big)=&
\sum_{h=1}^t(\hdist(\bfc_{i_h},\bfa_{i_{h}'})-\hdist(\bfc_{i_h},\bfa_{i_{h-1}'}))\\
=&\sum_{h=1}^t(\hdist(\bfc_{i_h},\bfb_{j_{h+1}})-\hdist(\bfc_{i_h},\bfb_{j_h}))\\
=&\big(\sum_{h=1}^t(\hdist(\bfc_{i_h},\bfb_{j_{h+1}})\big)-\big(\sum_{h=1}^t\hdist(\bfc_{i_h},\bfb_{j_h})\big)\geq 0,
\end{align*}
where the last inequality follows from (\ref{eq:shift}). This means that the cost of the $k$-clustering $\mathcal{I}^{(p+1)}$ with respect to the medians $\bfc_1,\ldots,\bfc_k$ is at most the cost of $\mathcal{I}^{(p)}$ with respect to the same medians.

The next $k$-clustering $\mathcal{I}^{(p+1)}$ is constructed from $\mathcal{I}^{(p)}$ if $J_{j_{h+1}}\cap I_{i_{h}}^{(p)}\neq\emptyset$ for all $h\in\{1,\ldots,t\}$.
Thus, the sequence is finite, and for the last $k$-clustering $\mathcal{I}^{(q)}$, there is $h\in\{1,\ldots,t\}$
such that  $J_{j_{h+1}}\cap I_{i_{h}}^{(q)}=\emptyset$, that is, $I_{i_{h}}^{(q)}$ and 
$J_{j_{h+1}}$ are not adjacent in $G(\mathcal{I}^{(q)},\mathcal{J})$. Note that the rearrangement elements of clusters does not create new adjacencies in 
$G(\mathcal{I}^{(q)},\mathcal{J})$, because no cluster gets representatives of an initial cluster that have had no representatives in it.
We conclude that $G(\mathcal{I}^{(q)},\mathcal{J})$ has less edges than $G(\mathcal{I}',\mathcal{J})$ but this contradicts the choice of $\mathcal{I}'$.
Therefore,  $G(\mathcal{I}',\mathcal{J})$ is a forest and $\mathcal{I}'$ satisfies conditions (i)--(iii) as required.
\end{proof}
 
 Next, we show that, given a matrix $\bfA$, we can list all potential medians for a $k$-clustering of cost at most $B$ in \classFPT in $B$ time. We show this by making use of the nontrivial result of Marx~\cite{Marx08} about the enumeration of subhypergraphs  with bounded partial edge cover. This result already proved to be very useful for designing \classFPT algorithms for clustering problems~\cite{FominGP20,FominGS21}. 
 
 Recall that a hypergraph $\mathcal{H}$ is a pair $(V,\mathcal{E})$, where $V$ is a set of \emph{vertices} and $\mathcal{E}$ is a family of subsets of $V$ called \emph{hyperedges}. Similarly to graphs, we denote by $V(\mathcal{H})$ the set of vertices and by $\mathcal{E}(\mathcal{H})$ the set of hyperedges. For a vertex $v$, we denote by $\mathcal{E}_{\mathcal{H}}(v)$ the set of hyperedges containing $v$, that is, $\mathcal{E}_{\mathcal{H}}(v)=\{E\in \mathcal{E}(\mathcal{H})\mid v\in E\}$.
 
 Let $\mathcal{G}$ be a hypergraph and let $U\subseteq V(\mathcal{G})$. It is said that a hypergraph $\mathcal{H}$ \emph{appears at $U$ as a subhypergraph} if there is a bijection $\pi\colon V(\mathcal{H})\rightarrow U$ with the property that for every $E\in \mathcal{E}(\mathcal{H})$, there is $E'\in\mathcal{E}(\mathcal{G})$ such that 
 $\pi(E)=E'\cap U$.
 
A \emph{fractional hyperedge cover} of a hypergraph $\mathcal{H}$ is a function $\varphi\colon \mathcal{E}(\mathcal{H})  \rightarrow [0,1]$ such that for every vertex $v\in V(\mathcal{H})$, $\sum_{E\in\mathcal{E}_{\mathcal{H}}(v)}\varphi(E)\geq 1$, that is, the sum of the values assigned by $f$ of the hyperedges containing $v$ is at least one. The \emph{fractional cover number} ${\rho}^*(\mathcal{H})$ of $\mathcal{H}$ is the minimum value $\sum_{E\in \mathcal{E}(\mathcal{H})} \varphi(E)$ taken over all fractional hyperedge covers $\varphi$ of $H$. 

\begin{proposition}[\cite{Marx08}]\label{prop:hypergraph}
Let $\mathcal{H}$ be a hypergraph with fractional cover number
${\rho}^*(\mathcal{H})$, and let $\mathcal{G}$ be a hypergraph whose hyperedges have  size at most $\ell$. There
is an algorithm that enumerates, in  
$|V(\mathcal{H})|^{\Oh(|V(\mathcal{H})|)}\cdot \ell^{|V(\mathcal{H})|\rho^*(\mathcal{H})+1}\cdot |\mathcal{E}(\mathcal{G})|^{\rho^*(\mathcal{H})+1}\cdot |V(\mathcal{G})|^2     $  time, every  $U\subseteq V(\mathcal{G})$ where $\mathcal{H}$ appears at $U$ as subhypergraph  in $\mathcal{G}$. 
\end{proposition}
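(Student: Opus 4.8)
Since Proposition~\ref{prop:hypergraph} is a result of Marx~\cite{Marx08}, the approach for this paper is simply to invoke it as a black box, exactly as in~\cite{FominGP20,FominGS21}. Nevertheless, let me sketch how one would prove it and point to the step that carries all the weight.

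A witness that ``$\mathcal{H}$ appears at $U$'' consists of a bijection $\pi\colon V(\mathcal{H})\to U$ together with, for every $E\in\mathcal{E}(\mathcal{H})$, a hyperedge $\widehat{E}\in\mathcal{E}(\mathcal{G})$ with $\pi(E)\subseteq\widehat{E}$. The brute-force way to list all such witnesses is to branch over the choice of $\widehat{E}$ for each of the $|\mathcal{E}(\mathcal{H})|$ hyperedges of $\mathcal{H}$ and then over the candidate images of the $|V(\mathcal{H})|$ vertices; this already proves finiteness but only gives $|\mathcal{E}(\mathcal{G})|^{|\mathcal{E}(\mathcal{H})|}$ in the exponent, and the content of the proposition is exactly that the fractional cover lets one replace $|\mathcal{E}(\mathcal{H})|$ by $\rho^*(\mathcal{H})+1$. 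I would first record two easy preliminary observations: we may assume every hyperedge of $\mathcal{H}$ has at most $\ell$ vertices (otherwise $\mathcal{H}$ cannot appear anywhere and there is nothing to output), and then, summing the covering constraint $\sum_{E\ni v}\varphi(E)\geq 1$ over all $v$ for an optimal cover $\varphi$, we obtain $|V(\mathcal{H})|\le\sum_{E}\varphi(E)\,|E|\le\ell\cdot\rho^*(\mathcal{H})$; in particular $\mathcal{H}$ has no isolated vertex unless $\rho^*(\mathcal{H})=\infty$.

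I would then run a recursive branching procedure maintaining a partial injection $\pi$ defined on a set $W\subseteq V(\mathcal{H})$ together with a \emph{frame} $F\subseteq\mathcal{E}(\mathcal{G})$ of already-committed hyperedges satisfying $\pi(W)\subseteq\bigcup F$. Call a vertex of $\mathcal{H}$ \emph{locked} once it lies in some $\mathcal{H}$-hyperedge that has been assigned a witness in $F$, so that its image is confined to a known member of $F$. To extend the partial witness: if some unlocked vertex lies only in already-witnessed hyperedges, place it inside its confining frame hyperedge (at most $\ell$ positions, and no new commitment); otherwise pick an unlocked vertex $v$ together with an incident hyperedge $E\ni v$ (by a selection rule chosen so that the amortization below goes through), branch over the at most $|\mathcal{E}(\mathcal{G})|$ choices of a witness $\widehat{E}\in\mathcal{E}(\mathcal{G})$ for $E$, and add $\widehat{E}$ to $F$. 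The key claim is that, with the right selection rule, at most $\rho^*(\mathcal{H})+1$ commitments occur along any root-to-leaf branch — charging each commitment against a budget of total weight $\rho^*(\mathcal{H})=\sum_E\varphi(E)$ and using that every vertex is $\varphi$-covered to weight at least $1$ — so the recursion tree has at most $|\mathcal{E}(\mathcal{G})|^{\rho^*(\mathcal{H})+1}$ leaves. At a leaf the frame $F$ spans at most $(\rho^*(\mathcal{H})+1)\ell$ vertices of $\mathcal{G}$, and the only remaining freedom is to choose the image of each of the $|V(\mathcal{H})|$ vertices among these positions consistently with the incidences of $\mathcal{H}$, that is, at most $\big((\rho^*(\mathcal{H})+1)\ell\big)^{|V(\mathcal{H})|}$ completions; each completion is then verified against $\mathcal{G}$ and, if valid, emitted in $|V(\mathcal{G})|^{\Oh(1)}$ time. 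Multiplying the number of leaves, the number of completions per leaf, and the per-candidate cost, and simplifying, yields the stated bound $|V(\mathcal{H})|^{\Oh(|V(\mathcal{H})|)}\cdot\ell^{|V(\mathcal{H})|\rho^*(\mathcal{H})+1}\cdot|\mathcal{E}(\mathcal{G})|^{\rho^*(\mathcal{H})+1}\cdot|V(\mathcal{G})|^2$.

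The step I expect to be the genuine obstacle — and the reason I would cite~\cite{Marx08} rather than reconstruct it in full — is precisely this amortization: one must choose which unlocked vertex to process and which incident hyperedge to ``open'' so that the number of frame commitments is truly $\Oh(\rho^*(\mathcal{H}))$, rather than the weaker $\Oh(\rho^*(\mathcal{H})\log|V(\mathcal{H})|)$ that a naive greedy set-cover view of the opened hyperedges would give, while at the same time interleaving the cost-free placements of locked vertices and keeping the accounting consistent. The two preliminary observations, the leaf-level counting, and the verification are all routine.
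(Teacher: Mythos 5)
The paper gives no proof of this proposition at all: it is imported verbatim from Marx~\cite{Marx08} and used strictly as a black box, so your decision to cite it rather than reprove it is exactly what the paper does, and for the purposes of this paper that settles the matter.

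That said, the sketch you attach points in a direction that provably cannot reach the stated bound, so it should not be presented as the argument Marx makes. In your branching scheme a vertex of $\mathcal{H}$ can only be placed after it is \emph{locked}, i.e.\ after some hyperedge of $\mathcal{H}$ containing it has been assigned a witness in the frame $F$; hence at every leaf the opened hyperedges of $\mathcal{H}$ form an \emph{integral} edge cover of $V(\mathcal{H})$. Since edge/set cover has integrality gap $\Theta(\log |V(\mathcal{H})|)$, there are hypergraphs with $\rho^*(\mathcal{H})=\Oh(1)$ whose every integral edge cover has $\Omega(\log |V(\mathcal{H})|)$ edges, so no selection rule can guarantee only $\rho^*(\mathcal{H})+1$, or even $\Oh(\rho^*(\mathcal{H}))$, commitments per root-to-leaf branch; the best your accounting can give is $|\mathcal{E}(\mathcal{G})|^{\Oh(\rho^*(\mathcal{H})\log|V(\mathcal{H})|)}$. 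Marx's actual proof does not bound the number of guessed $\mathcal{G}$-edges per branch at all: it bounds the total number of leaves of the search tree by a weighted counting over the vertices of $\mathcal{H}$ using the fractional cover (in the spirit of Shearer's lemma and the Friedgut--Kahn bound of $|\mathcal{E}(\mathcal{G})|^{\rho^*(\mathcal{H})}$ on the number of appearances), interleaving vertex placements with edge guesses so that the product of branching degrees, not their number, is controlled by $\ell^{|V(\mathcal{H})|\rho^*(\mathcal{H})}\cdot|\mathcal{E}(\mathcal{G})|^{\rho^*(\mathcal{H})}$. Your two preliminary observations and the leaf-level verification are fine; only the amortization claim is unsalvageable in the form you state it, which is precisely why the citation, rather than the sketch, has to carry the statement.
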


We apply this result similarly to~\cite{FominGS21} and, therefore, only briefly sketch the proof of the following lemma. 
 
\begin{lemma}\label{lem:enum-means} 
 There is an algorithm that, given a matrix $\bfA=(\bfa_1,\ldots,\bfa_n)$ and a nonnegative integer $B$, in $2^{\Oh(B\log B)}\cdot |\Sigma|^B\cdot (mn)^{\Oh(1)}$ time outputs a set $\mathcal{M}(\bfA,B)\subseteq \Sigma^m$ of size $2^{\Oh(B\log B)}\cdot |\Sigma|^B\cdot (mn)^{\Oh(1)}$ such that for every $k$-clustering $\{I_1,\ldots,I_k\}$ for $\bfA$ of cost at most $B$, there are $\bfc_1,\ldots,\bfc_k\in \mathcal{M}(\bfA,B)$ such that 
 $\sum_{i=1}^k\sum_{j\in I_i}\hdist(\bfc_i,\bfa_j)\leq B$. 
\end{lemma}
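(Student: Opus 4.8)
The plan is to reduce the enumeration of potential medians to Proposition~\ref{prop:hypergraph} via the construction used by Fomin, Golovach and Simonov~\cite{FominGS21}. First I would argue that it suffices to enumerate medians of composite clusters: for a simple cluster $I_i\subseteq J$ with $|I_i|$ large, the optimal median coincides with the common column $\bfb$ of $J$ (by the majority rule / Observation~\ref{obs:big-clust}), and in general an optimal median of a simple cluster may be taken to be that common column; such columns are already among the columns of $\bfA$, so we add all of them to $\mathcal{M}(\bfA,B)$ at no cost. Thus the real task is to list, for every composite cluster $I$ appearing in some $k$-clustering of cost at most $B$, a vector $\bfc$ achieving $\sum_{j\in I}\hdist(\bfc,\bfa_j)$ minimum.

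Next I would fix a composite cluster $I$ and observe two structural facts. Each median $\bfc$ differs from the columns $\bfa_j$, $j\in I$, in at most $B$ coordinates \emph{in total over the cluster}, so $\bfc$ is obtained from (any) column $\bfa_{j_0}$, $j_0\in I$, by changing at most $B$ coordinates. Moreover, by the majority rule we may assume that on every coordinate $\bfc$ takes a value appearing among $\{\bfa_j\}_{j\in I}$ on that coordinate. The crucial point is that the number of distinct columns inside a composite cluster of an optimal solution is bounded: among the at most $2B$ initial clusters touching composite clusters (Observation~\ref{obs:composite}), a single composite cluster can meet at most $B+1$ of them where it disagrees with the common column plus possibly more that agree — in any case, a median $\bfc$ of cost contribution $\leq B$ is determined by $\leq B$ ``correction'' coordinates, each carrying one of $\leq|\Sigma|$ symbols, so the total number of candidate medians obtained from a fixed reference column is $\binom{m}{B}|\Sigma|^{B}$, which is already $|\Sigma|^{B}\cdot(mn)^{\Oh(1)}$-bounded but not $2^{\Oh(B\log B)}$-bounded. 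This is why one cannot simply brute-force and must invoke Marx's theorem.

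To get the $2^{\Oh(B\log B)}$ bound on the exponent of $B$ I would, following~\cite{FominGS21}, encode the problem as a subhypergraph-appearance instance. Build the ground hypergraph $\mathcal{G}$ on vertex set indexed by coordinate–symbol pairs (i.e.\ $\{1,\dots,m\}\times\Sigma$ restricted to pairs realized by some column), with one hyperedge per column $\bfa_j$ recording its symbols; the hyperedges have size $m$, but after the standard preprocessing of~\cite{FominGS21} that collapses coordinates where all relevant columns agree, the relevant hyperedges have size $\Oh(B)$, so $\ell=\Oh(B)$. Take $\mathcal{H}$ to be a pattern hypergraph on $\Oh(B)$ vertices with fractional cover number $\Oh(1)$ (the pattern whose ``appearance'' at a vertex set $U$ certifies that $U$ is the symbol-support of a low-cost median). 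Then Proposition~\ref{prop:hypergraph} enumerates all such $U$ in $B^{\Oh(B)}\cdot B^{\Oh(B)}\cdot n^{\Oh(1)}\cdot(mn)^{\Oh(1)}=2^{\Oh(B\log B)}(mn)^{\Oh(1)}$ time; multiplying by the $|\Sigma|^{\Oh(B)}$ choices of which symbol to place in each of the $\Oh(B)$ ``corrected'' coordinates yields the claimed $2^{\Oh(B\log B)}|\Sigma|^{B}(mn)^{\Oh(1)}$ bound on both the running time and $|\mathcal{M}(\bfA,B)|$. Finally I would verify correctness: for any $k$-clustering $\{I_1,\dots,I_k\}$ of cost $\leq B$, fix optimal medians $\bfc_1,\dots,\bfc_k$ from the majority rule; each simple $\bfc_i$ is a column of $\bfA$ hence in $\mathcal{M}(\bfA,B)$, and each composite $\bfc_i$ has symbol-support realized as a subhypergraph appearance, hence is generated by the enumeration step, so $\bfc_i\in\mathcal{M}(\bfA,B)$; thus $\sum_{i}\sum_{j\in I_i}\hdist(\bfc_i,\bfa_j)\leq B$ with all medians drawn from $\mathcal{M}(\bfA,B)$.

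The main obstacle is the translation of ``low-cost median'' into the precise subhypergraph-appearance pattern $\mathcal{H}$ with the right vertex count and a bounded fractional cover number, together with the preprocessing that bounds the hyperedge size by $\Oh(B)$; this is exactly the technical heart of the argument in~\cite{FominGS21}, and since the statement explicitly allows us to invoke their result and only sketch the proof, I would present this reduction at the level of detail of~\cite[proof sketch]{FominGS21} rather than re-deriving Marx's machinery.
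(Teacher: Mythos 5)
Your high-level plan is the intended one (seed $\mathcal{M}(\bfA,B)$ with the columns of $\bfA$, dispose of large clusters via Observation~\ref{obs:big-clust}, and reduce the enumeration of medians of small composite clusters to Proposition~\ref{prop:hypergraph}), but the execution leaves out precisely the steps that make the reduction work, and one of the steps you do describe would fail as stated. First, your ground hypergraph on coordinate--symbol pairs with one hyperedge per column ``recording its symbols'' has hyperedges of size $m$, and no preprocessing that merely collapses coordinates on which \emph{all} relevant columns agree brings this down to $\Oh(B)$: which columns are relevant depends on the unknown cluster, and two arbitrary columns of $\bfA$ may disagree on arbitrarily many coordinates. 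The paper instead anchors the construction at each candidate reference column $\bfs$ (a column of the cluster): the hypergraph $\mathcal{G}_\bfs$ has vertex set $\{1,\ldots,m\}$ and, for every column $\bfa_i$ with $\hdist(\bfs,\bfa_i)\leq B$, the hyperedge $\{j\mid \bfa_i[j]\neq\bfs[j]\}$, which has size at most $B$ by construction; the triangle inequality guarantees that all columns of a cost-$\leq B$ cluster containing $\bfs$ survive this distance filter. You gesture at the anchoring idea earlier but do not use it where it is needed, namely to get the hyperedge-size bound $\ell=\Oh(B)$ required by Proposition~\ref{prop:hypergraph}.

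Second, you assert the existence of a pattern hypergraph $\mathcal{H}$ on $\Oh(B)$ vertices with fractional cover number $\Oh(1)$ whose appearance certifies the support $D$ of a low-cost median, but this is exactly the statement that has to be proved, and it is not hard once set up correctly: for the disagreement hypergraph of the actual cluster on vertex set $D=\{j\mid \bfc_i[j]\neq\bfs[j]\}$, optimality of $\bfc_i$ under the majority rule forces every $j\in D$ to lie in at least half of the hyperedges, giving $\rho^*\leq 2$ via the uniform weighting $2/|\mathcal{E}|$. Moreover, Proposition~\ref{prop:hypergraph} enumerates appearances of a \emph{given} $\mathcal{H}$, while the algorithm does not know the cluster's hypergraph; one must therefore enumerate all candidate patterns, and this enumeration is where your running-time analysis breaks: there are $2^{\Theta(B^2)}$ hypergraphs on $\leq B$ vertices with $\leq B$ hyperedges, so the naive pattern enumeration only yields $2^{\Oh(B^2)}|\Sigma|^B(mn)^{\Oh(1)}$. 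The paper obtains the claimed bound by first sparsifying: every such hypergraph contains a sub-collection of $\Oh(\log B)$ hyperedges on the same vertex set with $\rho^*\leq 4$ (as in \cite{FominGS21,Marx08}), so it suffices to enumerate the $2^{\Oh(B\log B)}$ patterns with at most $B$ vertices and $\Oh(\log B)$ hyperedges. Your write-up omits both the pattern enumeration and the sparsification, so the $2^{\Oh(B\log B)}$ factor is not accounted for. (A minor additional slip: $\binom{m}{B}|\Sigma|^B$ is not $|\Sigma|^B(mn)^{\Oh(1)}$-bounded; it is $m^{\Theta(B)}$, which is indeed why Marx's theorem is needed, but not for the reason you state.)
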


\begin{proof} 
Let $\mathcal{S}$ be the set of distinct columns of $\bfA$. Initially, we set $\mathcal{M}(\bfA,B):=\mathcal{S}$.

For every $\bfs\in\mathcal{S}$,
we construct the hypergraph $\mathcal{G}_\bfs$  with the vertex set $\{1,\ldots,m\}$ with 
hyperedges corresponding to the columns of $\bfA$ at Hamming distance at most $B$ from $\bfs$: for every $i\in\{1,\ldots,n\}$ such that $\hdist(\bfs,\bfa_i)\leq B$, we introduce the hyperedge
\begin{equation*}
E_i=\{j\mid 1\leq j\leq m\text{ and }\bfa_i[j]\neq \bfs[j]\}, 
\end{equation*}
that is, the hyperedge contains indices, where $\bfs$ differs from $\bfa_i$. Note that $|E_i|\leq B$.

Consider an arbitrary  $k$-clustering $\{I_1,\ldots,I_k\}$ for $\bfA$ of cost at most $B$. Let $i\in \{1,\ldots,k\}$ and let $\bfs\in \mathcal{S}$ be such that $\bfs=\bfa_j$ for some $j\in I_i$. Let also $\bfc_i\in\Sigma^m$ be an optimal median for $I_i$, that is,
$\sum_{j\in I_i}\hdist(\bfc_i,\bfa^j)$ is minimum. 
 Notice that if $|I_i|\geq B+1$, then by Observation~\ref{obs:big-clust}, every feasible medians for $I_i$ is a column of $\bfA$ and these columns are already placed in $\mathcal{M}(\bfA,B)$. Also if $\bfc_i=\bfs$, then $\bfc_i\in \mathcal{M}(\bfA,B)$.
Assume that $|I_i|\leq B$ and $\bfc_i\neq\bfs$.
 Clearly, $\hdist(\bfc_i,\bfs)\leq B$. Moreover, for any $j\in I_i$, 
$\hdist(\bfs,\bfa_j)\leq B$. This holds trivially if $\bfs=\bfa_j$. Otherwise, if $\bfs\neq \bfa_j$, we have that 
$\hdist(\bfs,\bfa_j)\leq \hdist(\bfc_i,\bfs)+\hdist(\bfc_i,\bfa_j)\leq \sum_{j\in I_i}\hdist(\bfc_i,\bfa_j)\leq B$.
Let 
\begin{equation*}
D=\{j\mid 1\leq j\leq m\text{ and }\bfc_i[j]\neq \bfs[j]\},
\end{equation*}
that is, $D$ is the set of indices where $\bfs$ differs from the median $\bfc_i$.

We consider the hypergraph $\mathcal{H}_i$ with the vertex set $D$ whose edges correspond to the columns $\bfa_j$ for $j\in I_i$. For each $j\in I_i$, we  construct the hyperedge 
\begin{equation*}
F_j=\{h\mid h\in D\text{ and }\bfa_j[h]\neq \bfs[h]\}, 
\end{equation*}
that is, each hyperedge contains indices from $D$, where $\bfs$ differs from $\bfa_j$. We claim that the fractional cover number $\rho^*(\mathcal{H}_i)\leq 2$.

To show this, we define the function $\varphi(F)=\frac{2}{|\mathcal{E}(\mathcal{H}_i)|}$ for every hyperedge $F$ of $\mathcal{H}_i$. We prove that $\varphi$ is a fractional hyperedge cover of $\mathcal{H}_i$. Thus, we have to show that for every $j\in D$,  $\sum_{F\in\mathcal{E}_{\mathcal{H}_i}(j)}\varphi(F)\geq 1$. This is equivalent to proving that for every $j\in D$, at least half of the hyperedges of $\mathcal{H}_i$ contain $j$. Assume that this is not the case, i.e., there is $j\in D$ such that more than half of hyperedges do not contain $j$. This mean that for more than half of columns $\bfa_h$ for $h\in I_i$, $\bfs[j]=\bfa_h[j]=s$. However, by the definition of $D$, $\bfs[j]\neq \bfc_i[j]$ and, therefore, $\bfc_i[j]\neq s$. This contradicts the assumption that $\bfc_i$ is an optimal median for $I_i$, because replacing the current value $\bfc_i[j]$ by $s$ decreases the cost. Hence, $\varphi$ is a fractional hyperedge cover. Then 
\begin{equation*}
\rho^*(\mathcal{H}_i)\leq \sum_{F\in \mathcal{E}(\mathcal{H}_i)}\varphi(F)=\sum_{F\in \mathcal{E}(\mathcal{H}_i)}\frac{2}{|\mathcal{E}(\mathcal{H}_i)|}=2.
\end{equation*}

Observe that $\mathcal{H}_i$ appears in $\mathcal{G}_\bfs$ at $D$, because for each $j\in I_i$, $\hdist(\bfs,\bfa_j)\leq B$, that is, for every $j\in I_i$, $\mathcal{G}_\bfs$ contains the hyperedge $E_j$ corresponding to $\bfa_j$; the mapping $\pi\colon V(\mathcal{H}_i)\rightarrow D$ is the identity function. 

We obtain that $\mathcal{H}_i$ is a hypergraph with the fractional cover number at most 2 that appears in   $\mathcal{G}_\bfs$ at $D$. Notice that, given $\bfs$ and $D$, we can list the vectors over $\Sigma^m$ that differ from $\bfs$ in the indices from $D$ and the total number of such vectors is at most $|\Sigma|^B$, because $|D|\leq B$.
Then $\bfc_i$ appears in this list.
This leads to the following algorithm. We consider all hypergraphs $\mathcal{H}$ on at most $B$ vertices with at most $B$ hyperedges. Then for each $\mathcal{H}$ and every $\bfs\in\mathcal{S}$, we use the algorithm of Marx from Proposition~\ref{prop:hypergraph} to  enumerate every  $D\subseteq V(\mathcal{G}_\bfs)$ where $\mathcal{H}$ appears in $\mathcal{G}_D$ as subhypergraph. Then for every $D$, we list  the vectors that differ from $\bfs$ in the indices from $D$ by brute force. Then these vectors are included in $\mathcal{M}(\bfA,B)$.

 For given $\mathcal{H}$ and $\bfs$, the sets $D$ can be enumerated in time 
$2^{\Oh(B\log B)}\cdot B^{2B+1}\cdot n^{3}\cdot m^2$ by Proposition~\ref{prop:hypergraph}. Then generating the vectors that differ from $\bfs$ in $D$ can be done in 
$|\Sigma|^B\cdot n^{\Oh(1)}$ time as we can assume that $|\Sigma|\leq n$. However, we need $2^{\Oh(B^2)}$ time to generate all hypergraphs with at most $B$ vertices and at most $B$ hyperedges. This gives the total running time $2^{\Oh(B^2)}\cdot|\Sigma|^B\cdot (mn)^{\Oh(1)}$ and the same bound on the size of $\mathcal{M}(\bfA,B)$.

The running time can be improved by proving that there is a subhypergraph $\mathcal{H}_i'$ of $\mathcal{H}$ with $V(\mathcal{H}_i')=V(\mathcal{H}_i)$ and  
$\mathcal{E}(\mathcal{H}_i')\subseteq \mathcal{E}(\mathcal{H}_i)$ of size $\Oh(\log B)$ (more precisely, of size at most $160\ln B$) such that $\rho^*(\mathcal{H}_i')\leq 4$. The proof is identical to the proof of Claim~18 of~\cite{FominGS21} (see also Proposition~6.3 of \cite{Marx08}) and we omit it here. 

Then we consider all hypergraphs  $\mathcal{H}$ with at most $B$ vertices and at most $160\ln B$ hyperedges. The total number of these hypergraphs is $2^{\Oh(B\log B)}$. Then, in the same way as above, for each $\mathcal{H}$ and every $\bfs\in\mathcal{S}$, we use the algorithm of Marx from Proposition~\ref{prop:hypergraph} to  enumerate every  $D\subseteq V(\mathcal{G}_\bfs)$ where $\mathcal{H}$ appears in $\mathcal{G}_D$ as subhypergraph. 
For  every $D$, the vectors that differ from $\bfs$ in the indices from $D$ are enumerate by  brute force and each vector is added to $\mathcal{M}(\bfA,B)$, unless it is already included in the set. The total running time is $2^{\Oh(B\log B)}\cdot |\Sigma|^B\cdot (mn)^{\Oh(1)}$ and the number of vectors in $\mathcal{M}(\bfA,B)$ is  $2^{\Oh(B\log B)}\cdot |\Sigma|^B\cdot (mn)^{\Oh(1)}$. 
\end{proof}

\subsection{Algorithm}\label{subsec:alg} 
Let $(\bfA,\Sigma,k,B,p,q)$ be an instance of \probCClust with $\bfA=(\bfa_1,\ldots,\bfa_n)$. 
First, we compute the partition $\mathcal{J}=\{J_1,\ldots,J_s\}$
of $\{1,\ldots,n\}$ into initial clusters. 

By Lemma~\ref{lem:forest}, if $(\bfA,\Sigma,k,B,p,q)$ is a yes-instance, then there is a solution $\mathcal{I}=\{I_1,\ldots,I_k\}$ to the instance such that
the intersection graph  $G(\mathcal{I},\mathcal{J})$ of the initial clusters and the clusters of the solution is a forest. We call such a solution (or $k$-clustering) \emph{acyclic}. To solve the problem, we check whether the considered instance has an acyclic solution. To simplify notation, we assume that all solution considered further on are acyclic.

By Observation~\ref{obs:composite}, any  $k$-clustering  for $\bfA$  of cost at most $B$ has at most $B$ composite clusters. We consecutively  consider $t=0,\ldots,\min\{B,k\}$, and for each $t$, we verify whether there is a solution  $\mathcal{I}=\{I_1,\ldots,I_k\}$ with exactly $t$ composite clusters. If we find such a solution, then we return the yes-answer and stop. Otherwise, if we have no solution for all the values of $t$, we report that $(\bfA,\Sigma,k,B,p,q)$ is a no-instance. From now, we assume that nonnegative $t\leq\min\{B,k\}$ is fixed.

It is convenient to consider the special case $t=0$ separately. If $t=0$, then a solution $\mathcal{I}$ has no composite cluster, that is, the clusters of the solution form partitions of the initial clusters. Observe that $\cost(\mathcal{I})=0\leq B$ in this case. 
By Observation~ \ref{obs:blocks}, the initial clusters can be partitioned into $k$ blocks of size at least $p$ and at most $q$, if and only if there are positive integers $h_1,\ldots,h_s$ such that $k=h_1+\cdots+h_s$ and 
$\Big\lceil\frac{|J_i|}{q}\Big\rceil\leq h_i\leq \Big\lfloor\frac{|J_i|}{p}\Big\rfloor$ for every $i\in\{1,\ldots,s\}$. For every $i\in\{1,\ldots,s\}$, we verify whether 
 $\Big\lceil\frac{|J_i|}{q}\Big\rceil\leq \Big\lfloor\frac{|J_i|}{p}\Big\rfloor$.  If at least one of the inequalities does not hold, the required $h_1,\ldots,h_s$ do not exist. Otherwise, we observe that positive integers $h_1,\ldots,h_s$ such that $k=h_1+\cdots+h_s$ and 
$\Big\lceil\frac{|J_i|}{q}\Big\rceil\leq h_i\leq \Big\lfloor\frac{|J_i|}{p}\Big\rfloor$ for every $i\in\{1,\ldots,s\}$ exist if and only if
$\sum_{i=1}^s\Big\lceil\frac{|J_i|}{q}\Big\rceil\leq k\leq \sum_{i=1}^s \Big\lfloor\frac{|J_i|}{p}\Big\rfloor$.
Then we verify the last inequality.

From now, we assume that $t\geq 1$. Note that we also can assume that $B\geq 1$, because for $B=0$, no cluster of a solution can be composite.

By Observation~\ref{obs:composite}, there are at most $2B$ initial clusters with nonempty intersections with the composite clusters of a solution $\mathcal{I}$. Since $G(\mathcal{I},\mathcal{J})$ is a forest, it is easy to observe that at least $t+1$ initial clusters have nonempty intersections with the composite clusters. We consider $\ell=t+1,\ldots,2B$, and for each $\ell$, we check whether  there is a solution  $\mathcal{I}=\{I_1,\ldots,I_k\}$ such that exactly $\ell$ initial clusters have nonempty intersections with the composite clusters of $\mathcal{I}$. If we find such a solution, then we return the yes-answer and stop. Otherwise, if we have no solution for all the values of $\ell$, we report that $(\bfA,\Sigma,k,B,p,q)$ is a no-instance. 
From now, we assume that positive $t+1\leq \ell\leq 2B$ is given.

 We use Lemma~\ref{lem:enum-means} to construct the set $\mathcal{M}=\mathcal{M}(\bfA,B)$ of potential medians. Recall that this set has size $2^{\Oh(B\log B)}|\Sigma|^B\cdot (mn)^{\Oh(1)}$ and can be computed in $2^{\Oh(B\log B)}|\Sigma|^B\cdot (mn)^{\Oh(1)}$ time. 
 For a $k$-clustering $\mathcal{I}=\{I_1,\ldots,I_k\}$, we define the \emph{minimum cost (with respect to $\mathcal{M}$)}, as 
 \begin{equation*}
 \min\{\sum_{i=1}^k\sum_{j\in I_i}\hdist(\bfc_i,\bfa_j)\mid \bfc_1,\ldots,\bfc_k\in \mathcal{M}\}.
 \end{equation*}
 If $(\bfA,\Sigma,k,B,p,q)$ is a yes-instance, then it has a solution  such that the medians are in $\mathcal{M}$ by Lemma~\ref{lem:enum-means}. 
Therefore, solving the problem is equivalent to finding a clustering of minimum cost at most $B$ with respect to $\mathcal{M}$. Throughout this section, whenever we say that $\mathcal{I}$ is a clustering of minimum cost, we mean that the cost is minimum with respect to $\mathcal{M}$. 

We use the \emph{color coding} technique of Alon, Yuster and Zwick~\cite{AlonYZ95} (see~\cite[Chapter~5]{CyganFKLMPPS15} for the detailed introduction). We first give a Monte Carlo algorithm with false negatives and then explain how to derandomize it. The main idea is to highlight the initial clusters with nonempty intersections with clusters of a potential solution. We color the initial clusters by $\ell$ colors uniformly at random. We say that a $k$-clustering $\mathcal{I}=\{I_1,\ldots,I_k\}$ of cost at most $B$ is a \emph{colorful} solution if the initial clusters with nonempty intersections  with the clusters of $\mathcal{I}$ have distinct colors. As it is standard for color coding, the algorithm exploits the property that if there is a solution such that exactly $\ell$ initial clusters have nonempty intersections with the composite clusters of the solution, then the probability that these $\ell$ clusters get distinct colors in a random coloring is  
at least $\frac{\ell!}{\ell^\ell}\geq e^{-\ell}\geq e^{-2B}$. Therefore, with probability at least $e^{-2B}$, a yes-instance admits a colorful solution.  

Our next task is to explain how to verify that there is a colorful solution for a given random coloring $\psi\colon\mathcal{J}\rightarrow \{1,\ldots,\ell\}$.

We need some auxiliary notation. For a set of colors $X\subseteq \{1,\ldots,\ell\}$, we use $\mathcal{J}(X)\subseteq \mathcal{J}$ to denote the subset of initial clusters with the colors from $X$ and $C(X)\subseteq\{1,\ldots,n\}$ is used to denote the set of indices in the initial clusters with their colors in $X$, that is, $C(X)=\cup_{J\in \mathcal{J}(X)}J$.  We also denote $\bfA(X)=\bfA[\{1,\ldots,m\},C(X)]$, that is, $\bfA(X)$ is the submatrix of $\bfA$ with the columns $\bfa_i$ such that $i\in C(X)$.    

Recall that we are looking for an acyclic solution $\mathcal{I}=\{I_1,\ldots,I_k\}$, that is, $G(\mathcal{I},\mathcal{J})$ is required to be  a forest. Let $\mathcal{I}$ be such a $k$-clustering. Let $\mathcal{I}'\subseteq \mathcal{I}$ be the set of composite clusters and let $\mathcal{J}'\subseteq \mathcal{J}$ be the set of initial clusters having nonempty intersections with the composite clusters. The main idea behind our algorithm for finding a colorful solution is to guess the structure of the forest $G(\mathcal{I}',\mathcal{J}')=G(\mathcal{I},\mathcal{J})[\mathcal{I}'\cup\mathcal{J}']$ and then do dynamic programming over it. Recall that $|\mathcal{I}'|=t$ and $|\mathcal{J}'|=\ell$ by our assumptions.  Note also that the leaves of $G(\mathcal{I}',\mathcal{J}')$ are initial clusters and every connected component of this forest contains at least three vertices.

We consider all forests $F$ on $t+\ell$ vertices such that  (i) each connected component of $F$ has at least three vertices, and
(ii) $F$ admits a bipartition $(U,W)$ of its vertex set with $|U|=t$ and $|W|=\ell$ such that  the leaves of $F$ are in $W$.  
Since $t\leq B$ and $\ell\leq 2B$, the number of such forests is $2^{\Oh(B)}$~\cite{Otter48} and they can be listed in $2^{\Oh(B)}$ time (see, e.g.,~\cite{WrightROM86}).  Note that since the leaves required to be in $W$, the bipartition $(U,W)$ is unique.
For a given forest $F$,  
we say that an acyclic $k$-clustering $\{I_1,\ldots,I_k\}$ for $\bfA$ is a \emph{feasible} (with respect to $F$ and the parameters $t$ and $\ell$) if the following holds:
\begin{itemize}
\item[(i)] $p\leq |I_i|\leq q$ for $i\in\{1,\ldots,k\}$,
\item[(ii)] the set $\mathcal{I}'\subseteq \mathcal{I}$ of composite clusters has size $t$ and the set $\mathcal{J}'\subseteq \mathcal{J}$ of initial clusters 
having nonempty intersections with the composite clusters has size $\ell$,
\item[(iii)] the initial clusters in $\mathcal{J}'$ are colored by distinct colors by $\psi$, and
\item[(iv)] $G(\mathcal{I}',\mathcal{J}')$ is isomorphic to $F$ with an isomorphism that bijectively maps $\mathcal{I}'$ to $U$ and $\mathcal{J}'$ to $W$.
\end{itemize}
The problem of finding a colorful solution boils down to checking whether there is $F$ such that there is a feasible $k$-clustering of cost at most $B$. We do the check by considering all the forests $F$. If we find that there is a feasible $k$-clustering  of cost at most $B$, we stop and return the yes-answer. Otherwise, we conclude that there is no colorful solution. 

Assume that a forest $F$ with the bipartition $(U,W)$ is given. Denote by $F_1,\ldots,F_f$ the connected components of $F$. 
Let $U_i=V(F_i)\cap U$ and $W_i=V(F_i)\cap W$ for $i\in\{1,\ldots,f\}$. Let also $t_i=|U_i|$ and $\ell_i=|W_i|$ for $i\in\{1,\ldots,f\}$.  

For $i\in\{1,\ldots,f\}$, $X\subseteq \{1,\ldots\ell\}$  and a positive integer $h\leq k$, denote by $\omega_i(X,h)$ the minimum cost of an $h$-clustering for $\bfA(X)$ 
that is feasible with respect to $F_i$ and the parameters $t_i$ and $\ell_i$ if $|X|=\ell_i$. 
We assume that $\omega_i(X,h)=+\infty$ if  $|X|\neq \ell_i$ or there is no an $h$-clustering that is feasible. 

We show that if we are given the tables of values of  $\omega_i(X,h)$, then we can verify whether  there a feasible $k$-clustering  of cost at most $B$.

\begin{lemma}\label{lem:comp-forest}
Given the values $\omega_i(X,h)$ for all $i\in\{1,\ldots,f\}$, $X\subseteq \{1,\ldots\ell\}$ and  positive integers $h\leq k$, 
it can decided in time $2^{\Oh(B)}\cdot n^2$ whether there is a feasible $k$-clustering for $\bfA$ of cost at most $B$ with respect to $F$, $t$ and $\ell$.
\end{lemma}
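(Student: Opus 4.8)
The plan is to use that the desired $k$-clustering decomposes along the connected components of $F$ into independent subproblems on pairwise disjoint sets of columns, that the numbers $\omega_i(X,h)$ are exactly the optima of those subproblems, and that what remains is a routine ``assignment'' dynamic program over the components.

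Concretely, I would first prove the following characterization: there is a feasible $k$-clustering for $\bfA$ of cost at most $B$ (with respect to $F$ and the parameters $t$ and $\ell$) if and only if there exist a partition $\{1,\ldots,\ell\}=X_1\sqcup\cdots\sqcup X_f$ with $|X_i|=\ell_i$ and positive integers $h_1,\ldots,h_f$ with $h_1+\cdots+h_f=k$ such that $\sum_{i=1}^f\omega_i(X_i,h_i)\le B$. For the forward implication, take a feasible $\mathcal{I}=\{I_1,\ldots,I_k\}$ with composite clusters $\mathcal{I}'$, touched initial clusters $\mathcal{J}'$, and an isomorphism $G(\mathcal{I}',\mathcal{J}')\to F$ as in~(iv); it maps the connected components of $G(\mathcal{I}',\mathcal{J}')$ bijectively onto $F_1,\ldots,F_f$, so let $\mathcal{I}'_i,\mathcal{J}'_i$ collect the composite clusters and the touched initial clusters whose component goes to $F_i$ and put $X_i:=\psi(\mathcal{J}'_i)$. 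Since the $\ell$ clusters of $\mathcal{J}'$ get pairwise distinct colors (condition~(iii)) and there are only $\ell$ colors, $|X_i|=\ell_i$ and $X_1\sqcup\cdots\sqcup X_f=\{1,\ldots,\ell\}$. The structural heart is that $G(\mathcal{I}',\mathcal{J}')$ has no edge between distinct components, so every composite cluster of $\mathcal{I}'_i$ meets only initial clusters of $\mathcal{J}'_i$ (all colored from $X_i$); together with the fact that an initial cluster outside $\mathcal{J}'$ is split among simple clusters only, this shows that every cluster of $\mathcal{I}$ lies inside exactly one $C(X_i)$ and that the clusters lying inside $C(X_i)$ partition $C(X_i)$. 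Hence the restriction of $\mathcal{I}$ to $\bfA(X_i)$ is an $h_i$-clustering, with $h_i\ge1$ the number of those clusters and $h_1+\cdots+h_f=k$, and it is feasible with respect to $F_i$, $t_i$, $\ell_i$ since all of (i)--(iv) are inherited. Because the cost of a clustering is the sum of its per-cluster costs, $\cost(\mathcal{I})=\sum_{i=1}^f(\text{cost of }\mathcal{I}\text{ restricted to }\bfA(X_i))\ge\sum_{i=1}^f\omega_i(X_i,h_i)\le B$. For the converse, pick such $X_i$ and $h_i$; since each $\omega_i(X_i,h_i)\le B$ is finite, there is an $h_i$-clustering of $\bfA(X_i)$ feasible with respect to $F_i$, $t_i$, $\ell_i$ of cost $\omega_i(X_i,h_i)$, and taking their union (the $C(X_i)$ partition $\{1,\ldots,n\}$ as every initial cluster is colored) yields a $k$-clustering that is acyclic, has $\sum_i t_i=t$ composite clusters and $\sum_i\ell_i=\ell$ touched initial clusters with pairwise distinct colors, has $G(\mathcal{I}',\mathcal{J}')$ isomorphic to $F$, respects $p\le|I|\le q$ cluster by cluster, and has cost $\sum_i\omega_i(X_i,h_i)\le B$.

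Given the characterization, I would compute $\min\sum_{i=1}^f\omega_i(X_i,h_i)$ by a dynamic program over $F_1,\ldots,F_f$. For $j\in\{0,\ldots,f\}$, $Y\subseteq\{1,\ldots,\ell\}$ and $g\in\{0,\ldots,k\}$, let $D[j,Y,g]$ be the minimum of $\sum_{i=1}^j\omega_i(X_i,h_i)$ over partitions $Y=X_1\sqcup\cdots\sqcup X_j$ with $|X_i|=\ell_i$ and positive integers $h_i$ summing to $g$; set $D[0,\emptyset,0]=0$ and $D[0,Y,g]=+\infty$ otherwise, and use
\[
D[j,Y,g]=\min_{\substack{X\subseteq Y,\ |X|=\ell_j\\ 1\le h\le g}}\big(D[j-1,\,Y\setminus X,\,g-h]+\omega_j(X,h)\big);
\]
then a feasible $k$-clustering of cost at most $B$ exists if and only if $D[f,\{1,\ldots,\ell\},k]\le B$. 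Correctness is immediate from the characterization. For the running time, $f\le t+\ell\le 3B$ and $\ell\le 2B$, the number of pairs $(Y,X)$ with $X\subseteq Y\subseteq\{1,\ldots,\ell\}$ is $3^{\ell}=2^{\Oh(B)}$, and $g,h\le k\le n$, so the table is filled in $2^{\Oh(B)}\cdot n^2$ time with each update an $\Oh(1)$ lookup in the supplied $\omega_i$-tables. The step I expect to be the main obstacle is the decomposition itself: one must verify that each $C(X_i)$ is a union of whole clusters of $\mathcal{I}$ — which is exactly where the component structure of $F$ and the rainbow condition~(iii) enter — and that the four feasibility conditions both descend to and re-ascend from the components; granted that, the dynamic program is standard.
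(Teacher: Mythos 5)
Your proposal is correct and follows essentially the same route as the paper: the same decomposition of a feasible clustering along the connected components of $F$ (using that the $\ell$ touched initial clusters receive all $\ell$ colors, so the color classes of the components partition $\{1,\ldots,\ell\}$ and each $C(X_i)$ is a union of whole clusters of the solution), followed by the same subset-plus-budget dynamic program over $F_1,\ldots,F_f$ with the $3^{\ell}\cdot k^2$ accounting giving $2^{\Oh(B)}\cdot n^2$. The only difference is presentational: the paper folds the characterization into the correctness proof of its recurrence for prefix values $w^{(j)}(X,h)$, whereas you state it as a separate equivalence and then run a purely combinatorial DP.
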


\begin{proof} To give the intuition behind the proof,
observe that a feasible $k$-clustering of cost at most $B$ with respect to $F$, $t$ and $\ell$ exists if and only if
there are positive integers $h_1,\ldots,h_f$ such that $h_1+\cdots+h_f=k$ and a partition $\{X_1,\ldots,X_f\}$ of $\{1,\dots,\ell\}$ such that 
\begin{equation*}\label{eq:comp-forest}
\omega_1(X_1,h_1)+\cdots+\omega_f(X_f,h_f)\leq B,
\end{equation*}
because in a feasible clustering the initial clusters in $\mathcal{J}'$ are colored by distinct colors. This leads to the following dynamic programming algorithm 

For $j\in\{1,\ldots,f\}$, $X\subseteq \{1,\ldots,\ell\}$, let $F^{(j)}$ be the disjoint union of $F_1,\ldots,F_j$, $t^{(j)}=t_1+\cdots+t_j$ and $\ell^{(j)}=\ell_1+\cdots+\ell_j$.
For $j\in\{1,\ldots,f\}$, $X\subseteq \{1,\ldots,\ell\}$ and positive integer $h$, denote by $w^{(j)}(X,h)$ the minimum cost of an $h$-clustering for $\bfA(X)$ that is feasible with respect to $F^{(j)}$, $t^{(j)}$ and $\ell^{(j)}$ if $|X|=\ell^{(j)}$; we also assume that  $w^{(j)}(X,h)=+\infty$ if $|X|\neq \ell^{(j)}$ or 
there is no feasible $h$-clustering.
Notice that $\omega_{1}(X,h)=w^{(1)}(X,h)$ and 
$w^{(f)}(X,h)$ is the minimum cost of an $h$-clustering for $\bfA(X)$ that is feasible with respect to $F$, $t$ and $\ell$.
Thus, $w^{(f)}(\{1,\ldots,\ell\},r)\leq B$ if and only if there is a feasible $k$-clustering for $\bfA$ of cost at most $B$ with respect to $F$, $t$ and $\ell$.

We compute the values of $w^{(j)}(X,h)$ for $j=1,2,\ldots,f$. As we observed,  $w^{(1)}(X,h)=\omega_1(X,h)$. To compute $w^{(j)}(X,h)$ for $j\geq 2$, we use the following recurrence:
\begin{equation}\label{eq:rec}
w^{(j)}(X,h)=\min\{\omega_j(Y,h')+w^{(j-1)}(X\setminus Y,h-h')\mid 1\leq h'<h\text{ and }\emptyset\neq Y\subset X \};
\end{equation}
we also assume that $w^{(j)}(X,h)=+\infty$ if the set in the right part of (\ref{eq:rec}) is empty. 

The correctness of (\ref{eq:rec}) is proved in the standard way by showing the two opposite inequalities. Let $X\subseteq \{1,\ldots,\ell\}$.
To simplify notation, assume that $\{J_1,\ldots,J_{s'}\}$ are initial clusters with colors from $X$. Let also $h\leq k$ be a positive integer. 

Suppose that $|X|=\ell^{(j)}$
and $\{I_1,\ldots,I_h\}$ is an $h$-clustering for $\bfA(X)$ that is feasible with respect to $F^{(j)}$, $t^{(j)}$ and $\ell^{(j)}$ of minimum cost.   
Let  $\mathcal{I}'\subseteq \{I_1,\ldots,I_h\}$ be the set of composite clusters and let $\mathcal{J}'\subseteq \{J_1,\ldots,J_{s'}\}$ be the set of initial clusters 
having nonempty intersections with the composite clusters. Recall that $|\mathcal{I}'|=t^{(j)}$, $|\mathcal{J}'|=\ell^{(j)}$, and the initial clusters in $\mathcal{J}'$ are colored by distinct colors. Consider an isomorphism $\alpha$ that bijectively maps the vertices of $G(\mathcal{I}',\mathcal{J}')$ to the vertices of $F$ with the property that 
the vertices of $\mathcal{I}'$ are mapped to $\bigcup_{i=1}^{(j)}U_i$ and $\mathcal{J}'$ are mapped  to $\bigcup_{i=1}^{(j)} W_i$.
Then $\ell_j$ clusters of $\mathcal{J}'$ are mapped to $W_j$. Denote  
by $Y\subset X$ the set of their colors. Clearly, $|Y|=\ell_j$ and $|X\setminus Y|=\ell^{(j)}-\ell_j=\ell^{j-1}$.
Notice that the clusters of $\mathcal{I}'$ that are mapped to $U_j$ are composed by elements of initial clusters with colors from $Y$ and no other composite cluster contains an element of an initial cluster with a color from $Y$.   
To simplify notation, assume that the clusters $I_1,\ldots,I_{h'}$ contain elements of the initial clusters with the colors from $Y$ and $I_{h'+1},\ldots,I_h$ are the clusters containing elements of the initial clusters with the colors from $X\subseteq Y$. Then we have that $\{I_1,\ldots,I_{h'}\}$ is a feasible $h'$-clustering for $\bfA(Y)$ with respect to $F_j$, $t_j$ and $\ell_j$. Similarly, we obtain that $\{I_{h'+1},\ldots,I_h\}$ is  a feasible $(h-h')$-clustering for $\bfA(X\setminus Y)$ with respect to $F^{(j-1)}$, $t^{(j-1)}$ and $\ell^{(j-1)}$.  Thus, $w^{(j)}(X,h)\geq\omega_j(Y,h')+w^{(j-1)}(X\setminus Y,h-h')$ and, therefore, 
\begin{equation}\label{eq:rec-one}
w^{(j)}(X,h)\geq \min\{\omega_j(Y,h')+w^{(j-1)}(X\setminus Y,h-h')\mid 1\leq h'<h\text{ and }\emptyset\neq Y\subset X \}.
\end{equation}
If either $|X|\neq \ell^{(j)}$ or there is  no an $h$-clustering for $\bfA(X)$ that is feasible with respect to $F^{(j)}$, $t^{(j)}$ and $\ell^{(j)}$, then $w^{(j)}(X,h)=+\infty$ and (\ref{eq:rec-one}) is trivial.

To show the opposite inequality, let nonempty $Y\subseteq X$ and positive $h'<h$ be such that the right part of (\ref{eq:rec}) is minimum. If 
$\omega_j(Y,h')+w^{(j-1)}(X\setminus Y,h-h')=+\infty$, then the required inequality holds trivially. Assume that 
this is not the case. 
Then $|Y|=\ell_j$, $|X\setminus Y|=\ell^{(j-1)}$,
there is an $h'$-clustering $\mathcal{I}^{(1)}$ for $\bfA(Y)$ of cost $\omega_j(Y,h')$ that is feasible with respect to $F_j$, $t_j$ and $\ell_j$, and there is an $(h-h')$-clustering $\mathcal{I}^{(2)}$ for $\bfA(X\setminus Y)$ of cost $w^{(j-1)}(X\setminus Y,h-h')$ that is feasible with respect to $F^{(j-1)}$, $t^{(j-1)}$ and $\ell^{(j-1)}$. Consider $\mathcal{I}=\mathcal{I}^{(1)}\cup\mathcal{I}^{(2)}$ and observe that this is an $h$-clustering for $\bfA(X)$ that is feasible with respect to $F^{(j)}$, $t^{(j)}$ and $\ell^{(j)}$. This means that 
$w^{(j)}(X,h)\leq \omega_j(Y,h')+w^{(j-1)}(X\setminus Y,h-h')$. By the choice of $Y$ and $h'$,
\begin{equation}\label{eq:rec-two}
w^{(j)}(X,h)\leq \min\{\omega_j(Y,h')+w^{(j-1)}(X\setminus Y,h-h')\mid 1\leq h'<h\text{ and }\emptyset\neq Y\subset X \}.
\end{equation}
 
 Combining (\ref{eq:rec-one}) and (\ref{eq:rec-two}), we obtain that the recurrence (\ref{eq:rec}) holds.
 
 Finally, we compute $w^{(f)}(X,h)$ for all $X\subseteq \{1,\ldots,\ell\}$ and all positive $h\leq r$. In particular, we find $w^{(f)}(\{1,\ldots,\ell\},k)$ and verify whether this value is at most $B$. 
 
 To evaluate the running time, note that to compute the table of values of $w^{(j)}(X,h)$ by (\ref{eq:rec}), we consider all nonempty $X$ of size at most $\ell$ and the nonempty subsets $Y\subset X$. This means that we consider at most $3^\ell$ pairs of sets. Also we consider all positive $h\leq k$ and $h'\leq h$, that is, at most $k^2$ pairs of integers. Since $\ell\leq 2B$ and $k\leq n$, the computations can be done it $2^{\Oh(B)}\cdot n^2$ time. Since $f\leq t \leq B$, the total running time is $2^{\Oh(B)}\cdot n^2$. 
\end{proof}

By Lemma~\ref{lem:comp-forest}, we have to compute the tables of values of $\omega_i(X,h)$ for all $i\in\{1,\ldots,f\}$, nonempty $X\subseteq\{1,\ldots,\ell\}$ and positive $h\leq k$. For this, we use the fact that $F_1,\ldots,F_f$ are trees and this allows us to use dynamic programming over these trees. 

\begin{lemma}\label{lem:tree}
Let $T$ be a tree with a bipartition $(U,W)$ of its vertex set such that  $t'=|U|\leq t$, $\ell'=|W|\leq \ell$ and 
 the leaves of $T$ are in $W$. For a given $X\subseteq \{1,\ldots,\ell\}$ with $|X|=\ell'$ and positive $h\leq k$, the minimum cost of a feasible $h$-clustering for $\bfA(X)$ with respect to $T$, $t'$ and $\ell'$ can be found in 
$2^{\Oh(B\log B)}|\Sigma|^B\cdot (mn)^{\Oh(1)}$ time.
\end{lemma}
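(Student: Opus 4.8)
I want to prove Lemma~\ref{lem:tree} by rooting the tree $T$ and doing a bottom-up dynamic program, where at each node of $T$ I also keep track of how the cluster structure interacts with the one ``boundary'' vertex shared with the parent. The key point is that $T$ is a tree with at most $t'+\ell' = \Oh(B)$ vertices, and each vertex of $T$ that lies in $U$ corresponds to a composite cluster of a sought clustering, while each vertex in $W$ corresponds to an initial cluster. The leaves are in $W$; I root $T$ at an arbitrary vertex of $U$ (every component has $\ge 3$ vertices, so such a vertex exists) so that the parent of a vertex in $U$ is in $W$ and vice versa.

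\textbf{Setting up the DP.} First I would guess, for each composite cluster (i.e., each vertex $u\in U$), its median $\bfc_u$ from the set $\mathcal{M}=\mathcal{M}(\bfA,B)$ produced by Lemma~\ref{lem:enum-means}; since $|U|=t'\le t\le B$ and $|\mathcal{M}| = 2^{\Oh(B\log B)}|\Sigma|^B(mn)^{\Oh(1)}$, there are $2^{\Oh(B\log B)}|\Sigma|^{\Oh(B)}(mn)^{\Oh(1)}$ such guesses, which is within budget. Next I would guess the assignment of the $\ell'$ colors of $X$ to the $\ell'$ initial-cluster-vertices of $W$ (a bijection), again $\ell'! = 2^{\Oh(B\log B)}$ choices; this pins down exactly which initial cluster $J_w\in\mathcal{J}$ corresponds to each $w\in W$. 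Having fixed medians and the color-to-vertex map, a feasible clustering with intersection graph isomorphic to $T$ is now determined by, for each edge $uw$ of $T$, the number $x_{uw}\ge 1$ of elements of $J_w$ that get placed into the composite cluster indexed by $u$ — together with, for each $w\in W$, a choice of how the remaining elements of $J_w$ (those not sent to a neighbor) are packed into simple clusters, and a choice of how many of the overall $h$ clusters are these simple ones. The cost contributed by vertex $u\in U$ is $\sum_{w \sim u} x_{uw}\cdot\hdist(\bfc_u, \bfb_{J_w})$, a linear function of the $x_{uw}$'s, which is exactly why a tree DP works: along each root-leaf branch the constraints decouple.

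\textbf{The recurrence.} For a vertex $v$ of the rooted tree, and the subtree $T_v$, I would define a table entry $D_v(\text{size passed up},\ \#\text{clusters used in }T_v,\ \#\text{composite clusters in }T_v)$ equal to the minimum cost over all ways of distributing the relevant initial-cluster elements inside $T_v$ consistent with those counts, where ``size passed up'' records how many elements of the boundary initial cluster (if $v\in W$, then $J_v$ itself; if $v\in U$, then the elements of the parent's initial cluster that have been committed to $v$) are reserved for the edge to the parent. The number of ``size'' values to track is $\Oh(q)\le \Oh(n)$ per edge but since along any branch at most $\Oh(B)$ vertices occur and each boundary quantity is $\le q$, the product of the per-child tables is merged by a convolution over sizes and cluster-counts; each merge costs $\Oh(n^{\Oh(1)})$ because $\#\text{clusters}\le k\le n$, $\#\text{composite}\le B$, and sizes $\le n$. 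At a $W$-vertex $w$ I additionally apply Observation~\ref{obs:blocks} to check whether the leftover elements of $J_w$ can be split into a prescribed number of simple clusters each of size in $[p,q]$; at a $U$-vertex $u$ I must also enforce that the composite cluster it represents has total size in $[p,q]$ and is genuinely composite (it must receive elements from at least two distinct initial clusters, which is automatic since $\deg_T(u)\ge 2$ as $u$ is not a leaf and, if $u$ is the root with a single child, I can discard such $T$ or handle it by noting a root in $U$ with degree $1$ cannot occur when the component has $\ge 3$ vertices and leaves lie in $W$). Finally $\omega$ for the whole tree $T$, with parameter $h$, is read off from the root table with ``size passed up'' $=0$; I minimize over all median guesses and color-to-vertex bijections.

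\textbf{Main obstacle and running time.} The step I expect to be delicate is the bookkeeping at a $W$-vertex: $J_w$ has its elements split among the (at most $\deg(w)$) neighboring composite clusters \emph{and} among some number of its own simple clusters, and I must verify, via Observation~\ref{obs:blocks}, feasibility of the simple part while the total of $h$ clusters is apportioned across the tree by the convolution — getting the ranges $\lceil |J_w\setminus(\text{sent out})|/q\rceil \le (\text{\#simple at }w)\le \lfloor\cdot/p\rfloor$ consistent with a nonempty simple part vs.\ an empty one needs care, and one must remember an initial cluster with elements in a composite cluster may or may not also spawn simple clusters. The rest is routine. For the running time: the median enumeration/guessing contributes the factor $2^{\Oh(B\log B)}|\Sigma|^{\Oh(B)}$, but note $t'\le B$ so the median guesses over $\mathcal{M}$ cost $|\mathcal{M}|^{t'}=\big(2^{\Oh(B\log B)}|\Sigma|^B(mn)^{\Oh(1)}\big)^{\Oh(B)}$ which is \emph{not} of the form $|\Sigma|^B$ — so instead I would not guess all medians independently but rather observe that only the multiset of pairs $\big(\hdist(\bfc_u,\bfb_{J_w})\big)_{uw\in E(T)}$ matters and each such distance is in $\{0,\dots,B\}$, so I enumerate these $\Oh(B)$ distance-values in $(B+1)^{\Oh(B)}=2^{\Oh(B\log B)}$ ways and then, for each guessed distance pattern, I must still certify that compatible medians in $\mathcal{M}$ exist; this last certification is where Lemma~\ref{lem:enum-means} is invoked together with the subhypergraph-enumeration of Proposition~\ref{prop:hypergraph} to find, for each $u$, a median in $\mathcal{M}$ realizing its prescribed distances to the relevant initial clusters, costing $2^{\Oh(B\log B)}|\Sigma|^B(mn)^{\Oh(1)}$ in total. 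Combining: $2^{\Oh(B\log B)}$ distance patterns $\times$ $2^{\Oh(B\log B)}$ color bijections $\times$ $2^{\Oh(B\log B)}|\Sigma|^B(mn)^{\Oh(1)}$ for median certification and the DP itself, giving the claimed $2^{\Oh(B\log B)}|\Sigma|^B(mn)^{\Oh(1)}$ bound.
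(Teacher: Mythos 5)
Your overall architecture (root $T$ at a vertex of $U$, bottom-up tree DP carrying a ``reserved for the parent edge'' quantity, Observation~\ref{obs:blocks} for the simple parts) matches the paper, but the way you handle medians and the identity of the initial clusters has a genuine gap. First, your claim that guessing a bijection between the colors of $X$ and the $W$-vertices ``pins down exactly which initial cluster $J_w$ corresponds to each $w\in W$'' is false: a color class of $\psi$ may contain many initial clusters, and the algorithm must also \emph{choose} which cluster of that color intersects the composite clusters. This choice affects both the cost (the vector $\bfb_{J_w}$, hence the distances to the medians) and the sizes (how many elements can be routed along the edges and what is left to pack), and in addition every \emph{unchosen} initial cluster whose color lies in $X$ must still be split into simple clusters with sizes in $[p,q]$, consuming part of the budget of $h$ clusters; your DP only packs the leftovers of the chosen $J_w$. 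In the paper's proof this is exactly why the state of the tables at $W$-vertices contains the concrete cluster $J$ (minimized over $J\in\mathcal{J}(Z)$) and why the quantity $w(\hat h,\hat j)$ packs \emph{all} clusters of the given color.

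Second, and consequently, your pivot from guessing medians to guessing only a distance pattern $(d_{uw})$ and ``certifying afterwards'' that compatible medians exist does not work. The median $\bfc_u$ of a composite cluster must simultaneously realize the guessed distances to the specific initial clusters $J_w$ chosen at all neighbors $w$ of $u$; since those $J_w$'s are themselves optimization variables of the size/count DP (not fixed by the color bijection), the existence of such a median is a joint constraint coupling the DP's choices across all children of $u$ and its parent, and it cannot be verified per edge, per child, or in a separate post-hoc certification step -- so the constraints do \emph{not} ``decouple along each root-leaf branch'' as you assert. The paper resolves precisely this tension without the $|\mathcal{M}|^{t'}$ blow-up you were trying to avoid: the median $\bfs\in\mathcal{M}$ is carried as part of the DP state at each $U$-vertex (tables $\omega_x^{(2)}(h',Y,j,\bfs)$), so it appears as a factor $|\mathcal{M}|=2^{\Oh(B\log B)}|\Sigma|^B(mn)^{\Oh(1)}$ once per table rather than raised to the power $t'$, while the tables at $W$-vertices carry the chosen initial cluster and the set of colors used in the subtree. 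Repairing your argument essentially forces you back to that formulation.
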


\begin{proof}
We select a vertex $z\in U$ as a root of $T$. This selection defines a parent-child relation on the set of vertices. For a vertex $x\in V(T)$, we denote by $T_x$ the subtree of $T$ induced by the descendants of $x$ (including the vertex itself).  For $x\in V(T)$, let $t_x=V(T_x)\cap U$ and $\ell_x=V(T_x)\cap W$.
For every $x\in V(T)$, we compute the tables of auxiliary values depending on whether $x\in U$ or $x\in W$. 

For a set of colors $Z\subseteq X$, $J\in\mathcal{J}(Z)$ and $J'\subseteq J$, we use 
$\mathcal{J}(Z)/J'$ to denote the set of clusters obtained from the initial clusters of  $\mathcal{J}(Z)$ by the replacement of $J$ by $J''=J\setminus J'$ if $J'\subset J$
and $\mathcal{J}(Z)/ J'=\mathcal{J}(Z)\setminus \{J\}$ if $J'=J$.  We assume that the clusters of $\mathcal{J}(Z)/J'$ have the inherited colors. 
We also write $\bfA(Z)/J'$ to denote the submatrix of $\bfA(Z)$ obtained by the deletion of the columns with the indices from $J'$. Note that $\mathcal{J}(Z)/J'$ is the set of initial clusters for $\bfA(Z)/J'$.

\medskip
Suppose that $x\in W$. For every positive integer $h'\leq h$, every $Y\subseteq X$,  
every $c\in Y$, every $J\in \mathcal{J}(Y)$ and every nonnegative integer $j\leq |J|$, we define $\omega_x^{(1)}(h',Y,c,J,j)$. For technical reasons, it is convenient to define this function for leaves separately.

Let $x$ be a leaf. 
We define  $\omega_x^{(1)}(h',\{c\},c,J,j)$ as the minimum cost of an $h'$-clustering for $\bfA(Y)/J'$, where $J'\subseteq J$ of size $j$, such that all the clusters are simple, and $\omega_x^{(1)}(h',Y,c,J,j)=+\infty$ if $Y\neq\{c\}$.

If $x$ is an internal vertex of $T$, then $\omega_x^{(1)}(h',Y,c,J,j)$  
is the minimum cost of an $h'$-clustering $\mathcal{I}=\{I_1,
\ldots, I_{h'}\}$ for $\bfA(Y)/J'$, where $J'\subset J$ of size $j$, such that 
\begin{itemize}
\item[(i)] $p\leq |I_i|\leq q$ for $i\in\{1,\ldots,h'\}$,
\item[(ii)] the set $\mathcal{I}'\subseteq \mathcal{I}$ of composite clusters has size $t_x$, and the set $\mathcal{J}'\subseteq \mathcal{J}(Y)/J'$ of initial clusters 
having nonempty intersections with the composite clusters has size $\ell_x$, 
\item[(iii)] $|Y|=\ell_x$ and the initial clusters in $\mathcal{J}'$ are colored by distinct colors by $\psi$,
\item[(iv)] $G(\mathcal{I}',\mathcal{J}')$ is isomorphic to $T_x$ with an isomorphism $\alpha$ that bijectively maps $\mathcal{I}'$ to $U_x$, $\mathcal{J}'$ to $W_x$, and 
\item[(v)] $J\setminus J'\in \mathcal{J}'$, $\alpha(J\setminus J')=x$ and  $\psi(J\setminus J')=c$. 
\end{itemize}

In both cases, 
we assume that $\omega_x^{(1)}(h,Y,c,J,j)=+\infty$ if  there is no such an $h'$-clustering.

Informally, $\omega_x^{(1)}(h',Y,c,J,j)$ is the minimum cost of an $h'$-clustering for $\bfA(Y)/J'$ that is feasible with respect to $T_x$, $t_x$ and $\ell_x$ with the additional assumption that we take $j$ elements of $J$ colored by $c$ to include to the composite cluster that corresponds to the parent of $x$. 
Observe that the value of $\omega_x^{(1)}(h',Y,c,J,j)$ does not depend on the choice of $J'$.
Notice also that we have the special case when $U_x=\emptyset$, i.e., when $x$ is a leaf, because it this case we have no composite clusters. Then we form $h'$ simple clusters from the initial clusters $\mathcal{J}(Y)/J'$.  

\medskip
Let $x\in U$. For every positive integer $h'\leq h$, every $Y\subseteq X$, 
 every nonnegative integer $j\leq q$, and every $\bfs\in\mathcal{M}$, $\omega_x^{(2)}(h',Y,j,\bfs)$
is the minimum cost of an $h'$-clustering $\mathcal{I}=\{I_1,\ldots,I_{h'}\}$ for $\bfA(Y)$ such that
\begin{itemize}
\item[(i)] the cost of $I_1$ is computed with respect to the median $\bfs$, that is, the cost equals $\sum_{i\in I_1}\hdist(\bfs,\bfa_i)$,
\item[(ii)]  $p-j\leq |I_1|\leq q-j$ and   $p\leq |I_i|\leq q$ for $i\in\{2,\ldots,h'\}$,
\item[(iii)] for the set of composite clusters $\mathcal{I}'\subseteq \mathcal{I}$, $\mathcal{I}''=\mathcal{I}'\cup \{I_1\}$ has size $t_x$, and  the set $\mathcal{J}'\subseteq \mathcal{J}(Y)$ of initial clusters 
having nonempty intersections with the clusters from $\mathcal{I}''$ has size $\ell_x$, 
\item[(iv)] $|Y|=\ell_x$ and the initial clusters in $\mathcal{J}'$ are colored by distinct colors by $\psi$,
\item[(v)] $G(\mathcal{I}'',\mathcal{J}')$ is isomorphic to $T_x$ with an isomorphism $\alpha$ that bijectively maps $\mathcal{I}''$ to $U_x$, $\mathcal{J}'$ to $W_x$, and $\alpha(I_1)=x$.
\end{itemize}
In the same way as above for other functions, it is assumed that  $\omega_x^{(2)}(h',Y,j,\bfs)=+\infty$ if  there is no such an $h'$-clustering.

Informally, $\omega_x^{(2)}(h',Y,j,\bfs)$ is the minimum cost of an $h'$-clustering for $\bfA(Y)$ that is feasible with respect to $T_x$, $t_x$ and $\ell_x$, where the  specific cluster $I_1$ associated with $x$  is required to have $\bfs$ as its median and ``misses'' $j$ elements. Notice that it is not required that $\bfs$ is optimal for $I_1$. 
However, in future, $I_1$ is going to be complemented by $j$ elements of an initial cluster corresponding to the parent of $x$, unless $x$ is a root. 
Note also that $I_1$ is not a composite cluster if $x$ has a unique child, but because $I_1$ is expected to be complemented by other elements, 
 $I_1$ is counted as a composite cluster in the definition of  $\omega_x^{(2)}(h',Y,j,\bfs)$.

\medskip
Now we explain how to compute the table of values of $\omega_x^{(1)}(h',Y,c,J,j)$ and $\omega_x^{(2)}(h',Y,j,\bfs)$. First, we compute $\omega_x^{(1)}(h',Y,c,J,j)$ for leaves.

\begin{claim}\label{cl:leaves}
For every leaf $x$ of $T$, $\omega_x^{(1)}(h',Y,c,J,j)$ can be computed in $\Oh(n)$ time.
\end{claim}

\begin{proof}
If $Y\neq\{c\}$,  $\omega_x^{(1)}(h',Y,c,J,j)=+\infty$ by the definition. Assume that $Y=\{c\}$.
Let $J'\subseteq J$ be a set of size $j$. We compute $\hat{\mathcal{J}}=\mathcal{J}(Y)/J'$ in $\Oh(n)$ time. Then $\omega_x^{(1)}(h',Y,c,J,j)=0$ if every set in $\hat{\mathcal{J}}$ can be partitioned into clusters of size at least $p$ and at most $q$ in such a way that the total number of clusters is $h'$, and $\omega_x^{(1)}(h',Y,c,J,j)=+\infty$ otherwise. We apply Observation~\ref{obs:blocks}. First, we verify whether every $\hat{J}\in\hat{\mathcal{J}}$ can be partitioned into clusters of size at least $p$ and at most $q$ by checking whether $\Big\lceil\frac{|\hat{J}|}{q}\Big\rceil\leq \Big\lfloor\frac{|\hat{J}|}{p}\Big\rfloor$. If this holds, then we observe that we can obtain exactly $h'$ clusters in total if and only if 
 $\sum_{\hat{J}\in\hat{\mathcal{J}}}\Big\lceil\frac{|\hat{J}|}{q}\Big\rceil\leq h' \leq \sum_{\hat{J}\in\hat{\mathcal{J}}} \Big\lfloor\frac{|\hat{J}|}{p}\Big\rfloor$.
 Since checking of these conditions can be done in $\Oh(n)$ time, the total running time is $\Oh(n)$.
 \end{proof}

Next, we explain how to compute $\omega_x^{(1)}(h',Y,c,J,j)$ for internal vertices if the tables of values of $\omega_y^{(2)}(\cdot,\cdot,\cdot,\cdot)$ are given for all children $y$ of $x$. 

\begin{claim}\label{cl:internal-initial}
Let $x\in W$ be an internal vertex of $T$ and assume that the table of values of $\omega_y^{(2)}(\cdot,\cdot,\cdot,\cdot)$ is computed for every child $y$ of $x$. Then 
 $\omega_x^{(1)}(h',Y,c,J,j)$ can be computed in $2^{\Oh(B\log B)}|\Sigma|^B\cdot (mn)^{\Oh(1)}$ time.
 \end{claim}

\begin{proof}
Let  $h'\leq h$, $Y\subseteq X$ $c\in Y$, $J\in \mathcal{J}(Y)$ and  $j\leq |J|$.  
If $j=|J|$, then we immediately set $\omega_x^{(1)}(h',Y,c,J,j)=+\infty$, because we have no proper $J'\subset J$ of size $j$. 
Also if $|Y|\neq \ell_x$ or $\psi(J)\neq c$, then $\omega_x^{(1)}(h',Y,c,J,j)=+\infty$ by definition.
Assume that $j<|J|$, $J'\subset J$ of size $j$, $\psi(J)=c$ and  $|Y|=\ell_x$. Let $\hat{J}=J\setminus J'$.
We denote by $y_1,\ldots,y_f$ the children of $x$ in $T$. 

Consider the initial clusters of color $c$. By the definition of $\omega_x^{(1)}(h',Y,c,J,j)$, we are interested in an $h'$-clustering, where the initial clusters of color $c$ distinct from $J$ are split into simple clusters and, possibly, some parts of $J$ also form simple clusters. For a nonnegative integers $\hat{h}\leq h'$ and $\hat{j}\leq |\hat{J}|$, we define 
$w(\hat{h},\hat{j})$ to be $0$ if the initial clusters of  $\hat{\mathcal{J}}=\mathcal{J}(\{c\})/(J\setminus J'')$, where $J''\subseteq \hat{J}$ of size $\hat{j}$ can be partitioned into $\hat{h}$ simple clusters of of size at least $p$ and at most $q$, and we set  $w(\hat{h},\hat{j})=+\infty$ otherwise.
To compute $w(\hat{h},\hat{j})$, we use Observation~\ref{obs:blocks} similarly to the proof of Claim~\ref{cl:leaves}.  Namely, we verify whether every $\tilde{J}\in\hat{\mathcal{J}}$ can be partitioned into clusters of size at least $p$ and at most $q$ by checking whether $\Big\lceil\frac{|\tilde{J}|}{q}\Big\rceil\leq \Big\lfloor\frac{|\tilde{J}|}{p}\Big\rfloor$, and then we check whether  $\sum_{\tilde{J}\in\hat{\mathcal{J}}}\Big\lceil\frac{|\tilde{J}|}{q}\Big\rceil\leq \hat{h} \leq \sum_{\tilde{J}\in\hat{\mathcal{J}}} \Big\lfloor\frac{|\tilde{J}|}{p}\Big\rfloor$. Since $\hat{h}\leq h'\leq h$, the values of $w(\hat{h})$ can be computed in $\Oh(n^2)$ time. 

 Observe that by the definition of $\omega_x^{(1)}(h',Y,c,J,j)$, the elements of  $\hat{J}$ should be included in $f$ composite clusters associated with the children of $x$ in an $h'$-clustering for $\bfA(Y)/J'$. In particular, if $|\hat{J}|<f$, it cannot be done and $\omega_x^{(1)}(h',Y,c,J,j)=+\infty$ by the definition. From now, we assume that $|\hat{J}|\geq f$. 

For $i\in\{1,\ldots,f\}$, denote by $T^{(i)}$ the subtree of $T$ induced by  $\{x\}\cup\bigcup_{i'=1}^iV(T_{y_{i'}})$, set $U^{(i)}=U\cap V(T^{(i)})$  and $W^{(i)}=W\cap V(T^{(i)})$. Let also $t^{(i)}=|U^{(i)}|$ and $\ell^{(i)}=|W^{(i)}|$ for $i\in\{1,\ldots,f\}$. 
For each $i\in\{1,\ldots,f\}$, each nonnegative $\hat{h}\leq h'$, each positive $\hat{j}\leq |J|-j$, and every $c\in Z\subseteq Y$,  
denote by $w^{(i)}(\hat{h},\hat{j},Z)$, the minimum cost of $\hat{h}$-clustering $\mathcal{I}=\{I_1,\ldots,I_{\hat{h}}\}$ for $\bfA(Z)/(J\setminus J'')$, where $J''\subseteq \hat{J}$ of size $\hat{j}$, such that 
\begin{itemize}
\item[(i)] $p\leq |I_{i'}|\leq q$ for $i'\in\{1,\ldots,\hat{h}\}$,
\item[(ii)] the set $\mathcal{I}'\subseteq \mathcal{I}$ of composite clusters has size $t^{(i)}$, and the set $\mathcal{J}'\subseteq \mathcal{J}(Y)/(J\setminus J'')$ of initial clusters 
having nonempty intersections with the composite clusters has size $\ell^{(i)}$, 
\item[(iii)] $|Z|=\ell^{(i)}$ and the initial clusters in $\mathcal{J}'$ are colored by distinct colors by $\psi$,
\item[(iv)] $G(\mathcal{I}',\mathcal{J}')$ is isomorphic to $T^{(i)}$ with an isomorphism $\alpha$ that bijectively maps $\mathcal{I}'$ to $U^{(i)}$, $\mathcal{J}'$ to $W^{(i)}$, and 
\item[(v)] $J''\in \mathcal{J}'$, $\alpha(J'')=x$ and  $\psi(J'')=c$. 
\end{itemize}
We also follow the same convention as above that $w^{(i)}(\hat{h},\hat{j},Z)=+\infty$ if either there is no $\hat{h}$-clustering satisfying (i)--(v).
Observe that, by the definition, $\omega_x^{(1)}(h',Y,c,J,j)=w^{(f)}(h',|J|-j,Y)$. Therefore, we compute the tables of values of $w^{(i)}(\cdot,\cdot,\cdot)$ for $i=1,\ldots,f$.

To initiate the computation of $w^{(i)}(\cdot,\cdot,\cdot)$, it is convenient to formally define this function for $i=0$. We set
\begin{equation*}
w^{(0)}(\hat{h},\hat{j},Z)=
\begin{cases}
w(\hat{h},\hat{j})&\mbox{if }Z=\{c\},\\
+\infty&\mbox{otherwise}. 
\end{cases}
\end{equation*}
For $\bfs\in M$, denote $d(\bfs)=\hdist(\bfs,\bfa_j)$ for $j\in J$.
Then to compute $w^{(i)}(\hat{h},\hat{j},Z)$ for $i\geq 1$, we use the following recurrence:
\begin{equation}\label{eq:rec-init}
w^{(i)}(\hat{h},\hat{j},Z)=\min\{\omega^{(2)}_{y_i}(\hat{h}',\hat{Z},\hat{j}',\bfs)+\hat{j}'d(\bfs) +w^{(i-1)}(\hat{h}-\hat{h}',\hat{j}-\hat{j}',Z\setminus \hat{Z}) \},  
\end{equation}  
where the minimum in the right part is taken over all
integers $1\leq \hat{h}'\leq\hat{h}$ and $0<\hat{j}'\leq \hat{j}$, all sets $\hat{Z}$ such that $c\notin \hat{Z}\subset Z$, and all $\bfs\in\mathcal{M}$.
We assume that $w^{(i)}(\hat{h},\hat{j},Z)=+\infty$ if  the set in the right part is empty.
 
We prove the correctness of (\ref{eq:rec-init}) by showing the opposite inequalities between the left and the right part.

 If $w^{(i)}(\hat{h},\hat{j},Z)=+\infty$, then 
 \begin{equation*}
 w^{(i)}(\hat{h},\hat{j},Z)\geq \min\{\omega^{(2)}(\hat{h}',\hat{Z},\hat{j}',\bfs)+\hat{j}'d(\bfs)+w^{(i-1)}(\hat{h}-\hat{h}',\hat{j}-\hat{j}',Z\setminus \hat{Z})\}.
 \end{equation*} 
Suppose that  $w^{(i)}(\hat{h},\hat{j},Z)<+\infty$.  Consider  $\hat{h}$-clustering $\mathcal{I}$ for $\bfA(Z)/(J\setminus J'')$ of cost 
  $w^{(i)}(\hat{h},\hat{j},Z)$ satisfying (i)--(v).  Let  $I\in\mathcal{I}$ be the composite cluster such that $\alpha(I)=y_i$.  Since $\alpha(J'')=x$, $I$ contains elements of $J''$. Let $\hat{J}''=I\cap J''$ and 
  $\hat{j}'=\hat{J}''$. Denote by $\bfs\in \mathcal{M}$ the median of $I_1$.  Consider $\hat{\mathcal{J}}''=\alpha^{-1}(V(T_{y_i}))\cap\mathcal{J}'$, that is, the set of initial clusters having nonempty intersections with the composite clusters that are mapped by $\alpha$ to  the nodes of $T_{y_i}$. The coloring $\psi$ colors these clusters by distinct colors and we define $\hat{Z}$ to be the set of colors of the clusters of $\hat{\mathcal{J}}''$; note that $c\notin\hat{Z}$.  
Denote by $\hat{h}'$ the number of clusters in $\mathcal{I}$ containing elements of the initial clusters with colors in $\hat{Z}$ and let $\mathcal{I}_1$ be the set of these clusters; observe that $I\in\mathcal{I}_1$. Let $\mathcal{I}_2=\mathcal{I}\setminus \mathcal{I}_1$.  

By the defintion of the values of $w^{(i-1)}(\cdot,\cdot,\cdot)$, we obtain that the cost of clustering for $\mathcal{I}_2$ is at least $w^{(i-1)}(\hat{h}-\hat{h}',\hat{j}-\hat{j}',Z\setminus \hat{Z})$. The cluster $I$ contains $\hat{j}'$ elements of $J$. Since $\bfs$ is its median, these $\hat{j}'$ elements contribute $\hat{j}'d(\bfs)$ to its cost. Then, by the definition of $\omega^{(2)}_{y_i}(\cdot,\cdot,\cdot,\cdot)$, we have that the cost of clustering for $\mathcal{I}_1$ is at least
$\omega^{(2)}_{y_i}(\hat{h}',\hat{Z},\hat{j}',\bfs)+\hat{j}'d(\bfs)$. This means that 
 $w^{(i)}(\hat{h},\hat{j},Z)\geq \omega^{(2)}(\hat{h}',\hat{Z},\hat{j}',\bfs)+w^{(i-1)}(\hat{h}-\hat{h}',\hat{j}-\hat{j}',(Z\setminus \hat{Z})\cup\{c\})$ and
 \begin{equation}\label{eq:rec-init-one}
w^{(i)}(\hat{h},\hat{j},Z)\geq\min\{\omega^{(2)}_{y_i}(\hat{h}',\hat{Z},\hat{j}',\bfs)+\hat{j}'d(\bfs) +w^{(i-1)}(\hat{h}-\hat{h}',\hat{j}-\hat{j}',Z\setminus \hat{Z}) \}.  
\end{equation}

For the opposite direction, assume that integers $\hat{h}'$,  $\hat{j}'$, a set $\hat{Z}$, and a median $\bfs$ are chosen in such a way that the value of 
$\omega^{(2)}_{y_i}(\hat{h}',\hat{Z},\hat{j}',\bfs)+\hat{j}'d(\bfs) +w^{(i-1)}(\hat{h}-\hat{h}',\hat{j}-\hat{j}',Z\setminus \hat{Z})$ is minimum.
If the value is $+\infty$, then 
 $w^{(i)}(\hat{h},\hat{j},Z)\leq \omega^{(2)}(\hat{h}',\hat{Z},\hat{j}',\bfs)+\hat{j}'d(\bfs) +w^{(i-1)}(\hat{h}-\hat{h}',\hat{j}-\hat{j}',Z\setminus \hat{Z})$ as required.
 Assume that $\omega^{(2)}_{y_i}(\hat{h}',\hat{Z},\hat{j}',\bfs)<+\infty$ and  $w^{(i-1)}(\hat{h}-\hat{h}',\hat{j}-\hat{j}',Z\setminus \hat{Z})<+\infty$.  

By the definition of $\omega^{(2)}_{y_i}(\hat{h}',\hat{Z},\hat{j}',\bfs)$, there is an $\hat{h}'$-clustering $\mathcal{I}_1$ for $\bfA(\hat{Z})$  of cost $\omega^{(2)}_{y_i}(\hat{h}',\hat{Z},\hat{j}',\bfs)$ satisfying conditions (i)--(v) of the definition.  
 In particular, $\mathcal{I}_1$ contains a special cluster $I$ with the median $\bfs$ such that $p-\hat{j}'\leq |I|\leq q-\hat{j}'$ and $I$ is mapped to the root $y_i$ of $T_{y_i}$ by the isomorphism $\alpha$. 

Let $J''\subseteq \hat{J}$ of size $\hat{j}$ and let $\hat{J}''\subseteq J''$ of size $\hat{j}'$. By the definition of   $w^{(i-1)}(\hat{h}-\hat{h}',\hat{j}-\hat{j}',Z\setminus \hat{Z})$, there is an $(\hat{h}-\hat{h}')$-clustering $\mathcal{I}_2$ for $\bfA(Z\setminus \hat{Z})/((J\setminus J'')\cup\hat{J}'')$ 
of cost $w^{(i-1)}(\hat{h}-\hat{h}',\hat{j}-\hat{j}',Z\setminus \hat{Z})$ satisfying conditions (i)--(v) of the definition of  $w^{(i-1)}(\cdot,\cdot,\cdot)$.  

Observe that the clusters of $\mathcal{I}_1$ and $\mathcal{I}$ are pairwise disjoint and and include all elements of the initial clusters with their colors in $Z$ except $\hat{j}'+j$ elements of $J$.
We construct the $\hat{h}$-clustering $\mathcal{I}$ for $\bfA(Z)/(J\setminus J'')$ as follows. First, we modify the cluster $I\in\mathcal{I}_1$ by setting $I:=I\cup \hat{J}''$. Note that we increase the cost of the cluster by at most $\hat{j}'d(\bfs)$.
Then we take the union of $\mathcal{I}_1$ and $\mathcal{I}_2$. The definitions of the values  $\omega^{(2)}_{y_i}(\hat{h}',\hat{Z},\hat{j}',\bfs)$ and 
 $w^{(i-1)}(\hat{h}-\hat{h}',\hat{j}-\hat{j}',Z\setminus \hat{Z})$ imply that $\mathcal{I}$ satisfies conditions (i)--(v) for $w^{(i)}(\hat{h},\hat{j},Z)$. Therefore, 
 $w^{(i)}(\hat{h},\hat{j},Z)\leq \omega^{(2)}(\hat{h}',\hat{Z},\hat{j}',\bfs)+\hat{j}'d(\bfs) +w^{(i-1)}(\hat{h}-\hat{h}',\hat{j}-\hat{j}',Z\setminus \hat{Z})$.
 
 By the choice of   $\hat{h}'$,  $\hat{j}'$, $\hat{Z}$, and $\bfs$,  
\begin{equation}\label{eq:rec-init-two}
w^{(i)}(\hat{h},\hat{j},Z)\leq\min\{\omega^{(2)}_{y_i}(\hat{h}',\hat{Z},\hat{j}',\bfs)+\hat{j}'d(\bfs) +w^{(i-1)}(\hat{h}-\hat{h}',\hat{j}-\hat{j}',Z\setminus \hat{Z}) \}. \end{equation} 
 Then (\ref{eq:rec-init-one}) and (\ref{eq:rec-init-two}) imply (\ref{eq:rec-init}). 
 
 We use (\ref{eq:rec-init}) to compute the table of values of $w^{(f)}(\cdot,\cdot,\cdot)$.  Then $\omega_x^{(1)}(h',Y,c,J,j)=w^{(f)}(h',|J|-j,Y)$ by the definition. 
 
 To evaluate the running time, notice that the initial table $w^{(f)}(\cdot,\cdot,\cdot)$ can be computed in $2^{\Oh(B)}\cdot n^2$, since $w(\hat{h})$ can be computed in $\Oh(n^2)$ time and then the table is constructed for at most $n$ values of $\hat{j}$ and at most $2^\ell$ sets $Z$. To compute the table $w^{(i)}(\cdot,\cdot,\cdot)$ from $w^{(i-1)}(\cdot,\cdot,\cdot)$ by (\ref{eq:rec-init}) for $i\in\{1,\ldots,f\}$, we consider all pairs of integers $\hat{h}'\leq\hat{h}$, all pairs of sets $Z$ and $\hat{Z}\subset Z$ and all $\bfs\in\mathcal{M}$. Since $\hat{h}\leq n$, $Z\subseteq\{1,\ldots,\ell\}$ and $\ell\leq 2B$, and $|\mathcal{M}|=2^{\Oh(B\log B)}|\Sigma|^B\cdot (mn)^{\Oh(1)}$,  $w^{(i)}(\cdot,\cdot,\cdot)$ can be computed in $2^{\Oh(B\log B)}|\Sigma|^B\cdot (mn)^{\Oh(1)}$. Then the total running time is  $2^{\Oh(B\log B)}|\Sigma|^B\cdot (mn)^{\Oh(1)}$. 
\end{proof}

Further, we show how to compute $\omega_x^{(2)}(h',Y,j,\bfs)$ if the tables of values of $\omega_y^{(1)}(\cdot,\cdot,\cdot,\cdot,\cdot)$ are already computed.

\begin{claim}\label{cl:composite}
Let $x\in U$ be an internal vertex of $T$ and assume that the table of values of $\omega_y^{(1)}(\cdot,\cdot,\cdot,\cdot,\cdot)$ is computed for every child $y$ of $x$. Then 
 $\omega_x^{(2)}(h',Y,j,\bfs)$ can be computed in $2^{\Oh(B)}\cdot n^{\Oh(1)}$ time.
 \end{claim}

\begin{proof}
Let $h'\leq h$, $Y\subseteq X$, $j\leq q$, and let $\bfs\in \mathcal{M}$. If $|Y|\neq \ell_x$, then  $\omega_x^{(2)}(h',Y,j,\bfs)=+\infty$ by defintion. Assume that
$|Y|=\ell_x$. In the same way as in the proof of Claim~\ref{cl:internal-initial}, denote by $y_1,\ldots,y_f$ the children of $x$ in $T$. 
For $i\in\{1,\ldots,f\}$, let $T^{(i)}$ be the subtree of $T$ induced by  $\{x\}\cup\bigcup_{i'=1}^iV(T_{y_{i'}})$, set $U^{(i)}=U\cap V(T^{(i)})$  and $W^{(i)}=W\cap V(T^{(i)})$. Let also $t^{(i)}=|U^{(i)}|$ and $\ell^{(i)}=|W^{(i)}|$ for $i\in\{1,\ldots,f\}$. For an initial cluster $J$, we denote by $d(J)=\hdist(\bfs,\bfa_i)$ for $i\in J$. 
Similarly to the proof of Claim~\ref{cl:internal-initial}, we compute some auxiliary values. 

For each $i\in\{1,\ldots,f\}$, every positive integer  $\hat{h}\leq h'$, every nonnegative integer $\hat{j}\leq q$, and every nonempty $Z\subseteq X$, $w^{(i)}(\hat{h},\hat{j},Z)$ is 
 the minimum cost of an $\hat{h}$-clustering $\mathcal{I}=\{I_1,\ldots,I_{\hat{h}}\}$ for $\bfA(Z)$ such that
\begin{itemize}
\item[(i)] the cost of $I_1$ is computed with respect to the median $\bfs$, that is, the cost equals $\sum_{i\in I_1}\hdist(\bfs,\bfa_i)$,
\item[(ii)]  $|I_1|=\hat{j}$ and  $p\leq |I_{i'}|\leq q$ for $i'\in\{2,\ldots,\hat{h}\}$,
\item[(iii)] for the set of composite clusters $\mathcal{I}'\subseteq \mathcal{I}$, $\mathcal{I}''=\mathcal{I}'\cup \{I_1\}$ has size $t^{(i)}$, and  the set $\mathcal{J}'\subseteq \mathcal{J}(Z)$ of initial clusters 
having nonempty intersections with the clusters from $\mathcal{I}''$ has size $\ell^{(i)}$, 
\item[(iv)] $|Z|=\ell^{(i)}$ and the initial clusters in $\mathcal{J}'$ are colored by distinct colors by $\psi$,
\item[(v)] $G(\mathcal{I}'',\mathcal{J}')$ is isomorphic to $T^{(i)}$ with an isomorphism $\alpha$ that bijectively maps $\mathcal{I}''$ to $U^{(i)}$, $\mathcal{J}'$ to $W^{(i)}$, and $\alpha(I_1)=x$.
\end{itemize}
We assume that   $w^{(i)}(\hat{h},\hat{j},Z)=+\infty$ if  there is no such a $\hat{h}$-clustering.

Notice that the parameter $\hat{j}$ defines the size of a selected cluster $I_1$. Then, by the definition, we have that 
\begin{equation}\label{eq:omega-two}
\omega_x^{(2)}(h',Y,j,\bfs)=\min\{w^{(f)}(h',\hat{j},Y)\mid p-j\leq \hat{j}\leq q-j\}
\end{equation}
assuming that $\omega_x^{(2)}(h',Y,j,\bfs)=+\infty$ if the set in the right part is empty.

We compute the tables of values of $w^{(i)}(\cdot,\cdot,\cdot)$ for $i=1,\ldots,f$.

First, we observe that 
\begin{equation}\label{eq:omega-two-first}
w^{(1)}(\hat{h},\hat{j},Z)=\min\{\omega_{y_1}^{(1)}(\hat{h}-1,Z,c,J,\hat{j})+\hat{j}d(J)\mid c\in Z,J\in\mathcal{J}(Z)\};
\end{equation}
as before, $w^{(1)}(\hat{h},\hat{j},Z)=+\infty$ if the set in the right part is empty.

To see that $w^{(1)}(\hat{h},\hat{j},Z)\geq\min\{\omega_{y_1}^{(1)}(\hat{h}-1,Z,c,J,\hat{j})+\hat{j}d(J)\mid c\in Z,J\in\mathcal{J}(Z)\}$, assume that 
 $w^{(1)}(\hat{h},\hat{j},Z)<+\infty$; otherwise, the inequality is trivial. Let $\mathcal{I}=\{I_1,\ldots,I_{\hat{h}}\}$ be an $\hat{h}$-clustering for $\bfA(Z)$ satisfying conditions (i)--(v). Since $y_1$ is the unique child of $x$ in $T^{(1)}$, $I_1$ consists of  $\hat{j}$ elements of some initial cluster $J$. Let $c$ be the color assigned to $J$ by $\psi$.  Then, by the definition of $\omega_{y_1}^{(1)}(\hat{h}-1,Z,c,J,\hat{j})$, $\{I_2,\ldots,I_{\hat{h}}\}$ is an $(\hat{h}-1)$-clustering for $\bfA(Z)/J'$ for $J'\subseteq J$ of size $\hat{j}$ that satisfies all the condition of the definition of  $\omega_{y_1}^{(1)}(\cdot,\cdot,\cdot,\cdots,\cdot)$. Therefore, the cost of $\{I_2,\ldots,I_{\hat{h}}\}$ is an $(\hat{h}-1)$ is at least $\omega_{y_1}^{(1)}(\hat{h}-1,Z,c,J,\hat{j})$. The median of $I_1$ is $\bfs$ and $I_1$ contains $\hat{j}$ elements of $J$. Therefore, the cost of $I_1$ is $\hat{j}d(J)$. We conclude that 
$w^{(1)}(\hat{h},\hat{j},Z)\geq\omega_{y_1}^{(1)}(\hat{h}-1,Z,c,J,\hat{j})+\hat{j}d(J)$. Therefore,  
  $w^{(1)}(\hat{h},\hat{j},Z)\geq\min\{\omega_{y_1}^{(1)}(\hat{h}-1,Z,c,J,\hat{j})+\hat{j}d(J)\mid c\in Z,J\in\mathcal{J}(Z)\}$. 
 
 Now we prove that $w^{(1)}(\hat{h},\hat{j},Z)\leq\min\{\omega_{y_1}^{(1)}(\hat{h}-1,Z,c,J,\hat{j})+\hat{j}d(J)\mid c\in Z,J\in\mathcal{J}(Z)\}$. If the right part of (\ref{eq:omega-two-first}) is $+\infty$, then the inequality is trivial. Assume that this is not the case and let  $c\in Z$ and $J\in\mathcal{J}(Z)$ be such that the   right part of (\ref{eq:omega-two-first}) achieves the minimum value for them. Then there is an $(\hat{h}-1)$-clustering $\mathcal{I}$ for $\bfA(Z)/J'$, where $J'\subseteq J$ has size $\hat{j}$, with the cost $\omega_{y_1}^{(1)}(\hat{h}-1,Z,c,J,\hat{j})$ that  satisfies all the condition of the definition of  $\omega_{y_1}^{(1)}(\cdot,\cdot,\cdot,\cdots,\cdot)$.  Then we construct a new cluster $I=J'$ with the median $\bfs$. Clearly, the cost is $\hat{j}d(J)$. It is straightforward to verify that $\mathcal{I}\cup\{I\}$ satisfies (i)--(v). Therefore, 
  $w^{(1)}(\hat{h},\hat{j},Z)\leq\omega_{y_1}^{(1)}(\hat{h}-1,Z,c,J,\hat{j})+\hat{j}d(J)$ and 
  $w^{(1)}(\hat{h},\hat{j},Z)\leq\min\{\omega_{y_1}^{(1)}(\hat{h}-1,Z,c,J,\hat{j})+\hat{j}d(J)\mid c\in Z,J\in\mathcal{J}(Z)\}$. 
 
 Combining the two inequalities, we conclude that (\ref{eq:omega-two-first}) holds.
 
 To compute $w^{(1)}(\hat{h},\hat{j},Z)$ for $i\geq 2$, we show that
 \begin{equation}\label{eq:omega-two-step}
w^{(i)}(\hat{h},\hat{j},Z)=\min\{\omega_{y_i}^{(1)}(\hat{h}',\hat{Z},c,J,\hat{j}')+\hat{j}'d(J)+w^{(i-1)}(\hat{h}-\hat{h}',\hat{j}-\hat{j}',Z\setminus \hat{Z})\},
\end{equation}
where the minimum is taken over all positive integers $\hat{h}'<\hat{h}$, $\hat{j}'<\hat{j}$, all nonempty sets $\hat{Z}\subset Z$, all $c\in Z$, and $J\in \mathcal{J}(Z)$. As it is standard in our paper,  $w^{(i)}(\hat{h},\hat{j},Z)=+\infty$ if the set in the right part of (\ref{eq:omega-two-step}) is empty.  
 
 We prove (\ref{eq:omega-two-step}) by demonstrating the opposite inequalities between the left and the right part.
 
 If $w^{(i)}(\hat{h},\hat{j},Z)=+\infty$, then 
 $w^{(i)}(\hat{h},\hat{j},Z)\geq\min\{\omega_{y_i}^{(1)}(\hat{h}',\hat{Z},c,J,\hat{j}')+\hat{j}'d(J)+w^{(i-1)}(\hat{h}-\hat{h}',\hat{j}-\hat{j}',Z\setminus \hat{Z})\}$.
 Assume that this is not the case. Then there is an $\hat{h}$-clustering  $\mathcal{I}$ for $\bfA(Z)$ of cost
 $w^{(i)}(\hat{h},\hat{j},Z)$ satisfying (i)--(v). In particular, there is $I\in\mathcal{I}$ such that $|I|=\hat{j}$, $\alpha(I)=x$ and $\bfs$ is its median.
 Let $J\in\mathcal{J}(Z)$ be the initial cluster such that $\alpha(J)=y_i$.  Denote by $c$ its color.
 By definition, $J\cap I\neq\emptyset$. Let $J'=I\cap J$ and $\hat{j}'=|J'|$.
 Consider $\hat{\mathcal{J}}'=\alpha^{-1}(V(T_{y_i}))\cap \mathcal{J}'$, that is, the set of initial clusters intersecting composite clusters that are mapped by $\alpha$ to the vertices of $T_{y_i}$. Note that $J\in \hat{\mathcal{J}}'$.
 By definition, these clusters are colored by distinct colors by $\psi$. Denote by $\hat{Z}$ the set of their colors. Clearly, $c\in\hat{Z}$.
 Let $\mathcal{I}_1\subseteq \mathcal{I}\setminus \{I\}$ be the set of clusters in $\mathcal{I}$ having nonempty intersections with with the initial clusters from $\hat{\mathcal{J}}'$; note that $I\notin \mathcal{I}_1$ by definition.  Set $\hat{h}'=|\mathcal{I}_1|$. Let $\mathcal{I}_2=\mathcal{I}\setminus \mathcal{I}_1$.

Observe that $\mathcal{I}_1$ is an $\hat{h}'$-clustering for $\bfA(\hat{Z})/J'$. Moreover, $\mathcal{I}_1$ satisfies all the conditions of the definition of 
$\omega_{y_i}^{(1)}(\hat{h}',\hat{Z},c,J,\hat{j}')$. This implies that the cost of $\mathcal{I}_1$ is at least $\omega_{y_i}^{(1)}(\hat{h}',\hat{Z},c,J,\hat{j}')$.
Consider the clustering $\hat{\mathcal{I}}_2$ obtained from $\mathcal{I}_2$ by the replacement of $I$ by $\hat{I}=I\setminus J'$. Notice that 
the clusters of 
$\hat{\mathcal{I}}_2$ contains only elements of initial clusters with colors from $Z\setminus \hat{Z}$.  Also we have hat $|\hat{I}|=\hat{j}-\hat{j}'$ and $|\hat{\mathcal{I}}_2|=\hat{h}-\hat{h}'$, because $i\geq 2$ and $I\neq J'$. Then it is straightforward to verify that $\hat{\mathcal{I}}_2$ is $(\hat{h}-\hat{h}')$-clustering for $\bfA(Z\setminus \hat{Z})$ satisfying (i)--(v) for $w^{(i-1)}(\cdot,\cdot,\cdot)$.
Therefore, the cost of $\hat{\mathcal{I}}_2$ is at least $w^{(i-1)}(\hat{h}-\hat{h}',\hat{j}-\hat{j}',Z\setminus \hat{Z})$. Finally, recall that $J'\subset I$. Since $\bfs$ is the median of $I$, the contribution of $J'$ to the cost is $\hat{j}'d(J)$. We conclude that
 $w^{(i)}(\hat{h},\hat{j},Z)\geq\omega_{y_i}^{(1)}(\hat{h}',\hat{Z},c,J,\hat{j}')+\hat{j}'d(J)+w^{(i-1)}(\hat{h}-\hat{h}',\hat{j}-\hat{j}',Z\setminus \hat{Z})$. Hence,
 \begin{equation}\label{eq:omega-two-upper}
  w^{(i)}(\hat{h},\hat{j},Z)\geq\min\{\omega_{y_i}^{(1)}(\hat{h}',\hat{Z},c,J,\hat{j}')+\hat{j}'d(J)+w^{(i-1)}(\hat{h}-\hat{h}',\hat{j}-\hat{j}',Z\setminus \hat{Z})\}. 
\end{equation} 
  
The opposite inequality is trivial if the right part of (\ref{eq:omega-two-step}) equals $+\infty$. Assume that this is not the case and suppose that  positive integers $\hat{h}'<\hat{h}$, $\hat{j}'<\hat{j}$, a set $\hat{Z}\subset Z$, $c\in Z$, and $J\in \mathcal{J}(Z)$ are chosen in such a way that the right part of (\ref{eq:omega-two-step}) achieves the minimum value for them. 

By the definition of $\omega_{y_i}^{(1)}(\hat{h}',\hat{Z},c,J,\hat{j}')$, there is an $\hat{h}'$-clustering $\mathcal{I}_1$ for $\bfA(\hat{Z})/J'$ of cost  $\omega_{y_i}^{(1)}(\hat{h}',\hat{Z},c,J,\hat{j}')$ satisfying conditions (i)--(v) of the definition, where $J'\subseteq J$ of size $\hat{j}'=|J'|$.  In particular, $c$ is a color of $J$. 
   
We also have that, by definition of  $w^{(i-1)}(\hat{h}-\hat{h}',\hat{j}-\hat{j}',Z\setminus \hat{Z})$, there is an $(\hat{h}-\hat{h}')$-clustering for $\bfA(Z\setminus \hat{Z})$ satisfying conditions (i)-(v) of the definition. In particular, there is a special cluster $I\in\mathcal{I}_2$ of size $\hat{j}-\hat{j}'$ with the median $\bfs$. 

We construct the clustering $\mathcal{I}$ for $\bfA(Z)$ as follows. First, we modify the cluster $I\in\mathcal{I}_2$ by replacing it by $I'=I\cup J'$. Then we take the union of $\mathcal{I}_1$ and the modified $\mathcal{I}_2$. It is straightforward to verify that $\mathcal{I}$ is a $\hat{h}$-clustering for $\bfA(Z)$ satisfying (i)--(v) for $w^{(i)}(\hat{h},\hat{j},Z)$. Since $I'$ is obtained by adding $\hat{j}'$ elements of $J$, the cost of $\mathcal{I}$ is 
$\omega_{y_i}^{(1)}(\hat{h}',\hat{Z},c,J,\hat{j}')+\hat{j}'d(J)+w^{(i-1)}(\hat{h}-\hat{h}',\hat{j}-\hat{j}',Z\setminus \hat{Z})$. Therefore,
$w^{(i)}(\hat{h},\hat{j},Z)\leq\omega_{y_i}^{(1)}(\hat{h}',\hat{Z},c,J,\hat{j}')+\hat{j}'d(J)+w^{(i-1)}(\hat{h}-\hat{h}',\hat{j}-\hat{j}',Z\setminus \hat{Z})$ and, by the choice of $\hat{h}'$, $\hat{j}'$, $\hat{Z}$, $c$ and $J$,
 \begin{equation}\label{eq:omega-two-lower}
  w^{(i)}(\hat{h},\hat{j},Z)\leq\min\{\omega_{y_i}^{(1)}(\hat{h}',\hat{Z},c,J,\hat{j}')+\hat{j}'d(J)+w^{(i-1)}(\hat{h}-\hat{h}',\hat{j}-\hat{j}',Z\setminus \hat{Z})\}. 
\end{equation} 
  
By (\ref{eq:omega-two-upper}) and (\ref{eq:omega-two-lower}), we conclude that the recurrence (\ref{eq:omega-two-step}) holds. Then we compute the tables of values of $w^{(i)}(\cdot,\cdot,\cdot)$ for $i=1,\ldots,f$ using (\ref{eq:omega-two-first}) and (\ref{eq:omega-two-step}). Finally, we apply (\ref{eq:omega-two})
to compute $\omega_x^{(2)}(h',Y,j,\bfs)$.

Clearly, the table of values of $w^{(1)}(\cdot,\cdot,\cdot)$  can be computed in $2^{\Oh(B)}\cdot n^3$ time, because we consider $\hat{h},\hat{j}\leq n$ and at most $2^\ell$ sets $Z$, and then go through at most $\ell$ values of $c$ and at most $n$ sets $\mathcal{J}$.  To compute the tables of values of  $w^{(i)}(\cdot,\cdot,\cdot)$ for $i\geq 2$, we consider all pairs of integers $\hat{h}'<\hat{h}$, all pairs $\hat{j}'<\hat{j}$, all nonempty sets $\hat{Z}\subset Z$, all $c\in Z$, and $J\in \mathcal{J}(Z)$.  Since $\hat{h}',\hat{h},\hat{j}',\hat{j}\leq n$, the number of pairs of set $\hat{Z}\subset Z$ is at most $3^\ell$, the number of the choices of $c$ is at most $\ell$ and the number of the choices of $J$ is at most $n$, we have that the total running time is $2^{\Oh(B)}\cdot n^{\Oh(1)}$, because $\ell\leq 2B$.
\end{proof}

Claims~\ref{cl:leaves}--\ref{cl:composite} allow us to compute the table of values of $\omega_z^{(2)}(\cdot,\cdot,\cdot,\cdot)$ for the root $z$ of $T$ bottom-up starting from the leaves (recall that $z\in U$). To make the final step of our algorithm, observe that   
 the minimum cost of a feasible $h$-clustering for $\bfA(X)$ with respect to $T$, $t'$ and $\ell'$ is
 \begin{equation*}
 \min\{\omega_x^{(2)}(h,X,0,\bfs)\mid \bfs\in\mathcal{M}\}
 \end{equation*}
 by the definition of these values.

The tree $T$ has $\ell'+t'\leq 3B$ vertices. The table of values of either $\omega_x^{(1)}(\cdot,\cdot,\cdot,\cdot,\cdot)$ or $\omega_x^{(2)}(\cdot,\cdot,\cdot,\cdot)$ 
 constructed for every node $x$ has size $2^{\Oh(B\log B)}|\Sigma|^B\cdot (mn)^{\Oh(1)}$ and can be constructed in 
 $2^{\Oh(B\log B)}|\Sigma|^B\cdot (mn)^{\Oh(1)}$ time by Claims~\ref{cl:leaves}--\ref{cl:composite}. Therefore, the total running time is 
 $2^{\Oh(B\log B)}|\Sigma|^B\cdot (mn)^{\Oh(1)}$.
 \end{proof}
 
 Using Lemmas~\ref{lem:comp-forest} and \ref{lem:tree}, we are able to check whether the considered instance has a colorful solution.
 
 \begin{lemma}\label{lem:colorful} 
 Given positive integers $t\leq B$ and  $\ell$ such that $t+1\leq \ell\leq 2B$ and a coloring $\psi\colon \mathcal{J}\rightarrow \{1,\ldots,\ell\}$, it can be decided in 
 $2^{\Oh(B\log B)}|\Sigma|^B\cdot (mn)^{\Oh(1)}$ time whether $(\bfA,\Sigma,k,B,p,q)$ has an acyclic colorful solution with $t$ composite clusters such that exactly $\ell$ initial clusters have nonempty intersections with the composite clusters of the solution. 
 \end{lemma}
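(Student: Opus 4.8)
The plan is to reduce the question to the two dynamic-programming routines already developed, namely Lemmas~\ref{lem:comp-forest} and~\ref{lem:tree}. As observed in the discussion preceding Lemma~\ref{lem:comp-forest}, for the fixed coloring $\psi$ an acyclic colorful solution with exactly $t$ composite clusters and exactly $\ell$ initial clusters meeting them exists if and only if there is a forest $F$ on $t+\ell$ vertices, each connected component of which has at least three vertices and which admits a (necessarily unique) bipartition $(U,W)$ with $|U|=t$, $|W|=\ell$ and all leaves in $W$, such that $\bfA$ has a $k$-clustering that is feasible with respect to $F$, $t$ and $\ell$ and has cost at most $B$. So the algorithm enumerates all such forests $F$ — there are $2^{\Oh(B)}$ of them since $t\leq B$ and $\ell\leq 2B$, and they are listable in $2^{\Oh(B)}$ time — and for each one tests whether such a feasible clustering exists; it answers affirmatively iff some $F$ passes this test.

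First I would record why an acyclic colorful solution $\mathcal{I}$ as in the statement indeed gives rise to one of the enumerated forests: taking $F$ to be the subgraph of $G(\mathcal{I},\mathcal{J})$ induced by the set $\mathcal{I}'$ of composite clusters and the set $\mathcal{J}'$ of initial clusters meeting a composite cluster, $F$ is a forest (as an induced subgraph of the forest $G(\mathcal{I},\mathcal{J})$), every composite cluster contains elements of at least two distinct initial clusters and hence has degree at least $2$ in $F$, so the leaves of $F$ all lie on the $\mathcal{J}'$ side, and every component of $F$ contains a composite cluster (vertices of $\mathcal{J}'$ are by definition adjacent to one) whose two neighbours force the component to have at least three vertices; thus $F$ is (up to isomorphism) among the enumerated forests, and $\mathcal{I}$ is feasible with respect to it. The converse is immediate from conditions (i)--(iv) of feasibility. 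For a fixed $F$ with tree components $F_1,\ldots,F_f$ and induced parameters $t_i=|V(F_i)\cap U|$, $\ell_i=|V(F_i)\cap W|$, I would invoke Lemma~\ref{lem:tree} on each $F_i$ to compute $\omega_i(X,h)$ for all color sets $X\subseteq\{1,\ldots,\ell\}$ and all positive $h\leq k$ (the value being $+\infty$ unless $|X|=\ell_i$), and then feed these tables to Lemma~\ref{lem:comp-forest} to decide in time $2^{\Oh(B)}\cdot n^2$ whether a feasible $k$-clustering of cost at most $B$ with respect to $F$, $t$, $\ell$ exists.

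For the running time, building the tables $\omega_i$ for a single $F$ costs $f\leq B$ applications of Lemma~\ref{lem:tree}, each run over at most $2^{\ell}\leq 4^{B}$ sets $X$ and at most $k\leq n$ values of $h$, for a total of $2^{\Oh(B\log B)}|\Sigma|^B\cdot(mn)^{\Oh(1)}$ time; adding the call to Lemma~\ref{lem:comp-forest} and then summing over the $2^{\Oh(B)}$ forests $F$ keeps the overall running time at $2^{\Oh(B\log B)}|\Sigma|^B\cdot(mn)^{\Oh(1)}$. I expect no single step to be a genuine obstacle here: the lemma is essentially an assembly of the earlier machinery, and the only place that needs care is the structural claim in the previous paragraph (that the composite clusters and the initial clusters meeting them induce a forest whose components have at least three vertices with all leaves being initial clusters), which is exactly what makes the brute-force guessing of $F$ correct; the remainder is bookkeeping to confirm that combining the per-tree tables via Lemma~\ref{lem:comp-forest} neither loses nor gains solutions and stays within the stated time bound.
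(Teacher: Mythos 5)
Your proposal is correct and follows essentially the same route as the paper: enumerate the $2^{\Oh(B)}$ candidate forests $F$, compute the per-component tables $\omega_i(X,h)$ via Lemma~\ref{lem:tree}, and combine them with Lemma~\ref{lem:comp-forest}, with the same running-time accounting. The structural justification you spell out (composite clusters have degree at least two, so leaves lie among the initial clusters and every component has at least three vertices) appears in the paper in the algorithm's setup preceding the lemma rather than inside its proof, but the argument is the same.
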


\begin{proof}
Recall that we consider all forests $F$ on $t+\ell$ vertices such that  (i) each connected component of $F$ has at least three vertices, and
(ii) $F$ admits a bipartition $(U,W)$ of its vertex set with $|U|=t$ and $|W|=\ell$ such that  the leaves of $F$ are in $W$.  
We list $2^{\Oh(B)}$ forests (see~\cite{Otter48}) in  $2^{\Oh(B)}$ time (see~\cite{WrightROM86}). Then for each $F$, we check whether there is 
a $k$-clustering $\mathcal{I}=\{I_1,\ldots,I_k\}$ for $\bfA$ of cost at most $B$ that is a feasible with respect to $F$,  $t$ and $\ell$. For this, we consider  connected components $F_1,\ldots,F_f$ of $F$ and use Lemma~\ref{lem:tree} to compute the tables of values of $\omega_i(X,h)$ for all $i\in\{1,\ldots,f\}$, nonempty $X\subseteq\{1,\ldots,\ell\}$ and positive $h\leq r$. This can be done in $2^{\Oh(B\log B)}|\Sigma|^B\cdot (mn)^{\Oh(1)}$ time.
Then we apply the algorithm from Lemma~\ref{lem:comp-forest} that in $2^{\Oh(B)}\cdot n^2$ time checks whether the required $\mathcal{I}$ exists.
If we find that there is $\mathcal{I}$ of cost at most $B$ for the considered $F$, we stop and return the yes-answer. Otherwise, we obtain that there is no solution.
To complete the proof, it remains no note that the total running time is $2^{\Oh(B\log B)}|\Sigma|^B\cdot (mn)^{\Oh(1)}$.
\end{proof}
 
 Recall that if there is a solution with $t$ composite clusters such that exactly $\ell$ initial clusters have nonempty intersections with the composite clusters of the solution, then the probability that these $\ell$ clusters are assigned distinct colors in a random coloring $\psi$ is at least $e^{-2k}$. Then the probability that some initial clusters having nonempty intersections with the composite clusters of the solution obtain the same color is at most $1-e^{-2B}$. This implies that if we try $e^{2B}$ random colorings, then the probability that for every coloring, some initial clusters having nonempty intersections with the composite clusters of the solution are of the same color is at most $(1-e^{-2B})^{e^{2B}}\leq e^{-1}$.  This leads to the following randomized algorithm. We consider $N=\lceil e^{2B}\rceil$ random coloring $\psi$, and for each coloring, we verify the existence of a colorful solution using Lemma~\ref{lem:colorful}. If a colorful solution exists for $\psi$, then we report that  $(\bfA,\Sigma,r,k,p,q)$ admits a required solution and stop. 
 Otherwise, if we fail to find a colorful solution for every $\psi$, we conclude that with the probability at least $e^{-1}$ there is no solution. 
 The running time of this algorithm is $e^{2B}\cdot 2^{\Oh(B\log B)}|\Sigma|^B\cdot (mn)^{\Oh(1)}$, that is, $2^{\Oh(B\log B)}|\Sigma|^B\cdot (mn)^{\Oh(1)}$.
 
 This algorithm can be derandomized by standard tools~\cite{AlonYZ95} (see also~\cite[Chapter~5]{CyganFKLMPPS15}).  More precisely, we replace random colorings by functions   
  from a perfect hash family.
 
Let $s$ and $\ell$ be positive integers such that $s\geq \ell$. A set $\mathcal{F}$ of functions $\xi \colon\{1,\ldots,s\}\rightarrow\{1,\ldots,\ell\}$ is said to be an \emph{$(s,\ell)$-perfect hash family} if for every $X\subseteq \{1,\ldots,s\}$ of size $\ell$, there is $\xi\in \mathcal{F}$ such that $\xi|_X$ is a bijection between $X$ and $\{1,\ldots,\ell\}$.
 
We use the result of Naor, Schulman and Srinivasan~\cite{NaorSS95} (see also~\cite[Chapter~5]{CyganFKLMPPS15}).

\begin{proposition}\label{prop:hash}
For every $s\geq \ell\geq 1$, there is an $(s,\ell)$-perfect hash family $\mathcal{F}$ of size $e^{\ell}\ell^{\Oh(\log \ell)}\cdot \log s$ that can be constructed in
 $e^{\ell}\ell^{\Oh(\log\ell)}\cdot s\log s$ time.
 \end{proposition}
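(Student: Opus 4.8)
Since Proposition~\ref{prop:hash} is the perfect hash family result of Naor, Schulman and Srinivasan, the plan is to reconstruct their splitter‑based construction; write $[n]$ for $\{1,\ldots,n\}$. The guiding idea is that producing a family of maps $[s]\to[\ell]$ injective on every $\ell$‑subset becomes cheap once we are allowed to first hash into a \emph{polynomially larger} range and only afterwards compress down to $[\ell]$: a uniformly random map $[s]\to[\ell^2]$ is injective on a fixed $\ell$‑subset with probability at least $\prod_{i=0}^{\ell-1}(1-i/\ell^2)\ge 1-\binom{\ell}{2}/\ell^2\ge 1/2$, whereas a uniformly random map into a range of size exactly $\ell$ is injective on a fixed $\ell$‑subset only with probability $\ell!/\ell^{\ell}=\Theta(\sqrt{\ell}\,e^{-\ell})$. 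The tiny $e^{-\ell}$ factor in the second probability is precisely what forces the $e^{\ell}$ in the statement, and the whole point of the two‑stage design is to pay it only over a range of size $\ell^2$, where the number of $\ell$‑subsets depends on $\ell$ alone, rather than over $[s]$.

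Concretely, first I would build $\mathcal{G}_1$, a family of maps $[s]\to[\ell^2]$ injective on every $\ell$‑subset: by the union bound over the $\binom{s}{\ell}\le s^{\ell}$ many $\ell$‑subsets, $\Oh(\ell\log s)$ independent uniform maps already have this property with positive probability, and this becomes an explicit construction within the claimed time via the method of conditional expectations, using the expected number of $\ell$‑subsets not yet split injectively as a pessimistic estimator (NSS in fact give a direct explicit such splitter of size $\ell^{\Oh(1)}\log s$). Next I would build $\mathcal{G}_2$, an $(\ell^2,\ell)$‑perfect hash family; since its size may depend on $\ell$ only, the same union bound — now over the $\binom{\ell^2}{\ell}$ subsets of $[\ell^2]$ and with single‑map success probability $\Theta(\sqrt{\ell}\,e^{-\ell})$ — shows $e^{\ell}\cdot\ell^{\Oh(1)}$ uniform maps suffice, and an explicit recursive (range‑halving) derandomized construction realizes this with an extra $\ell^{\Oh(\log\ell)}$ overhead. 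Finally I would output $\mathcal{F}=\{\,h\circ g\mid g\in\mathcal{G}_1,\ h\in\mathcal{G}_2\,\}$. For correctness, given an $\ell$‑subset $X\subseteq[s]$ pick $g\in\mathcal{G}_1$ injective on $X$; then $g(X)$ is an $\ell$‑subset of $[\ell^2]$, so some $h\in\mathcal{G}_2$ is injective on $g(X)$, whence $h\circ g$ maps $X$ bijectively onto a size‑$\ell$ subset of $[\ell]$, i.e.\ onto $[\ell]$. The size is $|\mathcal{G}_1|\cdot|\mathcal{G}_2|=e^{\ell}\ell^{\Oh(\log\ell)}\log s$, and the construction time is $e^{\ell}\ell^{\Oh(\log\ell)}\cdot s\log s$, dominated by producing the $\ell^{\Oh(1)}\log s$ maps of $\mathcal{G}_1$ on the domain $[s]$ together with the $e^{\ell}\ell^{\Oh(\log\ell)}$ maps of $\mathcal{G}_2$.

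The routine part here is existence: each of the two families is produced by a one‑line first‑moment argument. The real difficulty, and the technical heart of NSS, is \emph{explicitness without losing the size bound} — deterministically constructing the families in the stated running time while keeping the number of functions at $e^{\ell}\ell^{\Oh(\log\ell)}\log s$. This needs an efficiently evaluable pessimistic estimator so that conditional expectations can be applied to the covering problem, and a staged reduction of the range for $\mathcal{G}_2$ so that the $\ell^{\Oh(1)}$ overhead incurred at each of the $\Oh(\log\ell)$ stages multiplies only to $\ell^{\Oh(\log\ell)}$ rather than to $\ell^{\Oh(\ell)}$; keeping the dependence of $\mathcal{G}_1$ on $\ell$ polynomial rather than exponential is the other delicate point. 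This is exactly why the proposition is invoked here as a black box rather than reproved.
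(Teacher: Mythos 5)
Your reconstruction is sound: the two-stage splitter design (maps $[s]\to[\ell^2]$ of size $\ell^{\Oh(1)}\log s$ composed with an $(\ell^2,\ell)$-perfect hash family of size $e^{\ell}\ell^{\Oh(\log\ell)}$), the probability estimates, and the composition argument are exactly the Naor--Schulman--Srinivasan construction, and the size and time bounds multiply out correctly. The paper itself gives no proof of this proposition and simply cites~\cite{NaorSS95} as a black box, so deferring the derandomization machinery to that source, as you do, is entirely in line with how the result is used here.
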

 
 We consider our set of initial clusters $\mathcal{J}=\{J_1,\ldots,J_s\}$ and construct an $(s,\ell)$-perfect hash family $\mathcal{F}$. Since $\ell\leq 2B$ and $s\leq n$, $|\mathcal{F}|=e^{2B}(2B)^{\Oh(\log B)}\cdot \log n$ and $\mathcal{F}$ can be constructed in  $e^{2B}(2B)^{\Oh(\log B)}\cdot n\log n$ time by Proposition~\ref{prop:hash}.  For every $\xi\in\mathcal{F}$, we define the coloring $\psi_\xi\colon\mathcal{J}\rightarrow\{1,\ldots,\ell\}$ by setting $\psi_\xi(J_i)=\xi(i)$ for $i\in\{1,\ldots,s\}$. 
 
 If $(\bfA,\Sigma,k,B,p,q)$ admits a solution with $t$ composite clusters such that exactly $\ell$ initial clusters have nonempty intersections with the composite clusters, then there is $\xi\in\mathcal{F}$ such that $\psi_\xi$ colors these initial clusters by distinct colors by the definition of an $(s,\ell)$-perfect hash family. Then our randomized algorithm can be restated as follows. For each $\xi\in F$,  we verify the existence of a colorful solution with respect to $\psi_\xi$ using Lemma~\ref{lem:colorful}. If a colorful solution exists for some $\psi_\xi$, then we report that  $(\bfA,\Sigma,k,B,p,q)$ admits a solution and stop. Otherwise, we conclude that there is no such a solution. This immediately gives the following lemma.
 
 \begin{lemma}\label{lem:fixed-ell} 
 Given positive integers $t\leq B$ and $\ell$ such that $t+1\leq \ell\leq 2B$, it can be decided in 
 $2^{\Oh(B\log B)}|\Sigma|^B\cdot (mn)^{\Oh(1)}$ time whether $(\bfA,\Sigma,k,B,p,q)$ has an acyclic solution with $t$ composite clusters such that exactly $\ell$ initial clusters have nonempty intersections with the composite clusters of the solution. 
 \end{lemma}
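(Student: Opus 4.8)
The plan is to derandomize the Monte Carlo procedure described above by replacing the random colorings of $\mathcal{J}$ with the colorings induced by an explicit perfect hash family, following the standard color-coding toolkit of Alon, Yuster and Zwick~\cite{AlonYZ95}. Concretely, I would take the set of initial clusters $\mathcal{J}=\{J_1,\ldots,J_s\}$ (so $s\le n$) and, since $\ell\le 2B$, apply Proposition~\ref{prop:hash} to build an $(s,\ell)$-perfect hash family $\mathcal{F}$ of functions $\xi\colon\{1,\ldots,s\}\rightarrow\{1,\ldots,\ell\}$. Each $\xi\in\mathcal{F}$ yields a coloring $\psi_\xi\colon\mathcal{J}\rightarrow\{1,\ldots,\ell\}$ defined by $\psi_\xi(J_i)=\xi(i)$, and the algorithm runs the test of Lemma~\ref{lem:colorful} on $\psi_\xi$ for every $\xi\in\mathcal{F}$, answering \yes as soon as some call reports a colorful solution and \no if all of them fail.

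For correctness I would argue the two implications separately. Suppose $(\bfA,\Sigma,k,B,p,q)$ has an acyclic solution $\mathcal{I}=\{I_1,\ldots,I_k\}$ of cost at most $B$ with exactly $t$ composite clusters such that exactly $\ell$ initial clusters have nonempty intersections with the composite clusters. These $\ell$ initial clusters form a set $\{J_i\mid i\in X\}$ for some $X\subseteq\{1,\ldots,s\}$ with $|X|=\ell$, so by the defining property of an $(s,\ell)$-perfect hash family there is $\xi\in\mathcal{F}$ for which $\xi|_X$ is a bijection onto $\{1,\ldots,\ell\}$; hence $\psi_\xi$ colors these $\ell$ clusters with pairwise distinct colors, which means $\mathcal{I}$ is a colorful solution with respect to $\psi_\xi$, and Lemma~\ref{lem:colorful} detects it. Conversely, whenever Lemma~\ref{lem:colorful} reports success for some $\psi_\xi$, the witnessing clustering is by definition an acyclic solution of cost at most $B$ with $t$ composite clusters and exactly $\ell$ active initial clusters, so the \yes answer is justified.

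The running time is the product of $|\mathcal{F}|$ and the cost of one call of Lemma~\ref{lem:colorful}. By Proposition~\ref{prop:hash} with $\ell\le 2B$ and $s\le n$ we have $|\mathcal{F}|=e^{2B}(2B)^{\Oh(\log B)}\cdot\log n=2^{\Oh(B\log B)}\cdot(mn)^{\Oh(1)}$ and $\mathcal{F}$ is constructed within the same time bound, while each invocation of Lemma~\ref{lem:colorful} takes $2^{\Oh(B\log B)}|\Sigma|^B\cdot(mn)^{\Oh(1)}$ time; the total is therefore $2^{\Oh(B\log B)}|\Sigma|^B\cdot(mn)^{\Oh(1)}$, as claimed. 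I do not expect a real obstacle here: the statement is essentially a bookkeeping wrapper around the randomized analysis, and the only point needing a moment of care is that the hypothesis pins the number of active initial clusters to exactly $\ell$, which is precisely what makes the perfect-hash guarantee applicable, together with the observation that Lemma~\ref{lem:colorful} already enforces all remaining structural requirements (the count of composite clusters, acyclicity of $G(\mathcal{I},\mathcal{J})$, and the size constraints $p\le|I_i|\le q$), so nothing further must be checked once a colorful solution is produced.
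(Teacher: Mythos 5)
Your proposal is correct and follows essentially the same route as the paper: after Lemma~\ref{lem:colorful}, the paper also derandomizes by constructing an $(s,\ell)$-perfect hash family via Proposition~\ref{prop:hash}, inducing colorings $\psi_\xi$ of the initial clusters, and invoking Lemma~\ref{lem:colorful} for each $\xi$, with the same correctness argument and running-time bound. No gaps.
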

 
 Recall that we try all the possible values of the number of composite clusters $t\leq B$.
 Since for any acyclic solution, the number 
 $\ell$ of initial clusters having nonempty intersections with the composite clusters is at least $t+1$ and at most $2B$, by trying all $\ell$ within this interval, we can decide in  $2^{\Oh(B\log B)}|\Sigma|^B\cdot (mn)^{\Oh(1)}$ time whether $(\bfA,\Sigma,k,B,p,q)$ has an acyclic solution with $t$ composite clusters. Recall that by Lemma~\ref{lem:forest},   $(\bfA,\Sigma,k,B,p,q)$ is a yes-instance if and only if it admits an acyclic solution. We obtain that we can solve \probCClust for $(\bfA,\Sigma,k,B,p,q)$ in   $2^{\Oh(B\log B)}|\Sigma|^B\cdot (mn)^{\Oh(1)}$ time and this concludes the proof of Theorem~\ref{thm:FPT-k}.
 
\section{Clustering with size constraints}\label{sec:variants} 
 In this section, we discuss   other variants of \probClust with cluster size constraints: 
  \probBClust and  \probFBClust. We also discuss 
 the special case of \probCClust for $p=q=n/k$ that is equivalent   to \probBClust for $\delta=0$  and to \probFBClust for $\alpha=1$. We refer to this problem as \probEClust.
 
 Recall that by Theorem~\ref{thm:NPhard},  \probCClust is \classNP-complete for $k=2$ and $p=q=n/2$, that is, \probEClust  is \classNP-complete for $k=2$. Using the same arguments as in the proof of Theorem~\ref{thm:NPhard}, we can show the following more general claim.

 \begin{theorem}\label{thm:NPhard-balance}
For every  fixed $\alpha\geq 1$ ($\delta\geq 0$, respectively), \probFBClust (\probBClust, respectively) is \classNP-complete for $k=2$ and binary matrices.
\end{theorem}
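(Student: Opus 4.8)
The plan is to reuse, essentially verbatim, the reduction from \probMC employed in the proof of Theorem~\ref{thm:NPhard}, relying on the single observation that a perfectly balanced $2$-clustering (one with $|I_1|=|I_2|$) automatically satisfies the factor-balance requirement $|I_i|\le\alpha|I_j|$ for every $\alpha\ge 1$ and the balance requirement $\big||I_1|-|I_2|\big|\le\delta$ for every $\delta\ge 0$. Membership in \classNP is routine: a certificate is a partition $\{I_1,I_2\}$, optimal medians are obtained by the majority rule (Observation~\ref{obs:majority}), and the cost bound together with the size constraints are checked in polynomial time.

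For hardness I would start from an instance $(G,\ell)$ of \probMC and form the auxiliary instance $(G',2\ell)$, where $G'$ is the disjoint union of two copies $G_1,G_2$ of $G$, exactly as in the proof of Theorem~\ref{thm:NPhard}. Feeding $(G',2\ell)$ into Feige's reduction (Lemma~\ref{lem:fiegeresult}) produces in polynomial time an instance $(\bfA,\Sigma,2,B)$ of \probClust with $\Sigma=\{0,1\}$ enjoying the property stated there. The output instance is then $(\bfA,\Sigma,2,B,\alpha)$ of \probFBClust (respectively $(\bfA,\Sigma,2,B,\delta)$ of \probBClust).

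To verify correctness, in the forward direction suppose $(G,\ell)$ has a cut $(S,\overline{S})$ of size at least $\ell$; placing $S$ inside $G_1$ and $\overline{S}$ inside $G_2$ gives a cut of $G'$ of size at least $2\ell$ whose two sides have equal cardinality, just as in the proof of Theorem~\ref{thm:NPhard}. By Lemma~\ref{lem:fiegeresult} we obtain a $2$-clustering $\{I_1,I_2\}$ for $\bfA$ of cost at most $B$ with $|I_1|=|I_2|$, which satisfies $|I_i|\le\alpha|I_j|$ for all $i,j$ and $\big||I_1|-|I_2|\big|=0\le\delta$; hence the constrained instance is a yes-instance. In the reverse direction, any solution of the constrained instance is in particular a $2$-clustering of $\bfA$ of cost at most $B$, so $(\bfA,\Sigma,2,B)$ is a yes-instance of \probClust, and therefore $(G',2\ell)$ is a yes-instance of \probMC by Lemma~\ref{lem:fiegeresult}; since $G'$ is two disjoint copies of $G$, one of them contains a cut of size at least $\ell$, so $(G,\ell)$ is a yes-instance of \probMC.

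The main point requiring care — and it is the only one — is that the balanced solution produced in the forward direction obeys the prescribed constraint for \emph{every} admissible value of the parameter, which holds precisely because equality of cluster sizes is the tightest case of both balance notions; no new technical difficulty arises. For the particular values $\alpha=1$ and $\delta=0$ the statement is already contained in Theorem~\ref{thm:NPhard} (these coincide with \probEClust for $k=2$), and the argument above uniformly covers all remaining values as well.
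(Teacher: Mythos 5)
Your proposal is correct and matches the paper's intent exactly: the paper proves Theorem~\ref{thm:NPhard-balance} by invoking ``the same arguments as in the proof of Theorem~\ref{thm:NPhard}'', i.e., the doubled-graph reduction via Lemma~\ref{lem:fiegeresult} producing a perfectly balanced solution, which automatically satisfies the factor-balance constraint for every $\alpha\geq 1$ and the balance constraint for every $\delta\geq 0$. Your write-up just spells out these details; no difference in approach and no gap.
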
 
 
 From the positive side, we observe that  \probBClust and \probFBClust admit Turing reductions to \probCClust, that is, \probCClust is the most general among the considered problems. For this, we make the following straightforward observation.

\begin{observation}\label{obs:reduction}
An instance $(\bfA,\Sigma,k,B,\delta)$ of \probBClust (an instance $(\bfA,\Sigma,k,B,\alpha)$ of \probFBClust, respectively) is a yes-instance if and only if there is a nonnegative integer $p$ such that  $\frac{n}{k}-\delta\leq p\leq \frac{n}{k}$ ( $\frac{n}{\alpha k}\leq p\leq \frac{n}{k}$, respectively) and for $q=p+\delta$ ($q=\alpha p$, respectively), $(\bfA,\Sigma,k,B,p,q)$ is a yes-instance of \probCClust. 
\end{observation}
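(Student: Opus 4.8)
The plan is to establish both equivalences by a direct translation between solutions, treating \probBClust in detail and then pointing out the (tiny) modification needed for \probFBClust. Every step is a one-line consequence of the definitions, so the proof will be short.

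For \probBClust, in the forward direction I would start from a $k$-clustering $\mathcal{I}=\{I_1,\ldots,I_k\}$ with $\cost(\mathcal{I})\le B$ and $||I_i|-|I_j||\le\delta$ for all $i,j$, and set $p:=\min_{1\le i\le k}|I_i|$ and $q:=p+\delta$. Then $p$ is a nonnegative integer with $p\le q$, and $p\le |I_i|\le q$ for every $i$ because $|I_i|\le\min_j|I_j|+\delta=q$; moreover, since $\mathcal{I}$ partitions $\{1,\ldots,n\}$, we have $pk\le\sum_i|I_i|=n\le qk$, i.e. $\tfrac{n}{k}-\delta\le p\le\tfrac{n}{k}$. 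Hence $\mathcal{I}$ witnesses that $(\bfA,\Sigma,k,B,p,q)$ is a yes-instance of \probCClust for this choice of $p$. In the backward direction I would take the $p$ supplied by the statement together with a $k$-clustering $\mathcal{I}$ of cost at most $B$ with $p\le|I_i|\le q=p+\delta$ for all $i$; then $||I_i|-|I_j||\le q-p=\delta$, so the same $\mathcal{I}$ solves the \probBClust instance.

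For \probFBClust the only difference is that the natural upper bound $\alpha p$ need not be an integer, so I take $q:=\lfloor\alpha p\rfloor$. Forward: with $p:=\min_i|I_i|$, the factor-balance condition gives $|I_i|\le\alpha\min_j|I_j|=\alpha p$, and since $|I_i|$ is an integer this strengthens to $|I_i|\le\lfloor\alpha p\rfloor=q$; combined with $pk\le n\le\alpha pk$ this gives $\tfrac{n}{\alpha k}\le p\le\tfrac{n}{k}$. Backward: from $p\le|I_i|\le\lfloor\alpha p\rfloor\le\alpha p\le\alpha|I_j|$ for all $i,j$ we recover exactly the required $|I_i|\le\alpha|I_j|$.

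There is no real obstacle here; the only points requiring minor care are the rounding in the factor-balanced case and checking that the interval of admissible values of $p$ coincides on the two sides of each equivalence. I would also remark that, since there are only polynomially many (at most $n$) candidate values of $p$, this observation together with Theorem~\ref{thm:FPT-k} immediately yields algorithms for \probBClust and \probFBClust running in $2^{\Oh(B\log B)}|\Sigma|^B\cdot(mn)^{\Oh(1)}$ time — which is the point of phrasing the reduction as a Turing reduction — although that consequence is not needed to prove the equivalence itself.
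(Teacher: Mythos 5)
Your proof is correct and is exactly the direct, definitional translation the paper has in mind -- the paper states this as a ``straightforward observation'' without proof, relying on the same choice $p=\min_i|I_i|$ and the bounds $pk\le n\le qk$ that you spell out. Your extra care with $q=\lfloor\alpha p\rfloor$ in the factor-balanced case (since \probCClust requires integer $q$, and cluster sizes are integers so the two readings coincide) is a harmless and sensible refinement of the stated $q=\alpha p$.
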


Thus, given an algorithm $\mathcal{A}$ for \probCClust, we can solve  \probBClust for $(\bfA,\Sigma,k,B,\delta)$ as follows. We consider all $p$ starting from 
$\max\{1,\lceil\frac{n}{k}\rceil-\delta\}$ up to $\lfloor\frac{n}{k}\rfloor$, and use $\mathcal{A}$ to solve \probCClust for $(\bfA,\Sigma,k,B,p,\min\{n,p+\delta\})$. If $\mathcal{A}$ returns ``yes'' for one of the values of $p$, we conclude that $(\bfA,\Sigma,k,B,\delta)$ is a yes-instance of \probBClust and stop. Otherwise, if $\mathcal{A}$ always returns ``no'', $(\bfA,\Sigma,k,B,\delta)$ is a no-instance. 
Clearly, \probFBClust can be solved in similar way.  This allows to obtain the following corollary of Theorem~\ref{thm:FPT-k}.

 \begin{corollary}\label{cor:FPT-k}
\probBClust and \probFBClust are solvable in time $2^{\Oh(B\log B)}|\Sigma|^B\cdot (mn)^{\Oh(1)}$. 
\end{corollary}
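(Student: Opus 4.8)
The plan is to derive the corollary from Theorem~\ref{thm:FPT-k} via the Turing reductions sketched in Observation~\ref{obs:reduction}. First I would treat \probBClust. Given an instance $(\bfA,\Sigma,k,B,\delta)$, the algorithm loops over every integer $p$ with $\max\{1,\lceil n/k\rceil-\delta\}\leq p\leq\lfloor n/k\rfloor$, sets $q=\min\{n,p+\delta\}$, and runs the algorithm of Theorem~\ref{thm:FPT-k} on the \probCClust instance $(\bfA,\Sigma,k,B,p,q)$; it returns ``yes'' if some call does, and ``no'' otherwise. For \probFBClust on $(\bfA,\Sigma,k,B,\alpha)$ the loop instead runs over integers $p$ with $\lceil n/(\alpha k)\rceil\leq p\leq\lfloor n/k\rfloor$, setting $q=\lfloor\alpha p\rfloor$ for each such $p$ and again invoking Theorem~\ref{thm:FPT-k}.

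For correctness I would argue exactly as in Observation~\ref{obs:reduction}. If $\{I_1,\ldots,I_k\}$ is a solution to \probBClust with cost at most $B$, then $p_0:=\min_i|I_i|$ satisfies $kp_0\leq n\leq k(p_0+\delta)$, hence $n/k-\delta\leq p_0\leq n/k$, and every $|I_i|$ lies in $[p_0,p_0+\delta]$; thus this clustering is a valid solution to the \probCClust instance created for $p=p_0$. Conversely, any $k$-clustering with all sizes in $[p,p+\delta]$ is $\delta$-balanced, so every ``yes'' answer of a subcall yields a genuine \probBClust solution. The same bookkeeping handles \probFBClust: with $p_0:=\min_i|I_i|$ one has $|I_i|\leq\alpha p_0$ for all $i$, which forces $|I_i|\leq\lfloor\alpha p_0\rfloor$ since cluster sizes are integers, and $\lceil n/(\alpha k)\rceil\leq p_0\leq\lfloor n/k\rfloor$.

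The running time bound is then immediate: in each case the loop performs at most $n+1$ iterations, and each iteration costs $2^{\Oh(B\log B)}|\Sigma|^B\cdot(mn)^{\Oh(1)}$ by Theorem~\ref{thm:FPT-k}, so the overall running time is still $2^{\Oh(B\log B)}|\Sigma|^B\cdot(mn)^{\Oh(1)}$. Computing the $p$-range and the derived \probCClust instances is clearly polynomial and does not affect the bound.

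The only point requiring a moment of care — the ``obstacle'', such as it is — is the non-integrality in the factor-balanced case: one must verify that replacing the real threshold $\alpha p$ by the integer $\lfloor\alpha p\rfloor$ loses no feasible clustering (it does not, as all sizes are integers) and creates no spurious one (it does not, since $\lfloor\alpha p\rfloor\leq\alpha p$), and that the endpoints of the $p$-loop are precisely those forced by $p\leq n/k$ together with $q=\lfloor\alpha p\rfloor\geq n/k$. Once this is checked, Corollary~\ref{cor:FPT-k} follows directly from Theorem~\ref{thm:FPT-k}.
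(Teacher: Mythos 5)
Your proposal is correct and follows essentially the same route as the paper: a Turing reduction that loops over the at most $n$ candidate values of $p$, invokes the algorithm of Theorem~\ref{thm:FPT-k} on the corresponding \probCClust instance with $q=p+\delta$ (respectively $q$ derived from $\alpha p$), and absorbs the loop into the polynomial factor, exactly as in Observation~\ref{obs:reduction} and the discussion preceding the corollary. Your explicit treatment of the integrality issue via $\lfloor\alpha p\rfloor$ is a minor point the paper glosses over, but it matches the intended argument.
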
   

\section{Conclusion}\label{sec:concl}
 We proved that \probCClust can be solved in $2^{\Oh(B\log B)}|\Sigma|^B\cdot (mn)^{\Oh(1)}$ time. This also implies that the same holds for \probBClust and \probFBClust. The natural question is whether it is possible to improve the dependence on $B$? We do not know the answer to this question even for the special case of \probEClust.
 
 Another important direction of research in the investigation of kernelization for clustering problems with size constraints. In~\cite[Theorem~3]{FominGP20}, Fomin, Golovach and Panolan proved that \probClust  
 does not admit a polynomial kernel when parmeterized by $B$, unless $\classNP\subseteq\classCoNP/{\sf poly}$. This immediately implies the following proposition.
   
 \begin{proposition}\label{thm:no-kern}   
\probCClust (\probBClust and \probFBClust, respectively) has no polynomial kernel when paramterized by $B$, unless $\classNP\subseteq\classCoNP/{\sf poly}$, even if $\Sigma=\{0,1\}$. 
\end{proposition}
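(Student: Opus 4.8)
The plan is to obtain the statement from \cite[Theorem~3]{FominGP20} by polynomial parameter transformations. Recall that if $P$ is \classNP-hard, $Q$ is in \classNP, there is a polynomial-time reduction from $P$ to $Q$ that maps an instance with parameter $k$ to an equivalent instance with parameter bounded by $k^{\Oh(1)}$, and $Q$ admits a polynomial kernel, then $P$ admits a polynomial kernel as well (see, e.g., \cite{FominLSZ19}). Since binary \probClust is \classNP-complete already for $k=2$ by \cite{Feige14b}, and by \cite[Theorem~3]{FominGP20} it admits no polynomial kernel parameterized by $B$ unless $\classNP\subseteq\classCoNP/{\sf poly}$, it suffices to exhibit, for each of the three problems, a polynomial parameter transformation from binary \probClust that keeps the alphabet binary and the parameter $B$ unchanged; each of \probCClust, \probBClust and \probFBClust is clearly in \classNP, so the proposition then follows. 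Throughout we may assume $k\leq n$, as otherwise the input \probClust instance is a trivial no-instance and can be mapped to a fixed no-instance.

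First I would handle \probCClust: as already observed in the excerpt, an instance $(\bfA,\{0,1\},k,B)$ of \probClust is equivalent to the instance $(\bfA,\{0,1\},k,B,1,n)$ of \probCClust, and this transformation preserves $B$. For \probBClust I would map $(\bfA,\{0,1\},k,B)$ to $(\bfA,\{0,1\},k,B,\delta)$ with $\delta=n$; since $||I_i|-|I_j||\leq n$ holds for every pair of clusters, the balance constraint is vacuous and the two instances are equivalent. For \probFBClust I would likewise map $(\bfA,\{0,1\},k,B)$ to $(\bfA,\{0,1\},k,B,\alpha)$ with $\alpha=n$; as the clusters of a $k$-clustering are nonempty, $|I_i|\leq n\leq \alpha\,|I_j|$ always holds, so again the constraint is vacuous and the transformation is correct. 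In all three cases $B$ is left unchanged and the matrix stays binary, so the hypotheses of the transfer result above are satisfied.

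I do not expect any real obstacle: the whole argument is a routine application of the polynomial-parameter-transformation framework to three transformations that merely append a vacuous size constraint, and the only points worth stating explicitly are that the three problems lie in \classNP and that the added constraints ($p=1$ and $q=n$; $\delta=n$; $\alpha=n$) impose no restriction on the admissible clusterings. This is why the proposition follows ``immediately'' from \cite[Theorem~3]{FominGP20}.
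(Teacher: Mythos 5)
Your proposal is correct and matches the paper's reasoning: the paper treats the proposition as an immediate consequence of \cite[Theorem~3]{FominGP20}, implicitly via exactly the vacuous-constraint reductions you spell out ($p=1,q=n$ for \probCClust, and trivially large $\delta$ and $\alpha$ for \probBClust and \probFBClust), combined with the standard polynomial-parameter-transformation transfer argument. Your write-up just makes explicit the membership-in-\classNP and \classNP-hardness conditions that the paper leaves unstated.
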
    
 
 Also by Theorems~\ref{thm:NPhard} and~\ref{thm:NPhard-balance} the problems are already \classNP-hard for $k$. Thus, for kernelization, we have to consider more restrictive parameterizations. 
 Up to now, we have only partial results. In particular, we can show  \probBClust admits a polynomial kernel when parameterzied by $B$, $k$ and $\delta$. 
 
We start with some auxiliary results. First, we observe that if there is an initial cluster $J$ of size at least $B+1$, then at least one median should be the same as a column of the input matrix with the index in $J$.
 
\begin{observation}\label{obs:init-mean}
 Let $\{I_1,\ldots,I_k\}$ be an $k$-clustering for a matrix $\bfA=(\bfa_1,\ldots,\bfa_n)$ of cost at most $B$ and let $J\subseteq\{1,\ldots,n\}$ be an initial cluster with $|J|\geq B+1$.  Then there is $i\in\{1,\ldots,k\}$ such that an optimal median of $I_i$ coincides with $\bfs=\bfa_j$ for $j\in J$.
\end{observation}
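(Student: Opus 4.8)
The plan is a short proof by contradiction that exploits the defining property of an initial cluster: since $J$ is an initial cluster, all columns $\bfa_j$ with $j\in J$ are equal to one and the same vector $\bfs\in\Sigma^m$. First I would invoke Observation~\ref{obs:majority} to fix a family of optimal medians $\bfc_1,\ldots,\bfc_k$, that is, medians with $\sum_{i=1}^k\sum_{j\in I_i}\hdist(\bfc_i,\bfa_j)=\cost(I_1,\ldots,I_k)\le B$; recall that a tuple of medians is optimal precisely when each $\bfc_i$ minimizes $\sum_{j\in I_i}\hdist(\bfc_i,\bfa_j)$ individually, so it is enough to exhibit one index $i$ with $\bfc_i=\bfs$, and such a $\bfc_i$ is automatically an optimal median of $I_i$.

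Next I would assume, for the sake of contradiction, that $\bfc_i\neq\bfs$ for every $i\in\{1,\ldots,k\}$. For each $j\in J$ let $i(j)$ denote the index of the cluster containing $j$; since $\bfa_j=\bfs\neq\bfc_{i(j)}$, we have $\hdist(\bfc_{i(j)},\bfa_j)\ge 1$. Summing only over the indices in $J$ and bounding the full clustering cost from below yields
\[
B\ \ge\ \sum_{i=1}^k\sum_{j\in I_i}\hdist(\bfc_i,\bfa_j)\ \ge\ \sum_{j\in J}\hdist(\bfc_{i(j)},\bfa_j)\ \ge\ |J|\ \ge\ B+1,
\]
a contradiction. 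Hence some $\bfc_i$ equals $\bfs=\bfa_j$ with $j\in J$, which proves the statement.

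There is no serious technical obstacle here; the only point requiring care is that one must \emph{not} argue through ``the cluster that contains $J$'', because the size constraints (or simply an adversarial clustering) may split $J$ across several clusters, so $J$ need not be contained in any single $I_i$ and no individual cluster containing a part of $J$ need be large. The counting above sidesteps this by charging the at least $B+1$ identical copies of $\bfs$ to whichever clusters happen to receive them, irrespective of how $J$ is distributed; this is exactly the estimate used in the first part of Observation~\ref{obs:big-clust}, transplanted from a large cluster to a large initial cluster.
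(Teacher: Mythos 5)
Your proof is correct and follows essentially the same route as the paper: both fix optimal medians, assume none equals $\bfs$, and derive the contradiction $B\geq\sum_{j\in J}\hdist(\bfc_{i(j)},\bfa_j)\geq |J|\geq B+1$ (the paper writes this sum as $\sum_{i=1}^{k}\sum_{j\in J\cap I_i}\hdist(\bfc_i,\bfs)$, which is the same quantity). Your remark that $J$ need not lie in a single cluster is a fair caveat, and the counting handles it exactly as in the paper.
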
   
 
\begin{proof}
For the sake of contradiction, assume that  medians $\bfc_1,\ldots,\bfc_k$ for the clusters $\{I_1,\ldots,I_k\}$, respectively, are distinct from $\bfs$. Then
\begin{equation*}
 \sum_{i=1}^{k}\sum_{j\in I_i}\hdist(\bfc_i,\bfa_j) \geq \sum_{i=1}^{k} \sum_{j \in J \cap I_i}\hdist(\bfc_i,\bfs)\geq |J|>B
\end{equation*}
contradicting that the cost of $\{I_1,\ldots,I_k\}$ is at most $B$.
\end{proof}   
   
Our next lemma shows that if there is a clustering such that a median $\bfc_i$ coincides with a column $\bfa_j$, then  we can either collect all the elements of the initial cluster $J$ containing $j$ in the same cluster of a solution or form a cluster of a solution out of its elements.

   \begin{lemma}\label{lem:exchange}
 Let $\{I_1,\ldots,I_k\}$ be a $k$-clustering for a matrix $\bfA=(\bfa_1,\ldots,\bfa_n)$ with optimal medians $\bfc_1,\ldots,\bfc_k$, respectively.
 Let also $\bfS\subseteq\{\bfc_1,\ldots,\bfc_k\}$ be the set of medians coinciding  with columns of $\bfA$.
 Then there is a $k$-clustering $\{I_1',\ldots,I_k'\}$ for $\bfA$ such that  
 \begin{itemize}
 \item[(i)] $|I_i'|=|I_i|$ for all $i\in\{1,\ldots,k\}$, 
 \item[(ii)] $\sum_{i=1}^k\sum_{j\in I_i'}\hdist(\bfc_i,\bfa_j)\leq \sum_{i=1}^k\sum_{j\in I_i}\hdist(\bfc_i,\bfa_j)$, and 
 \item[(iii)] for every $\bfs\in\bfS$ and the initial cluster $J$ such that $\bfs=\bfa_j$ for $j\in J$, there is $i\in\{1,\ldots,k\}$ such that 
 either $J\subseteq I_i'$ or $I_i'\subset J$.
 \end{itemize}
 \end{lemma}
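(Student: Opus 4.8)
The plan is to obtain $\{I_1',\ldots,I_k'\}$ as a cost‑nonincreasing, size‑preserving rearrangement of $\{I_1,\ldots,I_k\}$ that is extremal for a suitable potential, and then to show that extremality forces (iii). First, for each $\bfs\in\bfS$ I would fix one index $i_0(\bfs)\in\{1,\ldots,k\}$ with $\bfc_{i_0(\bfs)}=\bfs$; since the vectors of $\bfS$ are pairwise distinct, the indices $i_0(\bfs)$ are pairwise distinct as well. Write $J_{\bfs}$ for the initial cluster with $\bfs=\bfa_j$ for $j\in J_{\bfs}$; then $J_{\bfs}$ is exactly the set of indices of columns equal to $\bfs$, so the sets $J_{\bfs}$ for $\bfs\in\bfS$ are pairwise disjoint, and for any $y$ we have $y\notin J_{\bfs}$ if and only if $\hdist(\bfs,\bfa_y)\geq 1$.

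Next, let $\mathcal{C}$ be the family of $k$-clusterings $\mathcal{I}'=\{I_1',\ldots,I_k'\}$ with $|I_i'|=|I_i|$ for all $i$ and $\sum_{i=1}^k\sum_{j\in I_i'}\hdist(\bfc_i,\bfa_j)\leq \sum_{i=1}^k\sum_{j\in I_i}\hdist(\bfc_i,\bfa_j)$. This family is nonempty (it contains $\{I_1,\ldots,I_k\}$) and finite, so I can pick $\mathcal{I}'\in\mathcal{C}$ minimizing the potential $\Phi(\mathcal{I}')=\sum_{\bfs\in\bfS}\bigl|J_{\bfs}\setminus I_{i_0(\bfs)}'\bigr|$. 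Conditions (i) and (ii) hold by the definition of $\mathcal{C}$. For (iii), fix $\bfs\in\bfS$, put $i_0=i_0(\bfs)$, and suppose (iii) fails with $i=i_0$, that is, $J_{\bfs}\not\subseteq I_{i_0}'$ and $I_{i_0}'\not\subset J_{\bfs}$; since $I_{i_0}'=J_{\bfs}$ would give $J_{\bfs}\subseteq I_{i_0}'$, we may choose $x\in J_{\bfs}\setminus I_{i_0}'$ and $y\in I_{i_0}'\setminus J_{\bfs}$. Let $x\in I_{i_x}'$; then $i_x\neq i_0$ and $x\neq y$. Replace $I_{i_0}'$ by $(I_{i_0}'\setminus\{y\})\cup\{x\}$ and $I_{i_x}'$ by $(I_{i_x}'\setminus\{x\})\cup\{y\}$. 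All cluster sizes are preserved, and since $\bfa_x=\bfs$ and $\bfc_{i_0}=\bfs$, the resulting change of the cost equals $\hdist(\bfc_{i_x},\bfa_y)-\hdist(\bfc_{i_x},\bfs)-\hdist(\bfs,\bfa_y)$, which is at most $0$ by the triangle inequality for the Hamming distance. Hence the new clustering still lies in $\mathcal{C}$.

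Finally, I would check that the swap strictly decreases $\Phi$: only the modified clusters $I_{i_0}'$ and $I_{i_x}'$ affect $\Phi$, and only through the summands indexed by $\bfs$ and, possibly, by the unique $\bfs'$ with $i_0(\bfs')=i_x$ if it exists. The summand for $\bfs$ drops by exactly $1$, because $I_{i_0}'$ gains $x\in J_{\bfs}$ and loses $y\notin J_{\bfs}$. For $\bfs'$, disjointness of initial clusters gives $x\notin J_{\bfs'}$, so removing $x$ from $I_{i_x}'$ leaves $|J_{\bfs'}\setminus I_{i_x}'|$ unchanged, while inserting $y$ can only decrease it; all other summands are untouched. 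Thus $\Phi$ drops by at least $1$, contradicting minimality, so (iii) holds for every $\bfs\in\bfS$. The step requiring the most care is precisely this last bookkeeping — that a swap performed on behalf of one $\bfs$ cannot spoil the potential contribution of another median — and it is exactly the pairwise distinctness of the indices $i_0(\bfs)$ together with the pairwise disjointness of the initial clusters $J_{\bfs}$ that makes it go through.
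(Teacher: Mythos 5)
Your proof is correct and follows essentially the same route as the paper: you take an extremal cost-nonincreasing, size-preserving clustering with respect to a potential (your minimized $\sum_{\bfs\in\bfS}|J_{\bfs}\setminus I_{i_0(\bfs)}'|$ is equivalent to the paper's maximized $\sum_i |I_i'\cap J_i|$) and perform the same two-element exchange, whose cost change is bounded by the triangle inequality using $\bfc_{i_0}=\bfs$. The bookkeeping you flag (distinct designated indices and disjoint initial clusters) is exactly what the paper's argument relies on as well, so nothing is missing.
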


 \begin{proof}
 Let $\bfc_1,\ldots,\bfc_k$ be optimal medians for $I_1,\ldots,I_k$, respectively. Assume without loss of generality that $\bfS=\{\bfc_1,\ldots,\bfc_t\}$, and denote by $J_1,\ldots,J_t$ the initial clusters such that for every $i\in\{1,\ldots,t\}$, $\bfa_j=\bfc_i$ for $j\in J_i$. 
 Let $\mathcal{I}'=\{I_1',\ldots,I_k'\}$ be a $k$-clustering for $\bfA$ such that (a)~$|I_i'|=|I_i|$ for all $i\in\{1,\ldots,k\}$, (b) $\sum_{i=1}^k\sum_{j\in I_i'}\hdist(\bfc_i,\bfa_j)\leq \sum_{i=1}^k\sum_{j\in I_i}\hdist(\bfc_i,\bfa_j)$, and (c) $\sum_{i=1}^t|I_i'\cap J_i|$ is maximum. 
 We claim that $\mathcal{I}'$ satisfies conditions (i)--(iii) of the lemma. Clearly, (i) and (ii) are fulfilled by conditions (a) and (b) of the choice of $\mathcal{I}'$.  To show (iii), we prove that either $J_i\subseteq I_i'$ or $I_i'\subset J_i$ for every $i\in\{1,\ldots,t\}$.
 
 Assume to the contrary that there is $i\in\{1,\ldots,t\}$ such that 
 neither $J_i\subseteq I'_i$ nor $I'_i \subset J_i$.
 Then there is a cluster $I'_j$ for $j \in \{1,\ldots,k\}$ such that $j\neq i$, $I'_j\cap J \neq \emptyset$, and there is $h \in I'_i$ such that $h \notin J_i$. Let $\ell \in I'_j\cap J_i$.
 Consider the $k$-clustering $\mathcal{I}''=\{I_1'',\ldots,I_k''\}$ such that 
 $I_i''=(I_i'\cup\{\ell\})\setminus \{h\}$, $I_j''=(I_j'\cup\{h\})\setminus\{\ell\}$, and $I_h''=I_h'$ for $h\in\{1,\ldots,k\}$ such that $h\neq i,j$. In words, we exchange the elements $h$ and $\ell$ between $I_i'$ and $I_j'$. 
 Then
 \begin{equation*}
 \sum_{p=1}^k\sum_{q\in I_p'}\hdist(\bfc_p,\bfa_q)- \sum_{p=1}^k\sum_{q\in I_p''}\hdist(\bfc_p,\bfa_q)=\hdist(\bfc_i,\bfa_h)+\hdist(\bfc_j,\bfa_\ell)-\hdist(\bfc_i,\bfa_\ell)-\hdist(\bfc_j,\bfa_h), 
 \end{equation*}
 and  since $\bfc_i=\bfa_\ell$, we obtain that
  \begin{equation*}
 \sum_{p=1}^k\sum_{q\in I_p'}\hdist(\bfc_p,\bfa_q)- \sum_{p=1}^k\sum_{q\in I_p''}\hdist(\bfc_p,\bfa_q)=\hdist(\bfa_\ell,\bfa_h)+\hdist(\bfc_j,\bfa_\ell)-\hdist(\bfc_j,\bfa_h)\geq 0 
 \end{equation*}  
 by the triangle inequality.
 This means that    
 \begin{equation}\label{eq:b-sat}
 \sum_{p=1}^k\sum_{q\in I_p''}\hdist(\bfc_p,\bfa_q)\leq \sum_{p=1}^k\sum_{q\in I_p'}\hdist(\bfc_p,\bfa_q)\leq \sum_{p=1}^k\sum_{q\in I_p}\hdist(\bfc_p,\bfa_q).
 \end{equation}
 Since $|I_i''|=|I_i'|$ for all $i\in\{1,\ldots,r\}$, $\mathcal{I}''$ satisfies condition (a) of the choice of $\mathcal{I}'$. Condition (b) is satisfied because of (\ref{eq:b-sat}). However, 
 $|I_i''\cap J|=|(I_1'\cap I)\cup\{\ell\}|=|I_i'\cap J|+1$. Because $\mathcal{I}''$ was obtained by the exchange $h$ and $\ell$ between $I_i'$ and $I_j'$,
 $I_p'\cap J_p\subseteq I_p''\cap J_p$ for $p\in\{1,\ldots,t\}$. We obtain 
 that $\sum_{p=1}^t|I_p'\cap J_p|<\sum_{p=1}^t|I_p''\cap J_p|$   
  contradicting (c). Therefore, either $J_p\subseteq I_p'$ or $I_p'\subset J_p$ for every $p\in\{1,\ldots,t\}$
 as it claimed.
  \end{proof}  

The following lemma is used to find medians if the sizes of clusters in a solution are sufficiently big.

\begin{lemma}\label{lem:gen-means}
Let $\mathcal{I}=\{I_1,\ldots,I_k\}$ be a $k$-clustering for a matrix $\bfA=(\bfa_1,\ldots,\bfa_n)$ of cost at most $B$ such that 
$s\leq |I_i|\leq s+\delta$ for all $i\in\{1,\ldots,k\}$, where $\delta$ is a nonnegative integer and an integer $s\geq 2B+1+(k-1)\delta$. Then for every initial clusters $J\subseteq \{1,\ldots,n\}$, the following is fulfilled for $\bfc=\bfa_j$ for $j\in J$:
\begin{itemize}
\item[(i)] if $|J|\mod s\geq B+1+(k-1)\delta$, then exactly $\Big\lceil\frac{|J|}{s}\Big\rceil$ clusters of $\mathcal{I}$ have optimal medians coinciding with $\bfc$ (the other medians are different),
\item[(ii)] if $|J|\mod s\leq B+(k-1)\delta$, then exactly $\Big\lfloor\frac{|J|}{s}\Big\rfloor$ clusters of $\mathcal{I}$ have optimal medians coinciding with $\bfc$.
\end{itemize}
\end{lemma}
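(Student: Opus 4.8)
Here is how I would attack Lemma~\ref{lem:gen-means}.

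\textbf{Setup and the main counting identity.} Since $|I_i|\ge s\ge 2B+1$ for every $i$, Observation~\ref{obs:big-clust} tells us each cluster $I_i$ has a \emph{unique} optimal median $\bfc_i$, so "the clusters whose median is $\bfc$" is well defined. Fix an initial cluster $J$ with common column $\bfc=\bfa_j$ ($j\in J$), let $P=\{i : \bfc_i=\bfc\}$ and let $t=|P|$ be the quantity we must pin down. The key idea is a charging of misplaced columns against the budget. For $i\in P$ and $j\in I_i$ with $j\notin J$ we have $\bfa_j\ne\bfc=\bfc_i$ (by maximality of $J$), so $\hdist(\bfc_i,\bfa_j)\ge 1$; for $i\notin P$ and $j\in I_i\cap J$ we have $\bfa_j=\bfc\ne\bfc_i$, so again $\hdist(\bfc_i,\bfa_j)\ge 1$. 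These two families of index pairs are disjoint, so, writing $\beta:=\sum_{i\in P}|I_i\setminus J|$ and $\gamma:=\sum_{i\notin P}|I_i\cap J|$, we obtain $\beta+\gamma\le \cost(\mathcal I)\le B$; in particular $0\le\beta\le B$ and $0\le\gamma\le B$. Counting the elements of $J$ cluster by cluster gives $|J|=\sum_{i\in P}|I_i\cap J|+\gamma=\sum_{i\in P}\bigl(|I_i|-|I_i\setminus J|\bigr)+\gamma$, hence
\[
\textstyle\sum_{i\in P}|I_i| \;=\; |J|+\beta-\gamma .
\]
Since each of the $t$ terms $|I_i|$, $i\in P$, lies in $[s,s+\delta]$, this yields the sandwich $ts \le |J|+\beta-\gamma \le t(s+\delta)$.

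\textbf{Pinning down $t$.} Write $|J|=sm+r$ with $m=\lfloor |J|/s\rfloor$ and $0\le r<s$; we may also assume $\lceil|J|/s\rceil\le k$ (resp.\ $\lfloor|J|/s\rfloor\le k$ in case~(ii)), as otherwise the claim would concern more than the $k$ available medians, and in the regime where the lemma is applied the choice of $s$ guarantees this. For the \emph{upper} bound use $\beta-\gamma\le\beta\le B$: $ts\le|J|+B=sm+r+B$. In case~(ii), $r+B\le 2B+(k-1)\delta\le s-1$ by the hypothesis $s\ge 2B+1+(k-1)\delta$, so $ts<s(m+1)$ and $t\le m$; in case~(i), $r+B\le (s-1)+B<2s$ (as $B\le s$), so $ts<s(m+2)$ and $t\le m+1$. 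For the \emph{lower} bound use $\beta-\gamma\ge-\gamma\ge-B$: $sm+r-B\le t(s+\delta)$. In case~(i): if $t\le m$ then $sm+r-B\le m(s+\delta)$ forces $r\le B+m\delta\le B+(k-1)\delta$ (using $m\le k-1$), contradicting $r\ge B+1+(k-1)\delta$; hence $t\ge m+1$, and with the upper bound $t=m+1=\lceil|J|/s\rceil$ (note $r\ge 1$). In case~(ii): if $t\le m-1$ then $sm+r-B\le(m-1)(s+\delta)$ forces $s\le (m-1)\delta-r+B\le(k-1)\delta+B<2B+1+(k-1)\delta\le s$, a contradiction; hence $t\ge m$, and with the upper bound $t=m=\lfloor|J|/s\rfloor$. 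Finally, by definition of $P$ every median with index outside $P$ differs from $\bfc$, which is the "the other medians are different" part of~(i).

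\textbf{Where the difficulty lies.} The substantive point is getting the charging tight: one must notice that the $J$-elements sitting in wrong-median clusters ($\gamma$ of them) and the non-$J$-elements sitting in $\bfc$-median clusters ($\beta$ of them) occupy \emph{disjoint} index positions, so they are charged against the \emph{same} budget $B$, giving $\beta+\gamma\le B$. The weaker bounds $\beta\le B$ and $\gamma\le B$ taken separately would already give the upper bound on $t$, but the lower-bound steps genuinely need the signed combination $\beta-\gamma\in[-B,B]$ together with the precise threshold $s\ge 2B+1+(k-1)\delta$ — this is exactly what makes the two case thresholds $B+(k-1)\delta$ and $B+1+(k-1)\delta$ complementary and the slack $t\delta$ from the size window absorbable. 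A secondary, purely bookkeeping, issue is the boundary case $\lceil|J|/s\rceil>k$, which must be excluded in the context of application so that the trivial bound $t\le k$ never conflicts with the asserted value of $t$.
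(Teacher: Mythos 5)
Your proof is correct and takes a genuinely different route from the paper's. The paper proves the lemma by induction on $\lfloor|J|/s\rfloor$: it uses Observation~\ref{obs:init-mean} to find one cluster whose optimal median is $\bfc$, invokes the exchange argument of Lemma~\ref{lem:exchange} to rearrange the clustering so that this cluster either contains $J$ or is contained in $J$, deletes its columns, and applies the inductive hypothesis to the resulting $(k-1)$-clustering (with a separate direct argument for the upper bound in case~(ii)). You instead argue globally: letting $P$ be the set of clusters whose (unique, by Observation~\ref{obs:big-clust}, since $|I_i|\ge s\ge 2B+1$) optimal median is $\bfc$, the disjoint charging of the $\beta$ non-$J$ columns inside $P$-clusters and the $\gamma$ columns of $J$ outside $P$-clusters gives $\beta+\gamma\le B$, and the identity $\sum_{i\in P}|I_i|=|J|+\beta-\gamma$ combined with $s\le|I_i|\le s+\delta$ pins down $|P|$ by pure arithmetic. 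This is shorter, dispenses with Lemma~\ref{lem:exchange} and Observation~\ref{obs:init-mean} entirely, and makes the role of the threshold $s\ge 2B+1+(k-1)\delta$ and of the two case thresholds completely transparent; the paper's inductive argument, by contrast, reuses machinery it has already set up for the kernelization section.

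Regarding the boundary assumption you flag ($\lceil|J|/s\rceil\le k$ in case~(i), $\lfloor|J|/s\rfloor\le k$ in case~(ii)): this is not a defect of your argument but a genuine gap in the statement itself, which can fail without it. For example, take eight identical columns, $k=2$, $B=0$, $\delta=1$, $s=3$, and the clustering into two clusters of size $4$: all hypotheses hold and $|J|\bmod s=2\ge B+1+(k-1)\delta$, yet only $2$ medians can equal $\bfc$ while the lemma asserts $\lceil 8/3\rceil=3$. The paper's own induction needs the same unstated assumption (its remark that $k\ge 2$ in the inductive step, and the use of clusters $I_1,\ldots,I_{p+1}$ in case~(ii), are exactly where it would break when $\delta\ge B+1$); the application in Lemma~\ref{lem:polynomialtimealgorithm} survives because for the value of $s$ equal to the minimum cluster size of an actual solution the bound does hold, and wrong counts for other trial values of $s$ only cause those values to be discarded. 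So your explicit flagging of this condition is, if anything, more careful than the paper.
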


\begin{proof}
We start with proving (i). Let  $|J|\mod s\geq B+1+(k-1)\delta$. We show that (i) holds for $J$ by induction on $p=\Big\lfloor\frac{|J|}{s}\Big\rfloor$.

The base case is $p=0$. Then $\ceil*{\frac{|J|}{s}}=1$.  As $|J| \mod s \geq B+1+(k-1)\delta$ and $\floor*{\frac{|J|}{s}}=0$, $B+1 \leq |J| \leq s$.
By Observation~\ref{obs:init-mean}, there is a cluster in $\mathcal{I}$ whose optimal median is $\bfc$. Thus, at least one optimal median coincides with $\bfc$.
Without loss of generality, we assume that $\bfc$ is the median of $I_1$.
We now show that $\bfc_i\neq\bfc$ for $i\in\{2,\ldots,k\}$.
Assume  to the contrary that there exists $h\in\{2,\ldots,k\}$ such that $\bfc_h=\bfc$.
By Lemma~\ref{lem:exchange}, there is a $k$-clustering $\mathcal{I}=\{I'_1,\ldots,I'_k\}$  for $\bfA$ such that $|I_i'|=|I_i'|$ for all $i\in\{1,\ldots,k\}$, $\sum_{i=1}^k\sum_{j\in I_i'}\hdist(\bfc_i,\bfa_j)\leq \sum_{i=1}^r\sum_{j\in I_i}\hdist(\bfc_i,\bfa_j)$, and $J\subseteq I_1'$. 
Then
\begin{equation*}
\sum_{i=1}^k\sum_{j\in I_i'}\hdist(\bfc_i,\bfa_j)\geq \sum_{j\in I_h'}\hdist(\bfc_h,\bfa_j)=\sum_{j\in I_h'}\hdist(\bfc,\bfa_j)\geq |I_h'|\geq s\geq B+1,
\end{equation*}
contradicting that $\cost(\mathcal{I})\leq B$. We conclude that exactly one median coincides with $\bfc$, that is, (i) holds for $p=0$.

Now let $p \geq 1$ and assume that the claim holds when $p$ is smaller. Note that $k\geq 2$ in this case.
We observe that, because $|J|\mod s\geq B+1+(k-1)\delta$,  $|J|\geq sp+B+1+(k-1)\delta$. 
 By  Observation~\ref{obs:init-mean}, there is a cluster in $\mathcal{I}$ whose optimal median is $\bfc$. 
Without loss of generality, we assume that $\bfc$ is the median of $I_1$.
Then 
 by Lemma~\ref{lem:exchange}, there is  a $k$-clustering $\{I_1',\ldots,I_k'\}$ for $\bfA$ such that  
 $|I_i'|=|I_i|$ for all $i\in\{1,\ldots,k\}$,  $\sum_{i=1}^k\sum_{j\in I_i'}\hdist(\bfc_i,\bfa_j)\leq \sum_{i=1}^k\sum_{j\in I_i}\hdist(\bfc_i,\bfa_j)$, and 
$I_1'\subset J$. 
Consider $\bfA'=\bfA[\{1,\ldots,m\},\{1,\ldots,n\}\setminus I_1]$, that is, $\bfA'$ is obtained from $\bfA$ by the deletion of the columns with their indices in $I_1$.
Notice that $\mathcal{I}'=\{I_2',\ldots,I_k'\}$ is an $(k-1)$-clustering for $\bfA'$ of cost at most $B$. Moreover, because $|I_i'|=|I_i|\geq s\geq 2B+1$, $\bfc_2,\ldots,\bfc_k$ are unique optimal medians for $I_2',\ldots,I_k'$, respectively, by Observation~\ref{obs:big-clust}. 
Let $J'=J\setminus I_1'$. 
Since $|I_1'|\leq s+\delta$, 
\begin{equation*}
|J'|=|J|-|I'_1| \geq sp+B+1+(k-1)\delta-s-\delta=s(p-1)+B+1+(k-2)\delta\geq B+1+(k-2)\delta.
\end{equation*}
By our inductive hypothesis, exactly $\ceil*{{\frac{|J'|}{s}}}$ clusters of $\mathcal{I}'$ have optimal medians coinciding with $\bfc$. 
As $|I_1|\geq s$, $\floor*{{\frac{|J'|}{s}}}\leq p-1$. Because  $|J'|\geq s(p-1)+B+1+(k-2)\delta$, $\floor*{{\frac{|J'|}{s}}}\geq p-1$. Hence, $\floor*{{\frac{|J'|}{s}}}=p-1$ and $\ceil*{{\frac{|J'|}{s}}}=p$.
Since $\bfc_2,\ldots,\bfc_k$ are optimal medians, exactly $p$  of them are equal to $\bfc$. 
Together with the median $\bfc_1=\bfc$, exactly $p+1$ medians in $\{\bfc_1,\ldots,\bfc_k\}$ are equal to $\bfc$. 
Then  exactly $\Big\lceil\frac{|J|}{s}\Big\rceil=p+1$ clusters of $\mathcal{I}$ have optimal medians coinciding with $\bfc$.
This completes the proof of (i). 

\medskip
To show (ii), 
we first claim that for every initial cluster $J$, there are at least $p=\floor*{\frac{|J|}{s}}$ clusters in $\mathcal{I}$, whose optimal medians are equal to $\bfc$, where $\bfc=\bfa_j$ for $j\in J$.  
The proof is by induction on $p$.

The claim is trivial if $p=0$. Let $p\geq 1$ and assume that the claim holds when $p$ is smaller. Since $p\geq 1$, 
$|J|\geq s\geq B+1$. By Observation~\ref{obs:init-mean}, there is a cluster in $\mathcal{I}$ whose optimal median is $\bfc$. 
Without loss of generality, we assume that $\bfc$ is the median of $I_1$.
Then by Lemma~\ref{lem:exchange}, there is  an $r$-clustering $\{I_1',\ldots,I_k'\}$ for $\bfA$ such that  
 $|I_i'|=|I_i|$ for all $i\in\{1,\ldots,k\}$,  $\sum_{i=1}^k\sum_{j\in I_i'}\hdist(\bfc_i,\bfa_j)\leq \sum_{i=1}^k\sum_{j\in I_i}\hdist(\bfc_i,\bfa_j)$, and 
either $J\subseteq I_1'$ or  $I_1'\subset J$. 

Suppose that $J \subseteq I_1'$. Then $|J|\leq |I_i'|\leq s+\delta<2s$. This means that $p=1$ and our claim holds, as $\bfc=\bfc_1$.

Assume from now that this is not the case, that is, $I_i'\subset J$. Then we argue similarly to the proof of (i).
Consider $\bfA'=\bfA[\{1,\ldots,m\},\{1,\ldots,n\}\setminus I_1]$, that is, $\bfA'$ is obtained from $\bfA$ by the deletion of the columns with their indices in $I_1$.
Notice that $\mathcal{I}'=\{I_2',\ldots,I_k'\}$ is an $(r-1)$-clustering for $\bfA'$ of cost at most $B$. Moreover, because $|I_i'|=|I_i|\geq s\geq 2B+1$, $\bfc_2,\ldots,\bfc_k$ are unique optimal medians for $I_2',\ldots,I_k'$, respectively, by Observation~\ref{obs:big-clust}. 
Let $J'=J\setminus I_1'$. 

If $\floor*{{\frac{|J'|}{s}}}\geq p-1$, then by the inductive assumption, there are at least $p-1$ clusters in $\mathcal{I}'$, whose optimal medians  coincide with $\bfc$. Thus, at least $p-1$ medians from $\{\bfc_2,\ldots,\bfc_k\}$ are equal to $\bfc$. Taking into account $\bfc_1=\bfc$, we have that at least $p$ medians from $\{\bfc_1,\ldots,\bfc_k\}$ are equal to $\bfc$, as required.

Let $\floor*{{\frac{|J'|}{s}}}\leq p-2$. Note that $p\geq 2$ in this case. Since $|I_1'|\leq s+\delta$ and $|J|\geq ps$, we obtain 
that $|J'|=|J|-|I_1'|\geq (p-2)s+(s-\delta)$. Thus, $\floor*{{\frac{|J'|}{s}}}=p-2$ and
\begin{equation*}
|J'|\mod s\geq s-\delta\geq 2B+1+(k-1)\delta\geq B+1+(k-2)\delta.
\end{equation*} 
By the already proven (i), we have that there are at least $\ceil*{{\frac{|J'|}{s}}}=p-1$ clusters in $\mathcal{I}'$, whose optimal medians  coincide with $\bfc$. 
Since $\bfc_1=\bfc$, we again obtain that at least $p$ medians from $\{\bfc_1,\ldots,\bfc_k\}$ are equal to $\bfc$. This concludes the proof of our auxiliary claim.

To finish the proof of (ii), assume that  $|J|\mod s\leq B+(k-1)\delta$. We already have that at least $p=\Big\lfloor\frac{|J|}{s}\Big\rfloor$ clusters of $\mathcal{I}$ have optimal medians coinciding with $\bfc$. It remains to show that there are at most $p$ such clusters. 
Assume to the contrary that  at least $p+1$ medians are equal to $\bfs$ and assume without loss of generality that $\bfc=\bfc_1=\ldots=\bfc_{p+1}$. 
Then
\begin{align*}
\sum_{i=1}^{k}\sum_{j \in I_i}\hdist(\bfc_i,\bfa_j) \geq&\sum_{i=1}^{p+1}\sum_{j \in I_i}\hdist(\bfc_i,\bfa_jj)=\sum_{j \in I_1\cup\ldots \cup I_{p+1}}\hdist(\bfc,\bfa_j)\\\geq&\sum_{j \in (I_1\cup\ldots \cup I_{p+1})\setminus J}\hdist(\bfc,\bfa_j)\geq |(I_1\cup\ldots\cup I_{p-1})\setminus I|.
\end{align*}
We know that  $ |I_i| \geq s$ for $i \in \{1,\ldots,k\}$. Then $|I_1\cup\ldots\cup I_{p+1}| \geq s(p+1)$. Since  $|J| \mod s \leq B+(k-1)\delta$, $|J|\leqslant ps+ B+(k-1)\delta$. 
This implies $|(I_1\cup\ldots\cup I_{p+1}) \setminus J|\geq s-B-(k-1)\delta  \geqslant B+1$. Hence $\sum_{j \in (I_1\cup\ldots \cup I_{P+1}) \setminus J}\hdist(\bfc,\bfa_j) \geq B+1>B$ contradicting that $\cost(\mathcal{I})\leq B$. This proves that exactly $\Big\lfloor\frac{|J|}{s}\Big\rfloor$ clusters of $\mathcal{I}$ have optimal medians coinciding with $\bfc$.
\end{proof}

Lemma~\ref{lem:gen-means} allows us to compute optimal medians and solve \probBClust if the average size of clusters is sufficiently big.

\begin{lemma}\label{lem:polynomialtimealgorithm}
\probBClust can be solved in polynomial time for instances $(\bfA,\Sigma,k,B,\delta)$ with $\frac{n}{k}\geq 2B+1+\delta k$. 
\end{lemma}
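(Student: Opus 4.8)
The plan is to reduce the problem to a bounded number of applications of \probPM (Lemma~\ref{lem:means-clusters}) by observing that, under the assumption $\frac{n}{k}\geq 2B+1+\delta k$, the structure of optimal medians is essentially forced. First I would argue that in any \yesinstance there is a solution $\{I_1,\ldots,I_k\}$ whose cluster sizes all lie in an interval $[s,s+\delta]$ for some integer $s$: indeed, if the sizes differ by at most $\delta$ then the smallest size $s$ satisfies $s\geq\frac{n}{k}-\delta\geq 2B+1+\delta k-\delta=2B+1+(k-1)\delta$, so the hypotheses of Lemma~\ref{lem:gen-means} are met. Here one must be a little careful: \probBClust only guarantees $||I_i|-|I_j||\le\delta$, which does give sizes in $[s,s+\delta]$ with $s=\min_i|I_i|$, and $s\geq \lceil n/k\rceil-\delta$. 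So for each candidate value of $s$ in the (polynomially bounded, in fact $O(\delta)$-sized) range $\big[\lceil n/k\rceil-\delta,\ \lfloor n/k\rfloor\big]$, I would try to decide whether there is a solution with all sizes in $[s,s+\delta]$.

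For a fixed such $s$, the key point is that Lemma~\ref{lem:gen-means} pins down the \emph{multiset of optimal medians} completely. For every initial cluster $J$ with associated column $\bfc$, the number of clusters whose optimal median equals $\bfc$ is exactly $\lceil |J|/s\rceil$ if $|J|\bmod s\geq B+1+(k-1)\delta$ and exactly $\lfloor |J|/s\rfloor$ if $|J|\bmod s\leq B+(k-1)\delta$ (these two cases are exhaustive since $s\geq 2B+1+(k-1)\delta$). Since the cost is at most $B$, any median that is \emph{not} equal to some column lies in a cluster of size $\le B$ — impossible here as all clusters have size $\ge s\ge B+1$ — so in fact \emph{every} optimal median coincides with a column of $\bfA$, and hence with the column of some initial cluster. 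Summing the above counts over the (at most $n$) distinct columns, I would check that the total equals $k$; if not, there is no solution with this $s$. Otherwise the multiset $\{\bfc_1,\ldots,\bfc_k\}$ of medians is uniquely determined.

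Given this fixed family of medians, I would invoke Lemma~\ref{lem:means-clusters} — or rather a mild variant of it: that lemma decides in polynomial time whether there is a \probCClust solution with a prescribed family of medians and sizes in $[p,q]$, via reduction to \probPM on a bipartite graph. Taking $p=s$ and $q=s+\delta$, this tells us exactly whether a \probBClust solution with all sizes in $[s,s+\delta]$ and this median family exists. Running this over all $O(\delta)$ admissible values of $s$ (and returning \yes\ iff some value succeeds) gives a polynomial-time algorithm. One small subtlety: Lemma~\ref{lem:means-clusters} as stated allows an arbitrary prescribed median family, but here the median family we want is forced, so it suffices to run it once per $s$; there is no branching over median choices.

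\textbf{Main obstacle.} The only genuinely delicate point is the bookkeeping that the forced counts $\lceil|J|/s\rceil$ resp.\ $\lfloor|J|/s\rfloor$ are consistent — i.e., that Lemma~\ref{lem:gen-means} really determines the median multiset exactly and not just up to the $\pm 1$ ambiguity, and that identical columns across different initial clusters cannot occur (they can't, by maximality of initial clusters, so distinct initial clusters give distinct forced columns and the counts simply add). After that, everything reduces to polynomially many calls to the bipartite perfect-matching routine, and the running time bound is immediate.
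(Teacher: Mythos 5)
Your proposal is correct and follows essentially the same route as the paper's proof: iterate over the $O(\delta)$ candidate values of the minimum cluster size $s$, use Lemma~\ref{lem:gen-means} to pin down the exact multiset of medians (checking that the counts sum to $k$), and then decide feasibility with these prescribed medians and size bounds $p=s$, $q=s+\delta$ via the matching-based Lemma~\ref{lem:means-clusters}. The paper phrases the last step as a combination of Observation~\ref{obs:reduction} with Lemma~\ref{lem:means-clusters}, which is exactly your ``mild variant''.
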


\begin{proof}
Let $(\bfA,\Sigma,k,B,\delta)$ be an instance of \probBClust with $\frac{n}{k}\geq 2B+1+\delta k$. Clearly, we can assume that $\delta\leq n-1$.
If $(\bfA,\Sigma,k,B,\delta)$ is a yes-instance, then there is an integer $s$ such that $\frac{n}{k}-\delta \leq s \leq \frac{n}{k}$ and 
$s\leqslant |I_i| \leqslant s+\delta$ for a solution  $\{I_1,\ldots,I_k\}$ to the instance. 

Then we consider all integers $s$ such that $\frac{n}{k}-\delta \leq s \leq \frac{n}{k}$. For each value of $s$, we check whether there is 
a solution $\{I_1,\ldots,I_k\}$ for the considered instance
with $s\leq |I_i| \leq s+\delta$, for all $i \in \{1,\ldots,k\}$.  If yes, we  return the yes-answer, otherwise, if we fail to find a solution for every $s$, then the algorithm returns the no-answer. 

Let $s$ be fixed. 
For each initial cluster $J$, we compute $\floor*{\frac{|J|}{s}}$ and $|J|\mod s$. Using these two values, we find the medians coinciding with $\bfc$ such that $\bfc=\bfa_j$ for $j\in J$ using Lemma~\ref{lem:gen-means}.  Denote by $\mathcal{C}$ the obtained collection of medians. If $|\mathcal{C}|\neq k$, then we discard the current choice of $s$. Otherwise, $\mathcal{C}$ contains exactly $k$ potential medians and we combine Observation~'\ref{obs:reduction} and Lemma~\ref{lem:means-clusters} to decide whether 
$(\bfA,\Sigma,k,B,\delta)$ admits a solution with these medians.

Since we consider at most $\delta+1\leq n$ values of $s$ and the algorithm from Lemma~\ref{lem:means-clusters} is polynomial, the total running time of our algorithm is polynomial. 
\end{proof}

In~\cite{FominGP20}, Fomin et al. proved that  \probClust admits a polynomial kernel when  parameterized by $B$ and $k$ for the binary matrices.
As one of the steps of their kernelization algorithm (see~Theorem~2 of~\cite{FominGP20}), they show that the number of rows in the output matrix can be reduced to $\Oh(B(B+r))$. Formally, the proof is done for the binary case, that is, for $\Sigma=\{0,1\}$, but it does not depend on $\Sigma$. 
We state this result in the following lemma.

\begin{lemma}[\cite{FominGP20}]\label{lem:fedoretallresult}
There is a polynomial algorithm that, given an instance $(\bfA,\Sigma,k,B)$ of \probClust with $m \times n$ matrix $\bfA$, produces an equivalent instance $(\bfA',\Sigma,k,B)$ with $m' \times n$ matrix $\bfA'$ such that the following holds:
\begin{itemize}
\item $m'=\Oh(B(B+k))$.
\item $\{I_1,\ldots,I_k\}$ is a solution for $(\bfA,\Sigma,k,B)$ if and only if it is also a solution for  $(\bfA',\Sigma,k,B)$.
\end{itemize} 
\end{lemma}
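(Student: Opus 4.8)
Since the statement is quoted verbatim from \cite{FominGP20}, the plan is to reconstruct the row-reduction step of their kernelization and check that nothing in it uses $|\Sigma|=2$. The starting point I would use is that the clustering cost is additive over the rows of $\bfA$: by the majority rule (Observation~\ref{obs:majority}) an optimal center is computed coordinate by coordinate, so for every $k$-clustering $\{I_1,\ldots,I_k\}$,
\[
\cost(I_1,\ldots,I_k)=\sum_{r=1}^{m}\sum_{i=1}^{k}\Bigl(|I_i|-\max_{\sigma\in\Sigma}|\{j\in I_i\mid a_{rj}=\sigma\}|\Bigr),
\]
and the inner double sum depends on row $r$ only through the partition $\pi_r$ of $\{1,\ldots,n\}$ into maximal groups of columns carrying the same symbol in row $r$. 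Hence a constant row contributes $0$ to every clustering, and two rows with the same $\pi_r$ contribute identically to every clustering; write $\cost_\pi(\mathcal I)$ for that common contribution. This reduces the problem to deciding, for each partition-``type'' $\pi$ that occurs among the rows, how many copies of it to keep.

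I would then apply two safe reduction rules. First, delete every constant row: this changes no clustering cost, so the solution set is trivially preserved. Second, cap the multiplicity of each non-constant type at $B+1$: if more than $B$ rows realize the same partition $\pi$, keep exactly $B+1$ of them. This is safe because for any clustering $\mathcal I$ the contribution $\cost_\pi(\mathcal I)$ of one row of type $\pi$ is either $0$ — in which case deleting copies of $\pi$ does not change $\cost(\mathcal I)$ — or at least $1$ — in which case $\cost(\mathcal I)>B$ both before the reduction (multiplicity $>B$) and after it (multiplicity $B+1$). So exactly the same clusterings have cost at most $B$, and after exhaustively applying both rules every type occurs at most $B+1$ times. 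Both rules are clearly polynomial.

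The remaining step, bounding the number of distinct types by $\Oh(B+k)$, is the one I expect to be the main obstacle. Here one uses the structure of an optimal solution: assuming a yes-instance, fix a solution $\{I_1^*,\ldots,I_k^*\}$ with centers $\bfc_1^*,\ldots,\bfc_k^*$ of cost at most $B$ and call a row an \emph{error row} if it carries a coordinate on which some column $\bfa_j$ with $j\in I_i^*$ disagrees with $\bfc_i^*$; there are at most $B$ such rows since the total cost is at most $B$. Every non-error row $r$ satisfies $a_{rj}=\bfc_i^*[r]$ for all $j\in I_i^*$, so its type $\pi_r$ is a coarsening of $\{I_1^*,\ldots,I_k^*\}$ and is in fact determined solely by the partition of the cluster index set $\{1,\ldots,k\}$ induced by equality of the corresponding center entries. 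This already bounds the number of non-error types by a function of $k$ alone, but pushing it down to $\Oh(B+k)$ types that genuinely need to be retained is the technical heart of \cite{FominGP20}: one argues that types sharing the same ``trace'' on the clusters are redundant and can be identified and re-encoded using only $\Oh(1)$ auxiliary rows per cluster (kept mandatory by giving them multiplicity $B+1$), while an instance that does not admit such a simplification can be replaced by a canonical no-instance — all in polynomial time and without changing which $k$-clusterings have cost at most $B$. Combining the $\Oh(B+k)$ retained types with the $\Oh(B)$ cap on multiplicities gives $m'=\Oh(B(B+k))$; since each step preserves the solution set and nothing invokes $|\Sigma|=2$, the lemma follows over an arbitrary finite alphabet.
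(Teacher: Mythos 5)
The statement you are proving is not proved in this paper at all: it is imported from \cite{FominGP20} (the row-reduction step inside their kernelization, Theorem~2), with only the remark that the argument written there for $\Sigma=\{0,1\}$ does not use binarity. So your reconstruction has to stand on its own, and its first half does: the row-additivity of the cost via the majority rule, the deletion of constant rows, and the cap of $B+1$ on the multiplicity of each row type are correct, polynomial, and preserve the set of $k$-clusterings of cost at most $B$ exactly.

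The gap is precisely the step you flag, and the sketch you give for it does not close it. First, the bound you would need is not what your analysis yields: relative to a fixed solution $\{I_1^*,\ldots,I_k^*\}$, the zero-contribution rows have types that are coarsenings of that clustering, and there can be up to roughly $\min\{|\Sigma|^k,\text{the $k$-th Bell number}\}$ distinct such types, each of small multiplicity, even in a yes-instance; so the number of ``light'' types is not $\Oh(B+k)$, and no rule that only deletes rows can in general reach $m'=\Oh(B(B+k))$ while preserving the solution set (deleting a single light row can drop the cost of some clustering from $B+1$ to $B$ and create a new solution). Hence some genuine re-encoding of rows is unavoidable, as you half-acknowledge, but your description of it is not an algorithm: it classifies rows as ``error'' or not relative to an optimal clustering that a polynomial-time kernelization does not know, it asserts without proof that the surviving types ``can be identified and re-encoded using only $\Oh(1)$ auxiliary rows per cluster'' while keeping the cost threshold equivalent for \emph{every} clustering (which is what the second bullet of the lemma demands), and the escape clause ``replace by a canonical no-instance'' needs both a polynomial-time checkable certificate of no-instance and an explicit $m'\times n$ matrix over $\Sigma$ with the same $k$ and $B$ and no solution, neither of which is supplied. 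So the proposal establishes only the easy preprocessing; the quantitative heart of the lemma, the $\Oh(B(B+k))$ row bound, remains unproved.
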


Now we are ready to show a polynomial kernel for \probBClust.

\begin{theorem}\label{thm:kernel}
\probBClust admits a kernel, where the output matrix has 
$\Oh(B(B+k))$ rows and  $O(k(B+\delta k))$ columns, and is a matrix over an alphabet of size at most $B+k$.
\end{theorem}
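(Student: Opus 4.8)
The plan is to design a kernelization algorithm for \probBClust in two stages: first reduce the number of rows, then reduce the number of columns, and finally observe that the alphabet can be trimmed. For the row reduction, I would simply invoke Lemma~\ref{lem:fedoretallresult} applied to the underlying \probClust instance $(\bfA,\Sigma,k,B)$ obtained by forgetting the balance parameter $\delta$. That lemma produces an equivalent instance with $m'=\Oh(B(B+k))$ rows and, crucially, preserves the solutions \emph{as index sets}: $\{I_1,\ldots,I_k\}$ is a solution for $(\bfA,\Sigma,k,B)$ if and only if it is a solution for $(\bfA',\Sigma,k,B)$. Since the balance constraint $\big||I_i|-|I_j|\big|\le\delta$ depends only on the cardinalities of the clusters, and these are not affected by deleting rows, the reduction is also valid for \probBClust: $\{I_1,\ldots,I_k\}$ is a solution for $(\bfA,\Sigma,k,B,\delta)$ iff it is a solution for $(\bfA',\Sigma,k,B,\delta)$. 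So after this step we may assume $m=\Oh(B(B+k))$.

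The main work is the column reduction. The idea is to bound the number of initial clusters and the sizes of the ``large'' ones. First, handle the easy case: by Lemma~\ref{lem:polynomialtimealgorithm}, if $\frac{n}{k}\ge 2B+1+\delta k$ then \probBClust is solvable in polynomial time, so we can decide the instance outright and return a trivial constant-size yes- or no-instance. Hence we may assume $n< k(2B+1+\delta k)=O(k(B+\delta k))$ already bounds the number of columns --- but we also want to bound the alphabet, and more importantly we need the number of \emph{distinct} columns (initial clusters) and their multiplicities to be controlled so the construction is self-contained. Let $\mathcal{J}=\{J_1,\ldots,J_s\}$ be the partition into initial clusters. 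We distinguish \emph{small} initial clusters with $|J_i|\le B+1$ and \emph{large} ones with $|J_i|\ge B+2$. For large initial clusters, Observation~\ref{obs:init-mean} forces one of the medians to equal the common column of $J_i$, so there are at most $k$ large initial clusters. For small ones, if there are more than $k+B$ of them, I would argue that one can always afford to leave the extra small initial clusters untouched (each forming its own cluster); more carefully, a solution touches at most $2B$ initial clusters with composite clusters by Observation~\ref{obs:composite}, and the remaining small initial clusters merely need to be distributed into the $\le k$ clusters respecting sizes --- this is where I would combine Observation~\ref{obs:blocks} with a counting/exchange argument to delete surplus identical small initial clusters (and correspondingly shrink $k$ if an entire initial cluster is removed as a block of its own clusters). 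This keeps the number of initial clusters at $O(B+k)$.

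Next, the large initial clusters may still be huge (up to $n$), so I would cap their sizes. By Lemma~\ref{lem:gen-means}-style reasoning (or a direct exchange argument via Lemma~\ref{lem:exchange}), for a large initial cluster $J$, in an optimal solution either $J$ is entirely inside one cluster $I_i$, or $I_i\subset J$ for some cluster; in any case the ``interaction'' of $J$ with the solution is determined by $|J|$ only through how many whole clusters it can absorb and the residue $|J|\bmod s$ for the relevant cluster size $s$. So I would replace $|J|$ by $\min\{|J|,\, k(2B+1+\delta k)+ (B+k)\}$ or a similar explicit $O(k(B+\delta k))$ bound, arguing equivalence by the exchange lemma (shrinking a large initial cluster does not destroy or create solutions, because any solution restricted to its elements can be made ``interval-like'' and the number of full clusters it can spawn is insensitive to the exact value above the threshold --- one has to be a little careful to preserve the divisibility behaviour, which I would do by keeping the residue of $|J|$ modulo every candidate cluster-size $s$ in the relevant range, but since that range has size $O(B+\delta k)$ this is affordable). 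After this, $n=O(k(B+\delta k))$ as claimed. Finally, the alphabet: in the reduced instance only the symbols actually appearing matter, and moreover each row can be recoded so that its distinct symbols are mapped to $\{0,1,\ldots,r-1\}$; since the number of distinct columns is $s=O(B+k)$ and each row takes at most $s$ distinct values, but we can do better --- the medians are at most $k$ columns of $\bfA$ plus at most $B$ modified vectors, so at most $B+k$ symbols per row are ever useful, and any symbol not among these in a given row can be identified with one fixed ``junk'' symbol; hence $|\Sigma|\le B+k$ after recoding.

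The main obstacle I anticipate is the column-reduction equivalence proof for large initial clusters: truncating $|J|$ while preserving yes/no status requires showing that the solution's behaviour on $J$ is governed only by residues modulo the $O(B+\delta k)$ relevant cluster sizes, and this needs a careful invocation of Lemma~\ref{lem:exchange} together with Lemma~\ref{lem:gen-means} to pin down exactly how many clusters have median equal to the column of $J$; getting the thresholds ($B+1$ vs.\ $B+(k-1)\delta$, etc.) to line up with the truncated sizes is the delicate bookkeeping. The small-cluster surplus removal and the alphabet trimming are routine by comparison.
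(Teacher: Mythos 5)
Your first two steps coincide with the paper's proof and already do essentially all of the work: after disposing of instances with $\frac{n}{k}\geq 2B+1+\delta k$ via Lemma~\ref{lem:polynomialtimealgorithm}, you have $n\leq 2Bk+\delta k^2=\Oh(k(B+\delta k))$, and Lemma~\ref{lem:fedoretallresult} (which preserves solutions as index sets, hence also the balance constraint) brings the number of rows to $\Oh(B(B+k))$. At that point no further column reduction is needed at all --- a kernel only has to have bounded total size, not a bounded number of initial clusters or controlled multiplicities. The additional machinery you introduce is therefore superfluous, and moreover it is not justified: you yourself flag the equivalence for truncating large initial clusters as an unresolved obstacle, and the step that deletes ``surplus'' small initial clusters while ``correspondingly shrinking $k$'' is not an equivalence-preserving operation for \probBClust, because the constraint $||I_i|-|I_j||\leq\delta$ couples the sizes of the clusters you would remove with the sizes of all remaining clusters; splitting off part of the instance in this way would need an argument that the removed clusters' sizes can always be made compatible with the rest, which you do not give.

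The one ingredient that genuinely remains after your first two steps is the alphabet bound, and there your argument has a real flaw. Identifying all symbols of a row other than the ``useful'' ones with a single junk symbol is not Hamming-distance preserving: two distinct columns may become identical in that row, the optimal cost can only decrease, and a no-instance can turn into a yes-instance, so this map is not a valid kernelization step. The missing (and simple) observation used in the paper is the following: if $\bfA$ has more than $B+k$ pairwise distinct columns, then for any choice of $k$ medians at least $B+1$ of these distinct column values are at Hamming distance at least $1$ from their median, so the cost exceeds $B$ and one may output a trivial no-instance; otherwise every row contains at most $B+k$ distinct symbols, and an injective per-row renaming onto $\{0,\ldots,B+k-1\}$ (equal symbols to equal, distinct to distinct) is an equivalence and gives $|\Sigma|\leq B+k$. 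With that observation in place of your column-truncation and junk-symbol steps, your outline becomes exactly the paper's proof.
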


\begin{proof}
Let $(\bfA,\Sigma,k,B,\delta)$ be an instance of \probBClust with $\bfA=(\bfa_1,\ldots,\bfa_n)$.

Suppose $\frac{n}{k}\geq 2B+1+\delta k$. Then, by Lemma~\ref{lem:polynomialtimealgorithm},  the problem can be solved in polynomial time. 
We do it and return a trivial yes or no-instance, respectively. 
For example, we can return either the matrix $(0,0)$ or $(0,1)$, respectively, and set $k=1$, $B=0$ and $\delta=0$.
Assume from now that  $\frac{n}{k}\leq 2B+\delta k$, that is, $n \leq 2Bk+\delta k^2$. 

If $\bfA$ has at least $B+k+1$ pairwise distinct columns, then for every $k$-clustering $\{I_1,\ldots,I_k\}$ and 
every $\bfc_1,\ldots,\bfc_k\in\Sigma^m$,
$\sum_{i=1}^k\sum_{j\in I_i}\hdist(\bfc_i,\bfa_j)\geq B+1$, because at least $B+1$ columns of $\bfA$ are distinct from each median.
Thus, $(\bfA,\Sigma,k,B,\delta)$ is a no-instance in this case, and we return a trivial no-instance of \probBClust.

Assume from now that the number of pairwise distinct columns is at most $B+k$. If $|\Sigma|>B+k$, 
then we can replace every symbol of $\Sigma$ by a symbol of 
$\Sigma'=\{0,\ldots,B+k-1\}$ maintaining the following property: for each row of $\bfA$, the same symbols of $\Sigma$ are replaced by the same symbols of $\Sigma'$. It is straightforward to verify that this replacement produces an equivalent instance, because we are using the Hamming distances. From now, we assume that $|\Sigma|\leq B+k$. 

Given $(\bfA,\Sigma,k,B,\delta)$, we consider the instance  $(\bfA,\Sigma,k,B)$ of \probClust. We use algorithm from Lemma~\ref{lem:fedoretallresult} and 
denote by $(\bfA',\Sigma,k,B)$ the output instance. Then we construct the instance $(\bfA',\Sigma,k,B,\delta)$ of \probBClust and output it. 
 
 We show that  $(\bfA,\Sigma,k,B,\delta)$ is a yes-instance of \probBClust if and only if  $(\bfA',\Sigma,k,B,\delta)$ is a yes-instance. 
 
For the forward direction, suppose $(\bfA,\Sigma,k,B,\delta)$ is a yes-instance of \probBClust. Let $\mathcal{I}=\{I_1,\ldots,I_k\}$ be a solution to the instance. Clearly, $\mathcal{I}$ is a solution for the instance  $(\bfA,\Sigma,k,B)$ of \probClust. By Lemma~\ref{lem:fedoretallresult}, $\mathcal{I}$ is a solution for $(\bfA',\Sigma,k,B)$. Then $\mathcal{I}$ is a solution for $(\bfA',\Sigma,k,B,\delta)$. For the opposite direction, the arguments are similar.
Let $\mathcal{I}=\{I_1,\ldots,I_k\}$  be a solution for  $(\bfA',\Sigma,k,B,\delta)$. Then this is a solution for the instance $(\bfA',\Sigma,k,B)$ of \probClust and, by Lemma~\ref{lem:fedoretallresult}, a solution for $(\bfA',\Sigma,k,B)$. Finally, $\mathcal{I}$ is a solution of  $(\bfA,\Sigma,k,B,\delta)$.  

Recall that $n=\Oh(k(B+\delta k))$ and note that $\bfA'$ has $\Oh(B(B+k))$  rows by Lemma~\ref{lem:fedoretallresult}. Since $|\Sigma|\leq B+k$, we conclude 
that the output matrix has 
$\Oh(B(B+r))$ rows and  $O(k(B+\delta k))$ columns, and is a matrix over an alphabet of size at most $B+k$.

It is easy to see that our kernelization algorithm is polynomial and this concludes the proof.
\end{proof}

Theorem~\ref{thm:kernel} leads to the question whether \probFBClust admits a polynomial kernel when parameterized by $k$ and $B$ with the assumption that $\alpha$ is a fixed constant. A more general question is whether there are polynomial kernel for \probCClust, \probBClust and \probFBClust parameterized by $k$ and $B$. Notice that \probClust has a polynomial kernel for this parmeterization~\cite[Theorem~2]{FominGP20}. Another direction of research is to investigate kernels of other types. 
Are there polynomial  \emph{Turing kernels} and do these problem admit polynomial \emph{lossy kernels}, that is, 
 approximative kernels? (We refer to the book~\cite{FominLSZ19} for the definition of the notions.)


\begin{thebibliography}{10}

\bibitem{Aggarwal}
{\sc G.~Aggarwal, R.~Panigrahy, T.~Feder, D.~Thomas, K.~Kenthapadi, S.~Khuller,
  and A.~Zhu}, {\em Achieving anonymity via clustering}, {{ACM} Trans.
  Algorithms}, 6 (2010).

\bibitem{AlonYZ95}
{\sc N.~Alon, R.~Yuster, and U.~Zwick}, {\em Color-coding}, J. {ACM}, 42
  (1995), pp.~844--856.

\bibitem{baeza1999modern}
{\sc R.~Baeza-Yates, B.~Ribeiro-Neto, et~al.}, {\em Modern information
  retrieval}, vol.~463, ACM press New York, 1999.

\bibitem{BanBBKLW19}
{\sc F.~Ban, V.~Bhattiprolu, K.~Bringmann, P.~Kolev, E.~Lee, and D.~P.
  Woodruff}, {\em A {PTAS} for {$\ell_p$}-low rank approximation}, in
  Proceedings of the Thirtieth Annual {ACM-SIAM} Symposium on Discrete
  Algorithms, {SODA} 2019, San Diego, California, USA, January 6-9, 2019,
  {SIAM}, 2019, pp.~747--766.

\bibitem{banerjee2008clustering}
{\sc A.~Banerjee and J.~Ghosh}, {\em Clustering with balancing constraints}, in
  Constrained clustering: advances in algorithms, theory, and applications, CRC
  Press, 2008, pp.~171--200.

\bibitem{ByrkaFRS15}
{\sc J.~Byrka, K.~Fleszar, B.~Rybicki, and J.~Spoerhase}, {\em Bi-factor
  approximation algorithms for hard capacitated \emph{k}-median problems}, in
  Proceedings of the Twenty-Sixth Annual {ACM-SIAM} Symposium on Discrete
  Algorithms, {SODA} 2015, San Diego, CA, USA, January 4-6, 2015, P.~Indyk,
  ed., {SIAM}, 2015, pp.~722--736.

\bibitem{ByrkaRU16}
{\sc J.~Byrka, B.~Rybicki, and S.~Uniyal}, {\em An approximation algorithm for
  uniform capacitated k-median problem with $1+\varepsilon$ capacity
  violation}, in Integer Programming and Combinatorial Optimization - 18th
  International Conference, {IPCO} 2016, Li{\`{e}}ge, Belgium, June 1-3, 2016,
  Proceedings, Q.~Louveaux and M.~Skutella, eds., vol.~9682 of Lecture Notes in
  Computer Science, Springer, 2016, pp.~262--274.

\bibitem{CharikarGTS02}
{\sc M.~Charikar, S.~Guha, {\'{E}}.~Tardos, and D.~B. Shmoys}, {\em A
  constant-factor approximation algorithm for the k-median problem}, J. Comput.
  Syst. Sci., 65 (2002), pp.~129--149.

\bibitem{chen2016matroid}
{\sc D.~Z. Chen, J.~Li, H.~Liang, and H.~Wang}, {\em Matroid and knapsack
  center problems}, Algorithmica, 75 (2016), pp.~27--52.

\bibitem{ChuzhoyR05}
{\sc J.~Chuzhoy and Y.~Rabani}, {\em Approximating k-median with non-uniform
  capacities}, in Proceedings of the Sixteenth Annual {ACM-SIAM} Symposium on
  Discrete Algorithms, {SODA} 2005, Vancouver, British Columbia, Canada,
  January 23-25, 2005, {SIAM}, 2005, pp.~952--958.

\bibitem{CilibrasiIK07}
{\sc R.~Cilibrasi, L.~van Iersel, S.~Kelk, and J.~Tromp}, {\em The complexity
  of the single individual {SNP} haplotyping problem}, Algorithmica, 49 (2007),
  pp.~13--36.

\bibitem{DBLP:conf/icalp/Cohen-AddadL19}
{\sc V.~Cohen{-}Addad and J.~Li}, {\em On the fixed-parameter tractability of
  capacitated clustering}, in 46th International Colloquium on Automata,
  Languages, and Programming, {ICALP} 2019, July 9-12, 2019, Patras, Greece,
  C.~Baier, I.~Chatzigiannakis, P.~Flocchini, and S.~Leonardi, eds., vol.~132
  of LIPIcs, Schloss Dagstuhl - Leibniz-Zentrum f{\"{u}}r Informatik, 2019,
  pp.~41:1--41:14.

\bibitem{CyganFKLMPPS15}
{\sc M.~Cygan, F.~V. Fomin, L.~Kowalik, D.~Lokshtanov, D.~Marx, M.~Pilipczuk,
  M.~Pilipczuk, and S.~Saurabh}, {\em Parameterized Algorithms}, Springer,
  2015.

\bibitem{DemirciL16}
{\sc H.~G. Demirci and S.~Li}, {\em Constant approximation for capacitated
  k-median with $(1+\varepsilon)$-capacity violation}, in 43rd International
  Colloquium on Automata, Languages, and Programming, {ICALP} 2016, July 11-15,
  2016, Rome, Italy, 2016, pp.~73:1--73:14.

\bibitem{DowneyF13}
{\sc R.~G. Downey and M.~R. Fellows}, {\em Fundamentals of Parameterized
  Complexity}, Texts in Computer Science, Springer, 2013.

\bibitem{Feige14b}
{\sc U.~Feige}, {\em {NP}-hardness of hypercube 2-segmentation}, CoRR,
  abs/1411.0821 (2014).

\bibitem{FominGLP020}
{\sc F.~V. Fomin, P.~A. Golovach, D.~Lokshtanov, F.~Panolan, and S.~Saurabh},
  {\em Approximation schemes for low-rank binary matrix approximation
  problems}, {ACM} Trans. Algorithms, 16 (2020), pp.~12:1--12:39.

\bibitem{FominGP20}
{\sc F.~V. Fomin, P.~A. Golovach, and F.~Panolan}, {\em Parameterized low-rank
  binary matrix approximation}, Data Min. Knowl. Discov., 34 (2020),
  pp.~478--532.

\bibitem{FominGS21}
{\sc F.~V. Fomin, P.~A. Golovach, and K.~Simonov}, {\em Parameterized
  \emph{k}-clustering: Tractability island}, J. Comput. Syst. Sci., 117 (2021),
  pp.~50--74.

\bibitem{FominLSZ19}
{\sc F.~V. Fomin, D.~Lokshtanov, S.~Saurabh, and M.~Zehavi}, {\em
  Kernelization}, Cambridge University Press, Cambridge, 2019.
\newblock Theory of parameterized preprocessing.

\bibitem{GareyJ79}
{\sc M.~R. Garey and D.~S. Johnson}, {\em Computers and Intractability, A Guide
  to the Theory of {NP}-Completeness}, W.H. Freeman and Company, New York,
  1979.

\bibitem{ghiasi2002optimal}
{\sc S.~Ghiasi, A.~Srivastava, X.~Yang, and M.~Sarrafzadeh}, {\em Optimal
  energy aware clustering in sensor networks}, Sensors, 2 (2002), pp.~258--269.

\bibitem{gupta2003load}
{\sc G.~Gupta and M.~Younis}, {\em Load-balanced clustering of wireless sensor
  networks}, in IEEE International Conference on Communications (ICC), vol.~3,
  IEEE, 2003, pp.~1848--1852.

\bibitem{KleinbergPR04}
{\sc J.~Kleinberg, C.~Papadimitriou, and P.~Raghavan}, {\em Segmentation
  problems}, J. ACM, 51 (2004), pp.~263--280.

\bibitem{Kuhn55}
{\sc H.~W. Kuhn}, {\em The {H}ungarian method for the assignment problem},
  Naval Res. Logist. Quart., 2 (1955), pp.~83--97.

\bibitem{Li17}
{\sc S.~Li}, {\em On uniform capacitated \emph{k}-median beyond the natural
  {LP} relaxation}, {ACM} Trans. Algorithms, 13 (2017), pp.~22:1--22:18.

\bibitem{LovaszP09}
{\sc L.~Lov\'{a}sz and M.~D. Plummer}, {\em Matching theory}, AMS Chelsea
  Publishing, Providence, RI, 2009.

\bibitem{lynch1999web}
{\sc P.~J. Lynch, S.~Horton, and S.~Horton}, {\em Web style guide: Basic design
  principles for creating web sites}, Universities Press, 1999.

\bibitem{MalinenF14}
{\sc M.~I. Malinen and P.~Fr{\"{a}}nti}, {\em Balanced k-means for clustering},
  in Structural, Syntactic, and Statistical Pattern Recognition - Joint {IAPR}
  International Workshop, {S+SSPR} 2014, Joensuu, Finland, August 20-22, 2014.
  Proceedings, vol.~8621 of Lecture Notes in Computer Science, Springer, 2014,
  pp.~32--41.

\bibitem{Marx08}
{\sc D.~Marx}, {\em Closest substring problems with small distances}, {SIAM} J.
  Comput., 38 (2008), pp.~1382--1410.

\bibitem{MiettinenMGDM08}
{\sc P.~Miettinen, T.~Mielik{\"{a}}inen, A.~Gionis, G.~Das, and H.~Mannila},
  {\em The discrete basis problem}, {IEEE} Trans. Knowl. Data Eng., 20 (2008),
  pp.~1348--1362.

\bibitem{NaorSS95}
{\sc M.~Naor, L.~J. Schulman, and A.~Srinivasan}, {\em Splitters and
  near-optimal derandomization}, in Proceedings of the 36th Annual Symposium on
  Foundations of Computer Science (FOCS), IEEE, 1995, pp.~182--191.

\bibitem{nielsen1992category}
{\sc A.~Nielsen}, {\em Category management: positioning your organization to
  win. chicago},  (1992).

\bibitem{OstrovskyR02}
{\sc R.~Ostrovsky and Y.~Rabani}, {\em Polynomial-time approximation schemes
  for geometric min-sum median clustering}, J. ACM, 49 (2002), pp.~139--156.

\bibitem{Otter48}
{\sc R.~Otter}, {\em The number of trees}, Ann. of Math. (2), 49 (1948),
  pp.~583--599.

\bibitem{rosner2018privacy}
{\sc C.~R{\"o}sner and M.~Schmidt}, {\em Privacy preserving clustering with
  constraints}, in 45th International Colloquium on Automata, Languages, and
  Programming (ICALP 2018), Schloss Dagstuhl-Leibniz-Zentrum fuer Informatik,
  2018.

\bibitem{Vallejo-HuangaM17}
{\sc D.~Vallejo{-}Huanga, P.~Morillo, and C.~Ferri}, {\em Semi-supervised
  clustering algorithms for grouping scientific articles}, in International
  Conference on Computational Science (ICCS), vol.~108 of Procedia Computer
  Science, Elsevier, 2017, pp.~325--334.

\bibitem{WrightROM86}
{\sc R.~A. Wright, L.~B. Richmond, A.~M. Odlyzko, and B.~D. McKay}, {\em
  Constant time generation of free trees}, {SIAM} J. Comput., 15 (1986),
  pp.~540--548.

\bibitem{DBLP:conf/icdm/YangP03}
{\sc Y.~Yang and B.~Padmanabhan}, {\em Segmenting customer transactions using a
  pattern-based clustering approach}, in Proceedings of the 3rd {IEEE}
  International Conference on Data Mining (ICDM), {IEEE} Computer Society,
  2003, pp.~411--418.

\end{thebibliography}
\end{document}